\newcommand{\kamb}[0]{$k$-ambiguous }
\newcommand{\support}[0] {
  \textup{support}}
\newcommand{\ignore}[1]{}
\newcommand{\C}{\mathcal{C}}
\newcommand{\N}{\mathbb{N}}
\newcommand{\Z}{\mathbb{Z}}
\newcommand{\set}[1]{\{#1\}}
\newcommand{\setof}[2]{\set{#1 \mid #2}}
\newcommand{\tupple}[1]{\langle #1 \rangle}
\newcommand{\reach}{\textsc{Reach}}
\newcommand{\Wlog}{W.l.o.g.\xspace}
\newcommand{\mywlog}{w.l.o.g.\xspace}
\newcommand{\dett}[0] {\textup{Det}\xspace}
\newcommand{\hist}[0] {\textup{Hist}\xspace}
\newcommand{\ndet}[0] {\textup{NonDet}}
\newcommand{\amb}[0] {\textup{BAmb}\xspace}
\newcommand{\xamb}[1] {\textup{#1-Amb}\xspace}
\newcommand{\xndet}[1]{\textup{#1-}\ndet}
\newcommand{\trans}[1]{\stackrel{#1}{\longrightarrow}}
\newcommand{\widebar}[1]{\overline{#1}}
\newcommand{\suff}{\textup{suff}}
\newcommand{\eff}{\textsc{eff}}
\newcommand{\Nplus}{\N_{\geq 0}}
\newcommand{\nctwo}{\textsc{NC$^2$}\xspace}
\newcommand{\ptime}{\textsc{PTime}\xspace}
\newcommand{\pspace}{\textsc{PSpace}\xspace}
\newcommand{\exptime}{\textsc{ExpTime}\xspace}
\newcommand{\expspace}{\textsc{ExpSpace}\xspace}
\newcommand{\ackermann}{\textsc{Ackermann}\xspace}
\newcommand{\myparagraph}[1]{\vskip 0.1 cm \noindent \textbf{#1.}}
\newcommand{\inc}{\textup{inc}}
\newcommand{\dec}{\textup{dec}}
\newcommand{\ztest}{\textup{ztest}}
\newcommand{\ski}{\textup{skip}}
\tikzstyle{accepting}=[path picture={%
\title{Languages of Boundedly-Ambiguous Vector Addition Systems with States} 
\author{Wojciech Czerwi\'nski}
{University of Warsaw}
{wczerwin@mimuw.edu.pl}
{https://orcid.org/0000-0002-6169-868X}
{}
\author{Łukasz Orlikowski}
{University of Warsaw}
{l.orlikowski@mimuw.edu.pl}
{https://orcid.org/0009-0001-4727-2068}
{}
\authorrunning{W. Czerwiński and Ł. Orlikowski} 
\keywords{vector addition systems, Petri nets, unambiguity, bounded-ambiguity, languages} 
\begin{document}

\maketitle

\begin{abstract}
The aim of this paper is to deliver broad understanding of a class of languages of boundedly-ambiguous VASSs, that is $k$-ambiguous VASSs for some natural $k$. These are languages of Vector Addition Systems with States with the acceptance condition defined by the set of accepting states such that each accepted word has at most $k$ accepting runs. We develop tools for proving that a given language is not accepted by any $k$-ambiguous VASS. Using them we show a few negative results: lack of some closure properties of languages of $k$-ambiguous VASSs and undecidability of the $k$-ambiguity problem, namely the question whether a given VASS language is a language of some $k$-ambiguous VASS. In fact we show an even more general undecidability result stating that for any class containing all regular languages and only $k$-ambiguous VASS languages for some $k \in \N$ it is undecidable whether a language of a given $1$-dimensional VASS belongs to this class. Finally, we show that the regularity problem is decidable for $k$-ambiguous VASSs.
\end{abstract}

\section{Introduction}
Determinism is a central notion in computer science.
Deterministic systems often allow for more efficient algorithms.
On the other hand, usually deterministic systems are less expressive, so in many cases for a non-deterministic system
there is no equivalent deterministic system and one cannot use more efficient techniques.
For those reasons, there is recently a lot of research devoted to various notions restricting nondeterminism
in a milder way than determinism. The hope is that systems having the considered properties are more expressive than
the deterministic ones, but still allow for robust algorithms design.
One prominent example of such a notion is unambiguity; a system is unambiguous if
for every word there is at most one accepting run over this word. In the last decade unambiguous systems
were intensely studied and for various classes of infinite-state systems the unambiguous case
turns out to be much more tractable~\cite{DBLP:conf/dcfs/Colcombet15,DBLP:conf/icalp/Raskin18,DBLP:conf/stacs/MottetQ19,DBLP:conf/concur/CzerwinskiFH20,DBLP:conf/icalp/GoosK022}.
Similar notions were also investigated recently, like $k$-ambiguity (each word is accepted by at most $k$ runs)
and history-determinism (a weakened version of determinism),
in both cases one can design more efficient algorithms in some cases~\cite{DBLP:conf/concur/CzerwinskiH22,history-deterministic}.

In this paper we focus on studying milder version of determinism for Vector Addition Systems with States (VASS).
VASSs and related Petri nets are popular and fundamental models of concurrency with many applications
both in theory and in practical modelling~\cite[Section 5]{DBLP:journals/siglog/Schmitz16}.
Languages of VASSs with restricted nondeterminism were already studied for several years, mostly with the acceptance condition
being the set of accepting states.
In~\cite{DBLP:conf/concur/CzerwinskiFH20} it was shown that the universality
problem for unambiguous VASSs is decidable in \expspace, in contrast to \ackermann-completeness of the problem
for VASSs without that restriction. In~\cite{DBLP:conf/concur/CzerwinskiH22,DBLP:journals/lmcs/CzerwinskiH25}
the language equivalence problem was considered for unambiguous VASS and more generally for $k$-ambiguous VASSs
and it was shown to be decidable and \ackermann-complete,
in contrast to undecidability in general~\cite{DBLP:journals/tcs/ArakiK76},
even in dimension one~\cite[Thm. 20]{DBLP:conf/lics/HofmanMT13}.
The choice of universality and equivalence problems is deliberate here. The complexity of the seemingly more natural
language emptiness problem (or equivalently of the reachability problem) does not depend on the unambiguity assumption.
Indeed, one can always transform a system to a deterministic one by assigning all the transitions unique labels
without affecting nonemptiness.
Therefore, in order to observe a difference in complexity or decidability after restricting nondeterminism,
one should consider problems, which do not simply ask about the existence of some object. Some natural problems concerning languages
of that type are the universality problem and the language equivalence problem and these two problems
were extensively studied for many models with the unambiguity assumption or some modification of it.
First of all, the universality and equivalence problems are solvable in \ptime~\cite{DBLP:journals/siamcomp/StearnsH85}
(and even in \nctwo~\cite{DBLP:journals/ipl/Tzeng96}) for unambiguous finite automata (UFA),
in contrast to \pspace-hardness of both problems for NFA.
The universality problem was also shown decidable for unambiguous register automata in a sequence
of papers~\cite{DBLP:conf/stacs/MottetQ19,DBLP:conf/stacs/BarloyC21,DBLP:conf/icalp/CzerwinskiMQ21}
with improving complexity,
which contrasts undecidability in the case without restricted nondeterminism~\cite[Thm 5.1]{DBLP:journals/tocl/NevenSV04}.
This line of research culminated in a very elegant contribution~\cite{DBLP:conf/lics/BojanczykKM21,DBLP:journals/theoretics/BojanczykFKM24}, which showed in particular that not only the universality problem, but also the equivalence
problem is solvable in \exptime for unambiguous register automata, and in \ptime in the case of fixed number of registers.
Another popular restriction of nondeterminism was also studied recently: history-determinism.
A system is history-deterministic if its nondeterminism
can be resolved on the fly, based on the history of a particular run. The equivalence and inclusion problems were shown
to be decidable for one-dimensional history-deterministic VASSs~\cite{DBLP:conf/fossacs/PrakashT23},
but undecidable for two-dimensional history-deterministic VASSs~\cite{history-deterministic}.

Despite a lot of research on algorithms for unambiguous and boundedly-ambiguous ($k$-ambiguous for some $k \in \N$) VASS
not much is known about the class of languages recognisable by such models.
In particular, to our best knowledge, till now it was even not known
whether there exist any VASS language (we consider acceptance by states), which is not a language of an unambiguous VASS.
The reason behind this lack of knowledge was absence of any technique, which can show that a given language of a VASS cannot be
recognised by an unambiguous VASS. The quest for such a technique is natural and the question deserves investigation.
Analogous problems were considered for other models of computation.
For finite automata the question trivialises, as deterministic automata recognise all the
regular languages. However, already for weighted automata over a field the problems are highly nontrivial and were recently
studied in-depth~\cite{DBLP:conf/lics/BellS23,DBLP:journals/corr/abs-2410-03444}, in particular it is decidable
whether a given weighted automaton is unambiguisable, so equivalent to some unambiguous weighted automaton~\cite{DBLP:conf/lics/BellS23}. The problem whether a given context-free language is unambiguisable
is known to be undecidable since the 60s~\cite{undecidable_cfl,DBLP:journals/mst/Greibach68}.
The aim of this paper is deepening the understanding of the class of languages recognisable by boundedly-ambiguous VASS
and its subclasses.

\myparagraph{Our contribution}
In the paper we deliver several results, which help understanding unambiguous, $k$-ambiguous and boundedly-ambiguous VASS languages. Our first main tool is Lemma~\ref{lem:first_tool}, which delivers the first example of a VASS language, which is known
to be not a $k$-ambiguous VASS language. The second main tool is Lemma~\ref{lem:image}, which formulates a condition,
which needs to be satisfied by all $k$-ambiguous VASS languages. Using these two lemmas we can rather straightforwardly
inspect closure properties of $k$-ambiguous VASS languages in Lemma~\ref{lem:closure_prop} and show several expressivity
results in Section~\ref{subsec:expressivity}.
Further building on Lemma~\ref{lem:first_tool} we obtain our main contribution.

\begin{theorem}\label{thm:main}
For any class $\C$ of languages containing all regular languages and contained in the class of all boundedly-ambiguous VASS languages
it is undecidable to check whether the language of a given $1$-VASS accepting by states belongs to $\C$.
\end{theorem}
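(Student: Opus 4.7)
The plan is a Rice-style reduction. Since $\C$ lies between the regular languages and the class of boundedly-ambiguous VASS languages, it suffices to build a computable map from instances of an undecidable problem to $1$-VASS whose language is regular (hence in $\C$) on yes-instances and contains an ``image copy'' of the bad language $L_0$ from Lemma~\ref{lem:first_tool} (hence not in $\C$) on no-instances. I would fix once and for all a $1$-VASS $V_0$ with $L(V_0)=L_0$ supplied by Lemma~\ref{lem:first_tool}, and exploit the fact that the empty language is already regular (so the ``yes''-side of the reduction has a canonical target).

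For the undecidable source I would use language equivalence for $1$-VASS, shown undecidable in~\cite[Thm.~20]{DBLP:conf/lics/HofmanMT13}. That proof associates to each instance of an underlying halting-style problem a pair $V_1',V_2'$ of $1$-VASS whose languages differ by a syntactically controlled set of witnesses exactly in the positive case. Building on that, I would produce a single $1$-VASS $V$ over an enlarged alphabet with a fresh marker $\$$, whose accepted words are of two kinds: (a) a fixed base regular pattern (always included, so that when nothing else happens $L(V)$ is regular), and (b) words of the form $w\cdot\$\cdot z$ where $w$ is a discrepancy witness for $L(V_1')$ versus $L(V_2')$ --- detectable by a $1$-VASS thanks to the special shape of the HofmanMT13 encoding --- and $z$ is any word accepted by a fresh copy of $V_0$ started after the $\$$ with zeroed counter. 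If no witness exists, $L(V)$ equals the base regular language and lies in $\C$; otherwise, for some witness $w_0$ one has $\{w_0\}\cdot\$\cdot L_0\subseteq L(V)$, and applying Lemma~\ref{lem:image} with the projection that discards everything up to and including $\$$ extracts $L_0$ as an image of $L(V)$, forcing $L(V)$ out of every boundedly-ambiguous VASS class by Lemma~\ref{lem:first_tool}. An algorithm deciding ``$L(V)\in\C$'' would thus decide the HofmanMT13 instance, contradicting undecidability.

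The main obstacle is implementing branch~(b) within a single counter: $1$-VASS are not closed under complement, so a generic ``$w\in L(V_1')\setminus L(V_2')$'' check is impossible. The approach relies on the fact that in the HofmanMT13 reduction the discrepancy between $V_1'$ and $V_2'$ is localised enough to be picked up by a finite-control check coupled with one counter; the $\$$-marker then zeroes the counter and cleanly decouples the $L_0$-suffix from the prefix machinery, which is precisely what Lemma~\ref{lem:image} needs in order to extract $L_0$ as an image. If directly inheriting the witness-detection from \cite{DBLP:conf/lics/HofmanMT13} proves cumbersome, a backup is to reprove a tailored variant of their undecidability in-house, arranging from the outset that the $1$-VASS $V$ itself (rather than a pair) produces discrepancies encoded in a form compatible with the $\$\cdot L_0$ gadget.
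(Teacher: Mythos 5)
Your high-level skeleton (yes-instances map to a regular language, no-instances to a language outside all boundedly-ambiguous VASS languages, and the sandwiched class $\C$ then separates the two) is exactly the paper's, but both concrete steps of your reduction have genuine gaps. First, the witness-detection step: you need a $1$-VASS accepting $w\,\$\,z$ precisely when $w \in L(V_1')\setminus L(V_2')$, i.e.\ you must certify \emph{non}-membership in a $1$-VASS language inside a nondeterministic $1$-VASS. Nothing in the statement of \cite[Thm.~20]{DBLP:conf/lics/HofmanMT13} gives a ``localised discrepancy'' that finite control plus one counter can verify, and your own text concedes the point by offering to ``reprove a tailored variant in-house'' --- but that fallback is the entire technical content of the theorem, so the proof is incomplete at its core. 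The paper sidesteps the complementation obstacle by choosing a source problem whose \emph{negative} certificates are existential: it reduces from $0$-finite reach for lossy counter machines and has the $1$-VASS accept all words that are \emph{not} reversed encodings of valid runs; a violation (nonexistent transition, counter mismatch between consecutive configurations, malformed block) can be guessed nondeterministically and checked locally, with lossiness turning each counter check into a coverability-style inequality that one counter handles.

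Second, and independently, your ``no-instance'' side does not establish non-membership in $\C$. The languages supplied by Lemma~\ref{lem:first_tool} are $k$-specific, and each such fixed language \emph{is} boundedly-ambiguous (e.g.\ $L_k$ of Lemma~\ref{lem:first_tool_simple} is $(k+1)$-ambiguous, cf.\ Section~\ref{subsec:expressivity}); since $\C$ may be the full class of boundedly-ambiguous VASS languages, appending one fixed such $L_0$ after a marker cannot force $L(V)$ out of $\C$ --- if the witness set $W$ is itself recognisable with bounded ambiguity, then the regular base together with $W\,\$\,L_0$ may well be boundedly-ambiguous, hence in $\C$, and your reduction is unsound. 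What is needed, and what the paper arranges, is that for \emph{every} $k$ the no-instance language, after intersection with a $k$-dependent regular language (legitimate by Lemma~\ref{lem:closure_prop}), has the forbidden shape of Lemma~\ref{lem:first_tool}; infiniteness of the reach set supplies the budget function $f$ with $\sup f = \omega$ uniformly in $k$. (A single fixed language outside all of $\amb$ does exist, namely that of Lemma~\ref{lem:n-u}, but you would still need to prove your union survives the regular base and the quotienting without collapsing into bounded ambiguity.) Finally, your appeal to Lemma~\ref{lem:image} ``with the projection that discards everything up to $\$$'' misreads that lemma: it asserts semilinearity of the block-length image of a $k$-ambiguous language with a fixed number of $b$'s, and provides no closure under projections or quotients, so it cannot ``extract $L_0$'' for you.
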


Consequences of Theorem~\ref{thm:main} are broad.
It reproves undecidability of regularity of $1$-dimensional VASSs
(in short $1$-VASSs)~\cite[Section 8]{DBLP:journals/lmcs/CzerwinskiL19} and
undecidability of determinisability of $1$-VASSs considered in~\cite{DBLP:conf/concur/AlmagorY22}.
It is important to emphasise that Theorem~\ref{thm:main} gives us extensive flexibility wrt. the undecidability results.
For example in~\cite{DBLP:conf/concur/AlmagorY22} it was shown that for a given $1$-VASS it is undecidable whether there is
an equivalent deterministic $1$-VASS. One can argue that possibly asking about equivalent deterministic VASS, without
a bounding dimension, is a more natural question, which deserves independent research. 
Theorem~\ref{thm:main} answers negatively all questions of that kind in one shot.
We formulate below its corollary to illustrate variety of consequences we obtain.
In particular we know now that for many classical restrictions of nondeterminism it is undecidable
whether for a given $1$-VASS there exists some equivalent one with nondeterminism restricted in that way.

\begin{corollary}\label{cor:main}
It is undecidable whether the language of a given $1$-VASS accepting by states is recognisable by some
\begin{itemize}
  \item unambiguous VASS,
  \item $k$-ambiguous VASS for given $k \in \N$,
  \item boundedly ambiguous VASS,
  \item deterministic VASS,
  \item $k$-ambiguous $1$-VASS for given $k \in \N$.
\end{itemize}
\end{corollary}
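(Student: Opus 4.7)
The plan is to derive the corollary as an immediate instantiation of Theorem~\ref{thm:main}: for each of the five classes $\C$ listed, I would verify the two hypotheses of the theorem, namely that $\C$ contains every regular language and that $\C$ is contained in the class of all boundedly-ambiguous VASS languages. Once both inclusions hold, Theorem~\ref{thm:main} directly yields undecidability of membership in $\C$ for languages of $1$-VASSs accepting by states.

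I would first dispatch the upper inclusions, which are essentially by definition. Boundedly-ambiguous VASS languages are, by definition, the union over $k \in \N$ of $k$-ambiguous VASS languages, so this class contains every $k$-ambiguous class for a fixed $k$, the unambiguous class (which coincides with $1$-ambiguous), and of course itself. Deterministic VASSs are unambiguous, so their languages sit inside the unambiguous class and hence inside the boundedly-ambiguous one. For $k$-ambiguous $1$-VASSs, the dimension restriction only shrinks the class of models, so these languages form a subclass of $k$-ambiguous VASS languages and are therefore boundedly-ambiguous.

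Next I would handle the lower inclusions, that is, regularity implies membership in each listed class. A DFA $\A$ for a regular language can be viewed as a deterministic $0$-dimensional VASS accepting by states, and with the same language. Since deterministic implies unambiguous, and unambiguous implies $k$-ambiguous for every $k \ge 1$, this single construction places every regular language inside the first four classes. For the $k$-ambiguous $1$-VASS clause one extra move is needed: promote $\A$ to a deterministic $1$-VASS by attaching a single counter whose update is $0$ on every transition; acceptance by states is unaffected, determinism is preserved, and the resulting $1$-VASS is therefore $k$-ambiguous for every $k \ge 1$.

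There is no real obstacle here beyond careful bookkeeping of the definitions; the whole argument is a routine verification that each of the five restrictions of nondeterminism fits strictly between the regular languages and the boundedly-ambiguous VASS languages, so that Theorem~\ref{thm:main} applies uniformly. The only mildly delicate point, which I would spell out explicitly, is the $1$-VASS case, where one must confirm that augmenting a DFA with a dummy counter does not create new runs and hence preserves both the language and the ambiguity bound.
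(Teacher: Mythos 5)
Your proposal is correct and matches the paper's (implicit) argument exactly: the paper derives Corollary~\ref{cor:main} as an immediate instantiation of Theorem~\ref{thm:main}, relying on precisely the sandwich inclusions you verify (regular $\subseteq$ deterministic $\subseteq$ unambiguous $\subseteq$ $k$-ambiguous $\subseteq$ boundedly-ambiguous, with the dummy-counter observation handling the $1$-VASS case). Your explicit bookkeeping, including the remark that regular languages are unambiguous VASS languages which the paper also uses after Lemma~\ref{lem:closure_prop}, is exactly the intended routine verification.
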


Our last main contribution is Theorem~\ref{thm:regularity}, which states that the regularity problem
is decidable for boundedly-ambiguous VASSs,
which contrasts undecidability without that assumption~\cite{DBLP:journals/tcs/ArakiK76,DBLP:conf/lics/HofmanMT13}.
One can see this result as an intuitive indication that boundedly-ambiguous VASSs are closer to deterministic VASSs
rather than to general nondeterministic VASSs.

\myparagraph{Organisation of the paper}
In Section~\ref{sec:prelim} we introduce preliminary notions and recall useful lemmas.
Section~\ref{sec:techniques} is devoted to showing the two main technical tools, namely Lemmas~\ref{lem:first_tool}~and~\ref{lem:image}.
In Section~\ref{sec:properties} we present closure properties and expressivity results for the classes of $k$-ambiguous VASSs.
Next, in Section~\ref{sec:undecidability} we show our main result, namely Theorem~\ref{thm:main}.
Theorem~\ref{thm:regularity} is proved in Section~\ref{sec:regularity}.
Finally, in Section~\ref{sec:future} we discuss interesting future research directions.

\section{Preliminaries}\label{sec:prelim}
\myparagraph{Basic notions}
For $a,b \in \N$ such that $a \leq b$ we write $[a, b]$ for the set of integers $\{a, a+1, \ldots, b\}$.
For $a \in \N$ we write $[a]$ for the set $[1, a]$.
For a vector $v \in \Z^d$ we write $v_i$ for the $i$-th entry of $v$.
For a vector $v \in \Z^d$ we write $\support(v)$ for the set $\{i \mid v_i > 0\}$.
For two vectors $u, v \in \N^d$ we write $u \leq v$ if for all $i \in [d]$ we have $u_i \leq v_i$.
We also extend this order to $(\N \cup \{\omega\})^d$ where $\omega$ is bigger than any natural number.
We write $\N_{\omega}$ for the set $\N \cup \{\omega\}$.
Whenever we speak about the norm of vector $v \in \Z^d$ we mean $||v||=\max_{1 \leq i \leq d}|v_i|$.
For a word $w$ we denote by $\#_a(w)$ the number of letters $a$ in the word $w$.

\myparagraph{Downward-closed sets}
A set $S \subseteq N^d$ is downward-closed if for each $u, v \in \N^d$
it holds that $u \in S$ and $v \leq u$ implies $v \in S$. For $u \in \N^d$ we write $u\downarrow$ for the set  $\setof{v}{v \leq u}$.
If a downward-closed set is of the form $u\downarrow$ we call it a
down-atom. Observe, that a one-dimensional set $S \subseteq \N$ is downward-closed if either $S = \N$
or $S = [0, n]$ for some $n \in \N$. Thus we have either $S = \omega \downarrow$ in the first case or $S = n \downarrow$ in the second case.
So equivalently a downward-closed set $D \subseteq \N^d$ is a down-atom if it 
is of a form $D = D_1 \times \ldots \times D_d$, where for all $i \in [d]$ we have that $D_i$
is a downward-closed one dimensional set. 
For simplicity we write $D = (n_1, n_2, \ldots , n_d)\downarrow$ if $D = n_1 \downarrow \times n_2 \downarrow \ldots n_d \downarrow$. Therefore each down-atom is of the form $u \downarrow$ where $u \in (\N \cup \omega)^d$.
The following proposition will be helpful in our considerations.
\begin{proposition}[Lemma 17 in \cite{CzerwinskiLMMKS18}, \cite{Dickson}]\label{prop:down_atom}
    Each downward-closed set in $\N^d$ is a finite union of down-atoms. 
\end{proposition}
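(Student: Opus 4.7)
The plan is to derive the statement directly from Dickson's Lemma by analysing the complement of $S$. Since $S$ is downward-closed, the complement $T := \N^d \setminus S$ is upward-closed, so by Dickson's Lemma its set of minimal elements $\{v_1, \dots, v_m\} \subseteq \N^d$ is finite, and
\[
T = \bigcup_{i=1}^{m} \{x \in \N^d : x \geq v_i\}.
\]
Negating gives the characterisation: $x \in S$ iff for every $i \in [m]$ there exists some coordinate $j \in [d]$ with $x_j < (v_i)_j$.

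To turn this $\forall\exists$ characterisation into a finite union of down-atoms, I would choose, for every function $f : [m] \to [d]$, which coordinate witnesses the strict inequality for each $i$; concretely, define
\[
D_f := \{x \in \N^d : x_{f(i)} < (v_i)_{f(i)} \text{ for all } i \in [m]\}.
\]
Each $D_f$ is a conjunction of finitely many strict upper bounds on individual coordinates, so it factorises as a product of one-dimensional downward-closed sets: for each coordinate $j$, the induced constraint on $x_j$ is either empty (if no $i$ satisfies $f(i) = j$, giving an unconstrained, i.e.\ $\omega$-bounded, coordinate) or a single strict upper bound by $\min\{(v_i)_j : f(i) = j\}$. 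Hence $D_f = u^f{\downarrow}$ for a suitable $u^f \in \N_\omega^d$, so $D_f$ is a down-atom.

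It remains to verify $S = \bigcup_{f : [m] \to [d]} D_f$. The inclusion $\supseteq$ is immediate from the defining inequalities. For $\subseteq$, given $x \in S$, pick for each $i \in [m]$ some $j_i \in [d]$ with $x_{j_i} < (v_i)_{j_i}$ and set $f(i) := j_i$; then $x \in D_f$. Since there are only $d^m$ such functions, the union is finite. The whole argument is short and essentially a packaging of Dickson's Lemma; the only subtlety — needing $\omega$-entries on coordinates that go unconstrained — is precisely what the definition of a down-atom is designed to accommodate, so there is no real obstacle beyond invoking Dickson's Lemma itself.
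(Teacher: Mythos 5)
Your argument is correct, but there is nothing in the paper to compare it against: the paper does not prove Proposition~\ref{prop:down_atom} at all, citing it instead as Lemma~17 of \cite{CzerwinskiLMMKS18} (going back to Dickson). So your proposal should be judged as a self-contained proof, and as such it works. The route — complement to an upward-closed set, extract the finite antichain $v_1,\dots,v_m$ of minimal elements via Dickson's Lemma (together with well-foundedness of $\leq$ on $\N^d$, which guarantees every element of $T$ dominates a minimal one), then distribute the resulting $\forall\exists$ condition over choice functions $f:[m]\to[d]$ — is the standard way this fact is established in the literature, and your factorisation of each $D_f$ into per-coordinate constraints, with $\omega$ on unconstrained coordinates, matches the paper's definition of a down-atom exactly. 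One small point deserves a sentence in a polished write-up: if $(v_i)_{f(i)}=0$ for some $i$, the constraint $x_{f(i)}<0$ is unsatisfiable, so $D_f=\emptyset$, and the empty set is \emph{not} of the form $u\downarrow$ (every down-atom contains the zero vector). This is harmless — empty $D_f$ are simply discarded from the union, and in the degenerate case $S=\emptyset$ (where $m=1$ and $v_1=\vec{0}$) one obtains the empty union, which is still a finite union of down-atoms — but as stated your claim that \emph{each} $D_f$ equals $u^f\downarrow$ is slightly too strong. With that caveat fixed, the proof is complete and finite, with at most $d^m$ atoms.
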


\myparagraph{Vector Addition Systems with States}
A $d$-dimensional Vector Addition System with States ($d$-VASS) is a nondeterministic finite automaton with $d$
non-negative integer counters. Transitions of the VASS manipulate these counters. Formally, we define VASS $V$ as  $V = (\Sigma, Q, \delta, I, F)$ where $\Sigma$ is a finite alphabet, $Q$ is a finite set of automaton states, $\delta \subseteq Q \times (\Sigma \cup \varepsilon) \times \Z^d \times Q$ is a transition relation, $I \subseteq Q \times \N^d$ is a finite set of initial configurations and $F \subseteq Q \times \N^d$ is a finite set of final configurations.
For a transition $t = (s, a, v, s') \in \delta$ we say that the transition is over $a$ or the transition reads letter $a$. We also write $\eff(t)$ for the effect of transition, which is $v$. We define the norm of transition $t$ as the norm of $v$.

A $d$-VASS can be seen as an infinite-state labelled transition system in which each configuration is a pair $(s, u) \in Q \times \N^d$. We denote such configuration as $s(u)$. We define the norm of the configuration as the norm of $u$. A transition $t = (s, a, v, s') \in \delta$ can be fired in a configuration $q(u)$ if and only if $q=s$ and $u' \geq 0$ where $u'=u + v$. After firing the transition configuration is changed to $s'(u')$. We also define run as a sequence of transitions, which can be fired one after another from some configuration. For a run $\rho=t_1t_2\ldots t_n$ we write $\eff(\rho)$ for $\Sigma_{i=1}^n \eff(t_i)$. If we want to say something only about $j$th (for $j \in [d]$) entry of the $\eff(\rho)$ we write $\eff_j(\rho)$. We also define $\support(\rho)$ as $\support(\eff(\rho))$. We say that a run $\rho$ is from configuration $q(u)$ to $p(u')$ if the sequence of transitions can be fired from $q(u)$ and the final configuration is $p(u')$. We say that a run is a loop if the state of its initial configuration is the same as the state of the final configuration. For two runs $\rho_1 = \alpha_1\ldots \alpha_n$ and $\rho_2 = \beta_1 \ldots \beta_k$ if the sequence $\alpha_1\ldots \alpha_n\beta_1 \ldots \beta_k$ is a run $\rho$ then we can write $\rho = \rho_1\rho_2$. We say, that a run $\rho=t_1\ldots t_n$ is over $w=\lambda_1\ldots \lambda_n \in (\Sigma \cup \{\varepsilon\})^*$ (or reads $w$) if and only if for each $i \in [n]$ transition $t_i$ is over $\lambda_i$. We denote by $w(\rho)$ the word read by $\rho$. We say that the length of a run $\rho$ is equal to $n$ if it consists of $n$ transitions. For the length of a run, we write $|\rho|$.

We say, that VASS is $\varepsilon$-free if there is no transition over $\varepsilon$. In this work, unless stated otherwise, we work with $\varepsilon$-free VASSs.
\myparagraph{Languages of VASSs}
A run of a VASS is accepting if it starts in an initial configuration $c_0 \in I$ and ends in an accepting configuration. 
A configuration is accepting if it covers some configuration from the set of final configurations $F$.
In other words configuration $q(v)$ is accepting if there exists a configuration $q(v') \in F$ such that $v \geq v'$.
For a VASS $V$ we define its language as the set of all words read by accepting runs and we denote it by $L(V)$.
Languages defined in this way are called coverability languages.
Sometimes we consider languages with other acceptance condition (reachability languages), in that case we indicate it clearly.

We sometimes consider reachability languages in which run ending in configuration $q(v)$ is accepting if and only if $q(v) \in F$.
Sometimes, we consider VASSs, where the set of accepting configurations is infinite, but has a specific form. For instance we consider downward-VASSs, where the set of accepting configurations is possibly infinite, but it is downward-closed. In this type of VASSs a run ending in configuration $q(v)$ if and only if $q(v) \in F$ where $F$ is a downward-closed set of accepting configurations. 
We say, that VASS $V$ is deterministic if it has only one initial configuration, it is $\varepsilon$-free and for each state $q$ and each letter $a \in \Sigma$ there is at most one transition over $a$ leaving the state $q$.
We say, that VASS $V$ is $k$-ambiguous if and only if for every $w \in L(V)$ we have at most $k$ accepting runs over $w$. We also say then, that $L(V)$ is a $k$-ambiguous language. 
For a $d$-VASS consider a function $r : (Q \times \N^d \times \delta)^* \times (Q \times \N^d) \times \Sigma \rightarrow \delta^*$ that, given a history of the run (configurations and taken transitions), current configuration $q(v)$ and a next letter $\lambda \in \Sigma$,
returns a sequence of possibly many transitions. One of these transitions is over $\lambda$, all the other are over $\varepsilon$,
and the sequence can be fired from the current configuration $q(v)$. 
Let us call $r$ a resolver. We say, that $d$-VASS $V$ is history-deterministic if and only if it has one initial configuration and there exists a resolver $r$ such that for each $w \in L(V)$ run $\rho$ over $w$ from the initial configuration given by the resolver is accepting.

We denote by $\dett$, $\hist$, $\xamb{k}$, $\amb$ and $\ndet$ the class of languages of respectively deterministic, history-deterministic, \kamb, all boundedly-ambiguous and fully nondeterministic VASSs languages.
We sometimes denote by \xamb{1} the class of unambiguous VASSs languages.
By $\xndet{d}$ we denote the class of languages of $d$-VASSs.

\myparagraph{Well-quasi order}
A quasi-order $\preceq$ defined on set $X$ is a relation satisfying:
\begin{itemize}
       \item For each $x \in X$ it holds $x \preceq x$ (reflexivity)
       \item For each $a,b,c \in X$ we have that $a \preceq b$ and $b \preceq c$ implies $a \preceq c$ (transitivity)
\end{itemize}
We say, that a quasi-order is a well quasi-order (WQO) if and only if there is no infinite antichain or infinite decreasing sequence. For simplicity, we say that pair $(X, \preceq)$ is a WQO if and only if $\preceq$ is a WQO defined on $X$. We say that $x_1, x_2 \in X$ are incomparable if and only if $x_1 \npreceq x_2$ and $x_2 \npreceq x_1$. Below we present a commonly known lemma, which presents useful equivalent conditions for the relation to be a WQO.
For the proof see for instance \cite{schmitz2012algorithmic}.
\begin{lemma}\label{lem:equiv_cond_wqo}
   For each $(X, \preceq)$ such that $\preceq$ is a quasi-order defined on $X$ the following conditions are equivalent:
   \begin{enumerate}
       \item $(X, \preceq)$ is a WQO.\label{cond:wqo}
       \item Every infinite sequence $x_1, x_2, x_3, \ldots$ such that $x_i \in X$ contains infinite non-decreasing subsequence $x_{n_0} \preceq x_{n_1} \preceq x_{n_2}\preceq \ldots$ (with $n_0 < n_1 < n_2 < \ldots$).\label{cond:infi}
       \item In every infinite sequence $x_1, x_2, x_3, \ldots$ such that $x_i \in X$ one can find $i < j$ such that $x_i \preceq x_j$.\label{cond:domi}
   \end{enumerate}
\end{lemma}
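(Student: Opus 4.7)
The plan is to prove the three implications $(2) \Rightarrow (3)$, $(3) \Rightarrow (1)$, and $(1) \Rightarrow (2)$, in order of increasing depth. The first is immediate: given the guaranteed non-decreasing subsequence $x_{n_0} \preceq x_{n_1} \preceq \ldots$, take $i = n_0$ and $j = n_1$.

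For $(3) \Rightarrow (1)$, I would argue by contraposition. If $(X, \preceq)$ is not a WQO then either there is an infinite antichain or an infinite strictly decreasing sequence (where ``strictly decreasing'' means $x_{i+1} \preceq x_i$ with $x_i \not\preceq x_{i+1}$, since $\preceq$ is a quasi-order). In the antichain case no pair is comparable, so no pair $i<j$ satisfies $x_i \preceq x_j$. In the strictly decreasing case, transitivity of $\preceq$ combined with $x_i \not\preceq x_{i+1}$ gives $x_i \not\preceq x_j$ for every $i < j$. Either way $(3)$ fails.

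The main direction is $(1) \Rightarrow (2)$, and the clean approach is the infinite Ramsey theorem. Given a sequence $x_1, x_2, \ldots$, I color each unordered pair $\{i,j\}$ with $i < j$ by one of three colors: red if $x_i \preceq x_j$, blue if $x_j \preceq x_i$ but $x_i \not\preceq x_j$, and green if $x_i$ and $x_j$ are incomparable. Infinite Ramsey furnishes an infinite set of indices $I$ such that all pairs from $I$ share a single color. The red case gives the non-decreasing subsequence demanded by $(2)$; the blue case produces an infinite strictly decreasing chain; the green case produces an infinite antichain. Both of the latter contradict assumption $(1)$, so the color must be red.

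There is no real obstacle: the result is classical and already cited from Schmitz's survey, so the work amounts only to spelling out these three short arguments. The only point to watch is the distinction between the blue and green colors, which matters precisely because $\preceq$ is a quasi-order rather than a partial order; merging them would blur the difference between a strictly decreasing sequence and an antichain, though both ultimately contradict $(1)$.
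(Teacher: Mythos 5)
Your proof is correct, and since the paper does not prove this lemma itself (it cites Schmitz's lecture notes instead), there is nothing internal to compare against; your argument — the two easy implications plus infinite Ramsey with the red/blue/green colouring for $(1) \Rightarrow (2)$ — is precisely the standard proof found in that reference. Your care in separating the blue (strictly decreasing) and green (antichain) colours, which a quasi-order genuinely requires, is handled correctly.
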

Using Lemma \ref{lem:equiv_cond_wqo} we prove, that one can define WQO on configurations of a VASS (extended with $\omega$-coordinates): 
\begin{lemma}\label{wqo}
   For every $d \in \N$ and finite set $Q$ we have, that $\preceq$ defined on $Q \times \N_{\omega}^d$ as for each $q_1, q_2 \in Q$ and $v, v' \in \N_{\omega}^d$ we have $(q_1, v) \preceq  (q_2, v') \iff v \leq v' \land q_1=q_2$ is a well-quasi-order .
\end{lemma}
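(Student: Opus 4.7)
The plan is to verify reflexivity and transitivity (both immediate from the componentwise definition of $\leq$ and equality on $Q$) and then prove the WQO property using condition~(\ref{cond:infi}) of Lemma~\ref{lem:equiv_cond_wqo}. That is, I will show that every infinite sequence $(q_n, v_n)_{n \in \N}$ of elements of $Q \times \N_\omega^d$ contains an infinite non-decreasing subsequence under $\preceq$. Since $\preceq$ forces the state to be equal, the whole task reduces to two independent pieces: an argument about $Q$ (handled by finiteness) and an argument about $\N_\omega^d$ (a variant of Dickson's lemma with a top element).

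First I would use the finiteness of $Q$: by the pigeonhole principle, some fixed $q \in Q$ occurs as $q_n$ for infinitely many $n$. Restricting to that infinite subsequence, all pairs now share the same state, so $(q, v_i) \preceq (q, v_j)$ holds precisely when $v_i \leq v_j$ in $\N_\omega^d$. Hence it suffices to show that $(\N_\omega^d, \leq)$ is a WQO.

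I would prove this by induction on $d$. For the base case $d = 1$, take any infinite sequence in $\N_\omega$. Either $\omega$ appears infinitely often, in which case the constant subsequence consisting of these $\omega$'s is trivially non-decreasing; or only finitely many entries equal $\omega$, and the remaining infinite subsequence lies in $\N$, where $\leq$ is a well-order and in particular a WQO, so by Lemma~\ref{lem:equiv_cond_wqo} we extract a non-decreasing subsequence. For the inductive step, given an infinite sequence in $\N_\omega^{d+1}$, project onto the last coordinate to obtain a sequence in $\N_\omega$; by the base case, extract an infinite subsequence on which the last coordinate is non-decreasing; by the inductive hypothesis applied to the first $d$ coordinates of that subsequence, extract a further infinite subsequence on which the first $d$ coordinates are also non-decreasing. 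The resulting sub-subsequence is non-decreasing in all $d+1$ coordinates, completing the induction.

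I do not expect a genuine obstacle here; the only mildly subtle point is that $\omega$-coordinates must be treated as a maximum element, but this is precisely what makes the base case work and propagates transparently through the induction. The argument is essentially the standard Dickson's-lemma proof combined with the finite-set pigeonhole, mediated by Lemma~\ref{lem:equiv_cond_wqo}.
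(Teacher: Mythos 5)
Your proof is correct and takes essentially the same route as the paper: the paper observes that $(Q,=)$ and $(\N_{\omega},\leq)$ are WQOs and applies Dickson's Lemma $d$ times, whereas you simply inline the standard proof of Dickson's Lemma --- pigeonhole on the finite set $Q$, then coordinatewise extraction of non-decreasing subsequences by induction on $d$, mediated by condition~(\ref{cond:infi}) of Lemma~\ref{lem:equiv_cond_wqo}. The underlying argument is identical (your base case handling of $\omega$ as a top element is exactly why $(\N_{\omega},\leq)$ is a WQO), so there is nothing to fix.
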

\begin{proof}[Proof of Lemma \ref{wqo}]
   From the definition, it is clear that $\preceq$ is transitive and reflexive. We prove, that it is a WQO. For this we need Dickson's Lemma \cite{schmitz2012algorithmic}.
   \begin{lemma}
       (Dickson’s Lemma). Let $(A, \preceq_1)$ and $(B, \preceq_2)$ be two WQOs. Then $(A \times B, \preceq)$ is a WQO where $\preceq$ is defined for each $a, a' \in A$ and $b, b' \in B$ as $(a, b) \preceq (a', b') \iff a \preceq_1 a' \land b \preceq_2 b'$.
   \end{lemma}
   Let us fix $d \in \N$. Observe, that $(Q, =)$ and $(\N_{\omega}, \leq)$ are WQOs. Hence, by applying Dickson's Lemma $d$ times we get, that $(Q \times \N_{\omega}^d, \preceq)$ where for each $q_1, q_2 \in Q$ and $v, v' \in \N_{\omega}^d$ we have $(q_1, v) \preceq  (q_2, v') \iff \forall_{i \in [d]} v_i \leq v_i' \land q_1=q_2$ is a WQO.
\end{proof}
\section{Tools for separating \amb and \ndet}\label{sec:techniques}
In this section we develop two techniques for showing that a language is not recognised by a $k$-ambiguous VASS: Lemma \ref{lem:first_tool} and Lemma \ref{lem:image}.

\subsection{Dominating block}
The first tool is based on the following observation about a language not recognised by a \kamb VASS.
\begin{lemma}\label{lem:first_tool_simple}
    For every $k \in \N_+$ the language 
    $$L_k = \{a^{n_1}ba^{n_2}ba^{n_3}b\ldots a^{n_{k+2}} \mid  \exists_{i \in [k+1]} \ n_i \geq n_{i+1} \}$$ 
    is not recognised by a \kamb VASS. 
\end{lemma}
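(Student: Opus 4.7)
My plan is to prove the lemma by contradiction. Suppose a \kamb VASS $V$ satisfies $L(V)=L_k$; I will exhibit a single word on which $V$ admits $k+1$ distinct accepting runs, contradicting \kamb{}ity.

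For each $j \in [k+1]$, I construct a witness word $w_j \in L_k$ in which the unique satisfied inequality is at position $j$: take block lengths with $M_1^j < \ldots < M_{j-1}^j < M_j^j$, $M_j^j \geq M_{j+1}^j$, and $M_{j+1}^j < \ldots < M_{k+2}^j$, all above a large threshold $N_0$ depending on $V$. Since $w_j \in L_k$, there exists an accepting run $\rho_j$ of $V$ on $w_j$.

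Next, I use a VASS pumping argument. By Lemma~\ref{wqo}, when the $i$-th $a$-block of $\rho_j$ is long enough, the well-quasi-order on $Q \times \N^d$ forces $\rho_j$ to revisit the same state with a dominating counter vector inside that block, yielding a loop $\gamma_i^j$ with non-negative effect reading some $a^{p_i^j}$. Iterating this loop extends $\rho_j$ to an accepting run in which block $i$ is lengthened by a multiple of $p_i^j$. Performing this in every block, I can extend each $\rho_j$ to an accepting run on a common word $w^* = a^{N_1^*}b\ldots a^{N_{k+2}^*}$. Matching the block lengths amounts to finding, for each $i$, a common value in the $k+1$ progressions $M_i^j + p_i^j \N$; since the pump periods $p_i^j$ are bounded in terms of $V$, this is achievable by choosing the $M_i^j$ to be congruent modulo the LCM $P$ of all possible pump periods and then picking $N_i^*$ of the form $\max_j M_i^j + rP$ for large enough $r$. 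This produces $k+1$ accepting runs $\rho_1^*, \ldots, \rho_{k+1}^*$ of $V$ on $w^*$.

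The delicate last step — and the main obstacle — is showing that these $k+1$ runs are pairwise distinct. The guiding intuition is that $\rho_j$ genuinely uses the inequality at position $j$: if one were to further pump only block $j+1$ of $\rho_j$ by more than the slack $M_j^j - M_{j+1}^j$, the resulting word would have strictly increasing block lengths and hence fall outside $L_k$, so that further pump cannot give an accepting run. This forces, for every $j$, a counter profile of $\rho_j$ at the boundary between blocks $j$ and $j+1$ that is specific to position $j$. Formalizing this without appealing to illegal down-pumping is where the care is needed; I expect to use only upward pumping and the hypothesis $L(V)=L_k$: if $\rho_j^* = \rho_{j'}^*$ for some $j \neq j'$, then the common run would by construction tolerate being pumped in block $j+1$ as in $\rho_{j'}$'s construction, yielding an accepting run on a word whose block lengths can then be further pumped in the non-witness positions to become strictly increasing, contradicting $L(V)=L_k$. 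This closes the contradiction and proves the lemma.
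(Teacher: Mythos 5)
Your overall strategy matches the paper's (the paper proves this lemma as the special case $L=\{\varepsilon\}$, $f(\varepsilon)=\omega$ of Lemma~\ref{lem:first_tool}, whose proof constructs $k+1$ witness words differing only in the position of one short block, pumps the runs to a common word, and exhibits $k+1$ distinct accepting runs). However, your proposal has a genuine gap at its central step. You assert that in \emph{every} $a$-block of $\rho_j$ the well-quasi-order yields a loop with non-negative effect on all counters, provided all blocks exceed a threshold $N_0$ depending only on $V$. This fails twice. First, the domination bound of Lemma~\ref{lem:loop} depends on the counter values at the start of the block, which grow with the length of the entire prefix, so the thresholds must cascade (each block much longer than everything before it); a single constant $N_0$ does not suffice. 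Second, and fatally, for the \emph{witness} block (block $j{+}1$ of $w_j$, whose length is at most $M_j^j$) no such cascading is possible, and indeed no loop non-negative on all counters \emph{can} exist there: if it did, pumping it to land strictly between $M_j^j$ and $M_{j+2}^j$ would produce an accepted word with all block lengths strictly increasing, outside $L_k$ --- exactly the argument the paper uses to prove Condition~\ref{cond:clm_decomp_fifth} of its decomposition claim. Since the common word's block at position $j{+}1$ must be at least as long as the other runs' long blocks there, run $\rho_j$ \emph{must} pump its short block via a loop that is strictly negative on some counter. Making this pumping valid is the heart of the paper's proof: Lemma~\ref{lem:decomp} shows the short-block loop is non-negative only on counters outside the supports of earlier loops, and validity is restored by pumping the earlier blocks' loops by carefully chosen larger amounts (the compensation coefficients $b_j^i$, with the common word's block lengths $n_j = 2m^{k+3-j}\Pi_{l=j}^{k+2}N_l!$ decreasing in $j$ precisely so that earlier blocks absorb the losses). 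Your congruence/LCM matching addresses only the word lengths, not this validity problem, so the construction of the runs $\rho_j^*$ on the common word does not go through as proposed.

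Your distinctness step is also left as an expectation rather than an argument, though the conditional idea you sketch (if $\rho_j^* = \rho_{j'}^*$, the merged run inherits pumpable loops at every position, enabling pumping out of $L_k$) is morally sound and could likely be completed --- note it requires observing that the coincident run carries non-negative loops from \emph{both} constructions, covering all positions. The paper instead proves distinctness directly and quantitatively: in $\rho_i'$ the strictly negative loop in the short block is iterated roughly $n_{i+1}/|\beta_{i+1}^i|$ times, driving the effect of that segment on some counter below $-mN_{i+1}!$, whereas in $\rho_j'$ ($j\neq i$) the corresponding segment's effect is at least $-mN_{i+1}!$, so the runs differ as transition sequences. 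In summary: your skeleton is the right one, but the two load-bearing components --- the support-restricted decomposition with compensation for the necessarily-negative loop in the short block, and a concrete distinctness argument --- are missing, and the first cannot be patched within your stated framework of non-negative pumping alone.
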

For showing various properties of boundedly-ambiguous VASS Lemma \ref{lem:first_tool_simple} is sufficient.
In one case, for proving Theorem \ref{thm:main}, we need a strengthening of Lemma \ref{lem:first_tool_simple}, which is presented in the following lemma:
\begin{lemma}\label{lem:first_tool}
    Let $\Sigma$ be an alphabet such that $b \notin \Sigma$ and let $L$ be a language over $\Sigma$. For each function $f: L \rightarrow \N_\omega$ such that $\sup{f} = \omega$ (recall that $\sup{f} = \sup{\{f(x) \mid x \in L\}}$)
    language $L_1 = \{a^{n_1}ba^{n_2}ba^{n_3}b\ldots a^{n_{k+2}}bw \mid w \in L, \exists_{i \in [k+1]} \ n_i \geq n_{i+1} \lor n_{k+2} \geq f(w) \}$ is not recognised by a \kamb VASS.
\end{lemma}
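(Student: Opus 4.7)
The plan is a proof by contradiction: assume a $k$-ambiguous VASS $V$ (with state set $Q$ and dimension $d$) recognises $L_1$, and derive $k+1$ distinct accepting runs on a single word in $L_1$. The proof extends the scheme behind Lemma \ref{lem:first_tool_simple} by using the extra ``slot'' provided by $w \in L$ and $f$ to supply a $(k+2)$-nd potential witness position, giving us enough slack for an alignment argument.

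I would begin by choosing parameters carefully. Using $\sup f = \omega$, pick $w \in L$ with $f(w)$ much larger than $|Q|$ and other constants depending on $V$. Then pick a strictly increasing sequence $n_1 < n_2 < \ldots < n_{k+2} < f(w)$ whose successive gaps are also large. For this choice the base word $a^{n_1}b \ldots a^{n_{k+2}}bw$ does not lie in $L_1$. For each $i \in [k+1]$ I would define $u^{(i)}$ by enlarging the $i$-th $a$-block to length $n_{i+1}$, so that position $i$ witnesses membership, and define $u^{(k+2)}$ by enlarging the last $a$-block to length at least $f(w)$, so that position $k+2$ witnesses membership. Each $u^{(i)}$ then lies in $L_1$ and admits at least one accepting run $\rho_i$ of $V$.

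The heart of the argument is to transform the runs $\rho_i$, which read different words, into runs over a single word $u^\ast \in L_1$. The key tool is Lemma \ref{wqo}: inside each sufficiently long $a$-block of each $\rho_i$, Dickson's lemma supplies a loop $\sigma$ reading only $a$'s and having non-negative effect on every counter. Pumping $\sigma$ extends the block while preserving run validity. By fixing common target block-lengths $N_1, \ldots, N_{k+2}$, each at least the maximum length of that block in any $u^{(i)}$ and with $N_{k+2} \geq f(w)$, and by pumping appropriate loops in each $\rho_i$, one obtains accepting runs $\hat\rho_i$ that all read the same word $u^\ast = a^{N_1}b \ldots a^{N_{k+2}}bw$. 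This word still belongs to $L_1$, since position $k+2$ remains a witness.

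The main obstacle is to arrange that at least $k+1$ of the pumped runs $\hat\rho_1, \ldots, \hat\rho_{k+2}$ remain distinct; naively two of them could coincide after pumping. To control this I would select the base parameters $(n_1, \ldots, n_{k+2})$ from an infinite family and, using pigeonhole on the boundary states of the $\rho_i$ together with Lemma \ref{wqo} on the boundary counter values, extract $k+1$ runs whose structural fingerprints (for instance, their sequences of boundary states together with the qualitative shape of the loop pumped in each block) remain distinct after the alignment. The resulting $k+1$ distinct accepting runs on $u^\ast \in L_1$ contradict the $k$-ambiguity of $V$ and complete the proof.
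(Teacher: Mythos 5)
Your overall skeleton (contradiction, a family of words each with a single witness position, loop-pumping to align all runs onto one common word) matches the paper's strategy, but the proposal misses the central mechanism and as written would fail at two linked points. First, the claim that every sufficiently long $a$-block of every run contains a loop with non-negative effect on \emph{all} counters is not available: the paper's Lemma~\ref{lem:loop} only yields a loop non-negative on a set $S$ of counters when the counters in $S$ are \emph{bounded} at the start of the run, with the length threshold depending on that bound, so a block is only guaranteed a fully non-negative loop when its length dominates the counter values accumulated before it. This is why the paper's words $w_i$ deliberately contain one \emph{short} block $a^{N_0!}$ (rather than your enlarged blocks): in that block no fully non-negative loop is guaranteed, and the paper proves the opposite (Claim~\ref{clm:decomp}, condition~\ref{cond:clm_decomp_fifth}) --- the loop found there \emph{cannot} be non-negative on all counters, since otherwise pumping it produces an accepted word with strictly increasing blocks whose last block stays below $f(u)$ (this is exactly where $\sup f=\omega$ is used, by choosing $u$ with $f(u)\geq 2N_{k+2}!$), i.e.\ a word outside $L_1$.

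Second, and more seriously, your distinctness step is a genuine gap. After aligning all runs with fully non-negative loops, nothing prevents two --- or all --- of the pumped runs from being literally the same transition sequence; pigeonhole and WQO arguments on boundary states and counter values extract \emph{agreement} among runs, not disagreement, so they cannot manufacture $k+1$ distinct runs. In the paper, distinctness is a quantitative consequence of the forced negativity from the previous point: run $\rho_i'$ pumps its defective loop $\beta_{i+1}^i$ (strictly negative on some counter $l$) a huge number of times, making $\eff_l$ over block $i+1$ drop below $-mN_{i+1}!$, while any other aligned run $\rho_j'$ has effect at least $-m N_{i+1}!$ over the same block (inequalities~\ref{eq:different_runs_first} and~\ref{eq:different_runs_second}); the validity of this pumping then requires the compensation cascade of exponents $b_j^i$ fed by strictly positive earlier loops, together with factorial block lengths to satisfy divisibility --- none of which appears in your proposal, and all of which is incompatible with your decision to make every pumped loop non-negative on all counters. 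To repair the proof you would need the analogue of the paper's decomposition Lemma~\ref{lem:decomp} for $k$-ambiguous VASSs, which is precisely the ingredient your outline replaces with an unjustified pumping-plus-pigeonhole step.
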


Before proving Lemma \ref{lem:first_tool} we show how it implies Lemma \ref{lem:first_tool_simple}.
\begin{proof}[Proof of Lemma \ref{lem:first_tool_simple}]
    Let us fix $k \in \N_+$ and let $L = \set{\varepsilon}$. Let $f: L \rightarrow \N_\omega$ be defined as $f(\varepsilon) = \omega$.
    Hence by Lemma \ref{lem:first_tool} we get that $L_k$ is not recognised by a \kamb VASS.
\end{proof}

Below we sketch the idea behind the proof of Lemma~\ref{lem:first_tool}. The whole proof can be found in the Appendix.
We assume, for the sake of contradiction, that $L_1$ is recognised by a $k$-ambiguous VASS and aim at a contradiction by showing $k+1$ different runs over the same word. We first consider $k+1$ words $w_1, \ldots, w_{k+1} \in L_1$,
where $u \in L$ is a particularly chosen word and $N_0 < N_1 < \ldots < N_{k+2} \in \N$ are particularly chosen constants:
\[
w_i = a^{N_1!} b a^{N_2!} b \ldots a^{N_i!} b a^{N_0!} b a^{N_{i+2}!} b a^{N_{i+3}!} b \ldots a^{N_{k+2}!} b u.
\]
Then we dive into combinatorics of VASS runs $\rho_i$ over words $w_i$ and conclude that there are specific pumping cycles
in $\rho_i$. We show two main lemmas, which analyse the structure of the runs.
The first one gives conditions for finding a specific loop in a run.
Observe, that this lemma is general and is not restricted to $k$-ambiguous VASSs.

\begin{lemma}\label{lem:loop}
For each $d$-dimensional VASS $V$, each subset of counters $S \subseteq [d]$ and each $n \in \N$ there exists a constant $M:=M(V,S, n) \geq 1$ such that in every run in $V$, starting from a configuration in which values of counters from $S$ are at most $n$, which is of length at least $M$ there exists a loop, which has non-negative effect on the counters from $S$. 
\end{lemma}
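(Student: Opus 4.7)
The plan is to combine König's lemma with the well-quasi-order property of Lemma~\ref{wqo}. Directly applying König to the tree of ``bad'' runs (runs in which no loop has non-negative effect on $S$) fails because such a tree would be infinitely branching at the root: the counters outside $S$ are unbounded in the starting configuration. So the first step is to reduce to finitely many starting configurations.

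For this reduction, observe that if a sequence of transitions is fireable from $q_0(v_0)$, then it is also fireable from $q_0(v_0')$ whenever $v_0' \geq v_0$ coordinate-wise. Hence any run from a configuration with $S$-counters at most $n$ is still a valid run from the configuration obtained by keeping the $S$-counter values unchanged and lifting every counter outside $S$ to $\omega$ ($\omega$-valued counters never block a transition and remain $\omega$ under any finite update). Since there are only $|Q|$ possible states and $(n+1)^{|S|}$ possible restrictions $v_0|_S$, these canonical starting configurations $c'$ form a finite set. It is enough to bound the length of bad runs from each such $c'$ and take the maximum.

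Fix a canonical $c'$ and let $T_{c'}$ be the tree whose nodes are bad runs from $c'$, ordered by prefix. This tree is finitely branching, with branching bounded by $|\delta|$. If $T_{c'}$ were infinite, König's lemma would produce an infinite branch, giving an infinite run $c' = c_0, c_1, c_2, \ldots$ with $c_i = q_i(v_i)$. The $S$-counters remain finite along the run, so the projection $(q_i, v_i|_S)$ lives in $Q \times \N^{|S|}$, which is a WQO by Lemma~\ref{wqo}. Hence there exist $i < j$ with $q_i = q_j$ and $v_i|_S \leq v_j|_S$, and the transitions between positions $i$ and $j$ then form a loop with non-negative $S$-effect, contradicting the assumption that every prefix on the branch is a bad run. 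Thus $T_{c'}$ is finite; taking $M := 1 + \max_{c'} d_{c'}$, where $d_{c'}$ is the depth of $T_{c'}$, yields the claimed constant. The only subtle step is the $\omega$-lifting reduction that trims the root branching to a finite set; the remaining König-plus-WQO argument is routine and yields only the existence of $M$, without providing an explicit bound.
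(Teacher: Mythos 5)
Your proof is correct and essentially matches the paper's: the paper also lifts the counters outside $S$ to $\omega$, builds a Karp--Miller-style ``domination tree'' (whose internal nodes are exactly your ``bad'' runs, with leaves placed at the first ancestor-domination), proves it finite via K\H{o}nig's lemma plus the WQO of Lemma~\ref{wqo}, and sets $M$ to the maximal tree depth plus one. The only cosmetic differences are that the paper uses a single canonical start per state (all $S$-coordinates set to $n$, by monotonicity) instead of your $(n+1)^{|S|}$ restrictions, and applies the WQO on $Q \times \N_\omega^d$ rather than on the projection $Q \times \N^{|S|}$.
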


The proof of Lemma~\ref{lem:loop}, which can be found in the Appendix, uses rather standard techniques similar to the ones in the Karp-Miller construction solving the coverability problem for VASS~\cite{coverability_tree}. Lemma~\ref{lem:loop} is used to find loops in words $w_i$. For example
one can show that if $N_1$ is big enough then one can find a loop as in Lemma~\ref{lem:loop} in the first block of letters $a$.
Similarly, if $N_2$ is big enough wrt. to $N_1$, one can find an appropriate loop in the second block of letters $a$.
Using this approach one can find loops in blocks of letters $a$, if the block is much longer than the whole prefix before it.
This is however not true for the block of length $N_0$. Therefore we need a more sophisticated tool in order to be able
to modify runs over all the words $w_i$ to runs over the same word.

This requires the more challenging part of the proof, which provides a detailed characterisation of runs of $k$-ambiguous VASS.
We formulate below one of the used lemmas in order to illustrate the kind of arguments we consider and, as it is also used in Section~\ref{subsec:image}.

\begin{lemma}\label{lem:decomp}
    Let $L$ be a language over $\Sigma = \{a,b\}$ recognised by some \kamb $d$-VASS $V$. For each $n \in \N$ there exists a constant $C$ such that each run $\rho$ in $V$ such that:
    \begin{itemize}
        \item Run $\rho$ is a prefix of an accepting run.
        \item Run $\rho$ is reading $a^{m_1}ba^{m_2}b\ldots a^{m_{n-1}}ba^{m_n}$.
    \end{itemize}
    can be decomposed as:
    \begin{enumerate}
        \item $\rho = \alpha_1\beta_1^{a_1}\alpha_1'\alpha_2\beta_2^{a_2}\alpha_2'\ldots\alpha_{n}\beta_{n}^{a_{n}}\alpha_{n}'$ for some $a_1, a_2, \ldots, a_{n} \geq 1$ and for each $i \in [1,n]$ we have $|\alpha_i|, |\beta_i|, |\alpha_i'| \leq C$.\label{cond:first:lem:decomp}
        \item For each $j < n$ we have  $w(\alpha_j') \in L(a^*b)$ and $w(\alpha_n') \in L(a^*)$ \label{cond:two:lem:decomp}
        \item For each $j \in [n]$ we have that $w(\alpha_j), w(\beta_j) \in L(a^*)$ \label{cond:three:lem:decomp}
        \item For each $j \in [n]$ we have that $\beta_j$ is either a loop or $\varepsilon$. Moreover if $m_j \geq 2 \cdot C+1$ then $\beta_j \neq \varepsilon$.\label{cond:four:lem:decomp}
        \item For each $j \in [n]$ let $A_j = \bigcup_{1 \leq i < j} \support(\beta_i)$ then $\beta_j$ is nonnegative on counters from $[d] \setminus A_j$\label{cond:five:lem:decomp}
        \item For each $j \in [n]$ there is no $\lambda$, $\delta$ and $\lambda'$ such that $\delta$ is a nonnegative loop on counters from $[d] \setminus A_j$, $\lambda' \neq \varepsilon$ and $\lambda\delta\lambda' = \alpha_j\beta_j$  \label{cond:seven:lem:decomp}
    \end{enumerate}
\end{lemma}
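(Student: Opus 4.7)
The plan is to construct the decomposition block by block, processing $j = 1, 2, \ldots, n$ in order and maintaining the following \emph{boundedness invariant}: at the start of block $j$, each counter $c \in [d] \setminus A_j$ has value bounded by some $B_j$ depending only on $j$, $C$, the maximum norm of transitions of $V$, and the initial configurations. To establish the invariant, fix $c \in [d] \setminus A_j$ and $i < j$; since $A_i \subseteq A_j$, we have $c \in [d] \setminus A_i$, so condition~(\ref{cond:five:lem:decomp}) applied to $\beta_i$ yields $\eff_c(\beta_i) \geq 0$, while $c \notin A_{i+1} \supseteq \support(\beta_i)$ forces $\eff_c(\beta_i) \leq 0$; hence $\eff_c(\beta_i) = 0$. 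Pumping the earlier loops therefore does not affect counter $c$, so only the length-$\leq C$ pieces $\alpha_i, \alpha_i'$ contribute to $B_j$, yielding a polynomial bound $B_j = O(jC\|V\|)$ plus the maximum initial value.

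With the invariant in hand, I would process block $j$ as follows. Apply Lemma~\ref{lem:loop} with $S = [d] \setminus A_j$ and counter bound $B_j$ to obtain a threshold $M_j := M(V, S, B_j)$; any run segment inside block $j$ of length at least $M_j$ then contains a nonnegative loop on $S$. Define $\alpha_j \beta_j$ to be the \emph{shortest} prefix of the $a^{m_j}$ segment that ends with such a loop, and take $\beta_j$ as the terminal loop (of length at most $M_j$). Let $\alpha_j'$ carry the remaining $a$'s together with the closing $b$ (or just the remaining $a$'s when $j = n$), and choose $a_j$ maximally so that $\alpha_j \beta_j^{a_j} \alpha_j'$ exactly exhausts the block. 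When the block is too short for this to succeed, in particular when $m_j \leq 2C$, set $\beta_j = \varepsilon$ and split the block directly into $\alpha_j$ and $\alpha_j'$ of length $\leq C$. Conditions~(\ref{cond:first:lem:decomp})--(\ref{cond:four:lem:decomp}) then follow from the $M_j$-bound and the split; conditions~(\ref{cond:two:lem:decomp}) and~(\ref{cond:three:lem:decomp}) from $\beta_j$ lying inside the $a^{m_j}$ segment; (\ref{cond:five:lem:decomp}) from the choice of $\beta_j$ as a nonnegative loop on $S$; and (\ref{cond:seven:lem:decomp}) directly from the minimality of $\alpha_j \beta_j$.

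The main obstacle will be the self-referential choice of $C$: since $B_j$ is linear in $C$ and $M_j$ can depend exponentially on $B_j$, the constant $C$ must be resolved as a fixed point of a recursion parameterised by $j \in [n]$ and $A \subseteq [d]$. Because there are only $n \cdot 2^d$ such layers and $M$ is a fixed computable function of its arguments, the recursion unwinds to a finite, if enormous, constant $C$, but the careful bookkeeping is what makes the formal argument long. A secondary point of care is that the $k$-ambiguity hypothesis on $V$ is not visibly invoked inside the decomposition itself; I would verify whether it is genuinely needed (perhaps to uniformly bound some auxiliary constant across all runs satisfying the hypothesis) or whether the statement in fact holds for arbitrary $d$-VASS, in which case $k$-ambiguity is only relevant when Lemma~\ref{lem:decomp} is composed with the pumping arguments of the surrounding section to derive an actual contradiction via too many accepting runs.
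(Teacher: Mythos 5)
Your block-by-block scheme, the boundedness invariant (earlier loops $\beta_i$ have zero effect on counters outside $A_j$, so only the short pieces $\alpha_i,\alpha_i'$ contribute), and the use of Lemma~\ref{lem:loop} to extract $\beta_j$ all match the paper's proof, which proceeds by induction on $n$ and defines the level-$n$ constant as $C' = \max(C+1, C_n, K_n)$ from the level-$(n-1)$ constant $C$ --- so your worry about a self-referential fixed point dissolves into a plain layered recursion, exactly as you suspect. The genuine gap is the step you wave through with ``conditions~(\ref{cond:first:lem:decomp})--(\ref{cond:four:lem:decomp}) then follow'': after fixing $\beta_j$ and taking $a_j$ maximal, \emph{nothing bounds} $|\alpha_j'|$. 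The run is given to you, not built by you; after the maximal train of repetitions of $\beta_j$ it may do something entirely different for the remainder of the block, so $|\alpha_j'| \leq C$ does not follow from the $M_j$-threshold. This bound is the heart of the lemma, and it is precisely where $k$-ambiguity enters --- contrary to your closing speculation that the hypothesis ``is not visibly invoked.'' The paper argues: if $|\alpha_n'|$ exceeds the threshold $K_n$, then $\alpha_n'$ contains a second nonnegative loop $\gamma$ on the counters $S = [d] \setminus \bigcup_{i \leq n} \support(\beta_i)$ untouched by all the $\beta_i$ (these counters are still bounded, by the same invariant), and then one constructs $k+1$ pairwise distinct accepting runs over a single word by inserting $i|\gamma|$ extra copies of $\beta_n$ and $(k+1-i)|\beta_n|$ extra copies of $\gamma$ for $i = 1, \ldots, k+1$, compensating any negative effects by pumping earlier strictly positive loops; distinctness uses the minimality of $|\alpha_n\beta_n|$ and the requirement that $\beta_n$ not be a prefix of $\alpha_n'$. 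The contradiction with $k$-ambiguity is what forces $|\alpha_n'| \leq K_n$.

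Your conjecture that the statement might hold for arbitrary $d$-VASS is in fact false, which shows the missing argument cannot be routed around. Take the $1$-VASS over $\{a\}$ with states $p, q$: a loop at $p$ reading $a$ with effect $+1$, a transition $p \to q$ reading $a$ with effect $0$, a loop at $q$ reading $a$ with effect $-1$, initial configuration $p(0)$ and accepting state $q$. For the run over $a^m$ ($n = 1$, a single block) that switches to $q$ after $\lceil m/2 \rceil$ increments, every loop of the run lies wholly inside the $p$-phase or wholly inside the $q$-phase (a segment containing the switch starts in $p$ and ends in $q$). Since $\beta_1^{a_1}$ is a contiguous segment, one of $\alpha_1, \alpha_1'$ must absorb an entire phase of length about $m/2$, so no decomposition with $|\alpha_1|, |\beta_1|, |\alpha_1'| \leq C$ exists once $m$ is large. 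This VASS accepts $a^m$ with roughly $m/2$ distinct runs, i.e.\ it is not boundedly ambiguous, so there is no conflict with the lemma --- but it shows any proof must use the $k$-ambiguity hypothesis to control $|\alpha_j'|$, which is the part your proposal is missing.
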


The proof of Lemma~\ref{lem:decomp} uses the Lemma~\ref{lem:loop} and a lot of combinatorial observations.
It is moved to the Appendix.

By appropriate use of Lemma~\ref{lem:decomp} and other auxiliary lemmas we show that $\rho_i$ can be modified a bit into
runs $\rho'_i$, which are all different, all accepting and all over the same word $w$, where
\[
w = a^{n_1} b a^{n_2} b \ldots a^{n_{k+2}} b u
\]
and $n_j = 2m^{k+3-j}\Pi_{l= j}^{k+2}N_l!$ (where $m$ is the maximal norm of a transition).
More concretely, $\rho'_i$ is obtained by pumping the loops $\beta_i$ in $\rho_i$ more times.
A challenge it to show that the resulting $\rho'_i$ are indeed different, which we achieve by more careful, but technical
investigation of the runs.
Existence of $k+1$ different runs over the same word $w$ is a contradiction
with the assumption that $L_1$ is recognised by a $k$-ambiguous VASS and finishes the proof.

\begin{toappendix}
%
\subsection{Proof of Lemma \ref{lem:loop}}
   To prove Lemma \ref{lem:loop} we will need a notion of a domination tree, which is inspired by the coverability tree \cite{coverability_tree} and the main goal of it is to get the constant $M$ from it easily.
   Domination tree for $d$-VASS $V$ and initial configuration $q(u)$ such that $q$ is a state of the VASS and $u \in \N_{\omega}^d$ is a tree constructed by the following algorithm:
   \begin{enumerate}
       \item Create root of the tree, label it with $q(u)$ and mark it as "new".
       \item While a node marked as "new" exists select a node $v$ marked as "new" labelled with $q_1(u')$ and do the following:
       \begin{itemize}
           \item If on the path from the root to $v$ exists node $\bar{v}$, different than $v$, labelled with $q_1(\bar{u}')$ such, that $u' \geq \bar{u}'$ mark $v$ as "old"
           \item Otherwise for each transition $t$ enabled in $q_1(u')$ obtain configuration $q_t(u_t)$ resulting from firing transition $t$ from $q_1(u')$ and create child $v_t$ of $v$ in the tree. Label it with $q_t(u_t)$ and mark it as "new". After processing all transitions mark $v$ as "old".
       \end{itemize}
   \end{enumerate}
Formally, we extend firing a transition from configurations from the set $Q \times \N^d$ to configurations from the set $Q \times \N_\omega^d$ by setting, that for each $k \in \Z$ we have $\omega + k = \omega$. Now let us observe the following claim, which can be proven by standard techniques using well quasi orders and Kőnig's Lemma~\cite{konig}.

\begin{claim}\label{clm:domi}
   Every domination tree is finite.
\end{claim}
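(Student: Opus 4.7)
The plan is a standard König/WQO argument, mirroring the termination proof for the Karp--Miller construction. First I would observe that the domination tree is finitely branching: from any configuration $q_1(u')$ only finitely many transitions of $V$ can be fired, so each internal node has at most $|\delta|$ children. Hence, by König's Lemma, if the tree were infinite it would contain an infinite path $v_0, v_1, v_2, \ldots$ starting at the root. Let $c_i = q^{(i)}(u^{(i)}) \in Q \times \N_\omega^d$ denote the label of $v_i$ along this path. (Note that since the construction extends firing by $\omega + k = \omega$, all labels indeed lie in $Q \times \N_\omega^d$, so this is well defined.)

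Next I would apply Lemma~\ref{wqo}, which tells us that the order $\preceq$ on $Q \times \N_\omega^d$, defined by $(q_1, v) \preceq (q_2, v') \iff q_1 = q_2 \land v \leq v'$, is a well-quasi-order. By condition~\ref{cond:domi} of Lemma~\ref{lem:equiv_cond_wqo}, the infinite sequence $c_0, c_1, c_2, \ldots$ must contain indices $i < j$ with $c_i \preceq c_j$, i.e.\ $q^{(i)} = q^{(j)}$ and $u^{(i)} \leq u^{(j)}$.

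But then, at the moment the algorithm processed $v_j$, the ancestor $v_i$ (which lies on the path from the root to $v_j$ and is different from $v_j$) satisfies exactly the domination condition of the construction. Consequently $v_j$ would have been marked ``old'' and no children of $v_j$ would have been created, contradicting the assumption that the path continues beyond $v_j$. Therefore no infinite path exists, and combined with finite branching this yields finiteness of the tree.

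The only subtlety, rather than a real obstacle, is making sure the WQO from Lemma~\ref{wqo} applies to labels that may carry $\omega$-coordinates (inherited from the initial configuration and preserved by the extended firing rule). Since Lemma~\ref{wqo} is stated precisely for $Q \times \N_\omega^d$, this goes through directly, and the rest of the argument is routine.
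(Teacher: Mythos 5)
Your proposal is correct and follows exactly the paper's own argument: finite branching (bounded by the number of transitions), König's Lemma to extract an infinite path, the WQO on $Q \times \N_\omega^d$ from Lemma~\ref{wqo} together with condition~\ref{cond:domi} of Lemma~\ref{lem:equiv_cond_wqo} to find a dominating pair $i < j$ on that path, and the observation that the dominated node would have been marked ``old'' and left childless, a contradiction. Your explicit remark that the $\omega$-coordinates are handled because Lemma~\ref{wqo} is stated for $\N_\omega^d$ is a nice touch, but the substance matches the paper's proof.
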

\begin{proof}
  Let us fix some domination tree $T$. Kőnig's Lemma \cite{konig} states that a finitely branching tree is infinite if and only if it has an infinite path.
  Hence since each vertex in $T$ has a finite degree (because VASS has a finite number of transitions) it is enough to show, that each path from the root is finite. Assume, towards contradiction, that an infinite path exists in $T$. Let $t_1, t_2, \ldots$ be consecutive labels on this path. Let $d$ be the dimension of the VASS for which $T$ was constructed and let $Q$ be the states of the VASS. Therefore $t_1, t_2, \ldots \in Q \times \N_{\omega}^d$. Recall, that $Q \times \N_{\omega}^d$ with $\preceq$ defined as $q_1(x) \preceq q_2(y) \iff \forall_{i \in [d]} x_i \leq y_i \land q_1 = q_2$ is a WQO by Lemma \ref{wqo}. Hence by Lemma \ref{lem:equiv_cond_wqo} we have, that there is a domination $t_i \preceq t_j$ for $i < j$. Hence, by a construction algorithm, node $v$, which corresponds to label $t_j$ has to be a leaf and hence the path can not be infinite, which contradicts the assumption, that the path was infinite. Hence $T$ is finite.
\end{proof}
Now we are ready to prove Lemma \ref{lem:loop}.
   For each state $q$ of $V$ we construct the domination tree from configuration $q(u)$ where $u = (u_1, u_2, \ldots, u_d)$ where for $i \in S$ we have $u_i = n$ and for $i \notin S$ we have $u_i = \omega$. As constant $M$ we take the maximal depth of such trees plus one. Observe, that we do not have to check all options $u_i = k \leq n$ (because a sequence of transitions which can be fired from some configuration can also be fired from a bigger configuration)
   and $M$ is well-defined because of the Claim \ref{clm:domi}.
   Assume, towards contradiction, that there exists a run $\rho$ of length at least $M$, starting from a configuration $q_1(v)$ in which values of counters from $S$ are at most $n$ in which there is no loop, which has a non-negative effect on counters from $S$. Because for $i \in [d]$ we have $v_i \leq u_i$ the same sequence of transitions can be fired from $q_1(u)$. Let $T$ be the domination tree constructed from $q_1(u)$. We can follow run $\rho$ in $T$ upon reaching a leaf of the tree. Let $\lambda$ be the prefix of $\rho$, which reached the leaf in the tree. Observe, that $|\lambda| < M$. By the construction of the domination tree we know, that there are two possibilities:
   \begin{enumerate}
       \item There is no transition, which can be fired from the configuration reached by $\lambda$.
       Therefore $\rho=\lambda$, which contradicts with $|\rho| \geq M$
       \item There has to be a domination in the tree.
       Therefore there exists two configurations $p(v_1)$ and $p(v_2)$ which we visited along $\lambda$ such that $p(v_1)$ was visited before $p(v_2)$ and $v_1 \leq v_2$. Let $\beta$ be the sequence of transition, which leads from $p(v_1)$ to $p(v_2)$. Observe, that because $v_1 \leq v_2$ and for each $i \in S$ $v_{1_i} \neq \omega$ and $v_{2_i} \neq \omega$ loop $\beta$  has non-negative effect on all of the counters from $S$. Recall, that $\beta$ is part of $\rho$, which contradicts the fact, that $\rho$ does not contain a loop, which is non-negative on all of the counters from $S$.
   \end{enumerate}
\end{toappendix}

\begin{toappendix}

\subsection{Proof of Lemma \ref{lem:decomp}}
    We prove this lemma by induction on $n$. For $n = 0$ we have $\rho=\varepsilon$ and hence all decomposition conditions are satisfied. Now we assume, that this Lemma is true for $n-1$ and we show that this implies Lemma \ref{lem:decomp} for $n$. 

    Let $\pi_1$ be the prefix of $\rho$ reading $a^{m_1}ba^{m_2}b\ldots a^{m_{n-1}}$ and let $t$ be the next transition of $\rho$ after $\pi_1$. Hence $\rho=\pi_1t\pi_2$ for some $\pi_2$. By induction assumption we can apply Lemma \ref{lem:decomp} to $\pi_1$ and get constant $C$  and decomposition:
    $$\pi_1 = \alpha_1\beta_1^{a_1}\alpha_1'\alpha_2\beta_2^{a_2}\alpha_2'\ldots\alpha_{n-1}\beta_{n-1}^{a_{n-1}}\lambda$$
    Observe, that we changed $\alpha_{n-1}'$ to $\lambda$ as we will redefine $\alpha_{n-1}'$. Now let us set $\alpha_{n-1}' = \lambda t$. Hence
    $$\rho = \alpha_1\beta_1^{a_1}\alpha_1'\alpha_2\beta_2^{a_2}\alpha_2'\ldots\alpha_{n-1}\beta_{n-1}^{a_{n-1}}\alpha_{n-1}'\pi_2$$
    Recall the definition of set $A_n$ from condition \ref{cond:five:lem:decomp}, let $m$ be the maximal norm of a transition, $s$ be the maximal norm of an initial configuration and let us define constants $T = s+2(n-1)mC$ and
    $C_n = M(V, A_n, T)$ given by Lemma \ref{lem:loop}. Let us also define constant $K_n = \max_{S \subseteq [d]} M(V,S, T+mC_n)$. Let also $C' = \max(C + 1, C_n, K_n)$. This will be a constant for Lemma \ref{lem:decomp} and $n$.
    We have two cases. The first case is that $\pi_2=\alpha_n\beta_n^{a_n}\alpha_n'$ for some loop $\beta_n$, which is nonnegative on counters from $[d] \setminus A_n$ and $|\alpha_n\beta_n| \leq C_n$ and $\beta_n$ is not a prefix of $\alpha_n'$. The second case is that such a decomposition of $\pi_2$ does not exist. We start the proof with the second case.
    \myparagraph{Case 2}
    Observe, that for each $i \in [n-1]$ and $l \in [d] \setminus A_n \subseteq [d] \setminus A_i$:
    $$\eff_l(\beta_i) = 0$$
    Hence:
    \begin{multline*}
    \eff_l(\alpha_1\beta_1^{a_1}\alpha_1'\alpha_2\beta_2^{a_2}\alpha_2'\ldots\alpha_{n-1}\beta_{n-1}^{a_{n-1}}\alpha_{n-1}') = \Sigma_{i=1}^{n-1}(\eff_l(\alpha_i)+\eff_l(\alpha_i')) \leq \\
    m \Sigma_{i=1}^{n-1}(|\alpha_i|+|\alpha_i'|) \leq 2(n-1)mC = T
    \end{multline*}
    Hence, because of the definition of $C_n$ and the fact that in this case, the decomposition of $\pi_2$ does not exist, we have that $|\pi_2| \leq C_n \leq C'$. Then we set $\alpha_n = \pi_2$ and $\beta_n = \alpha_n' = \varepsilon$.
    Then clearly from the induction assumption and the definition of $\alpha_{n-1}', \alpha_n, \beta_n, \alpha_n'$ conditions \ref{cond:first:lem:decomp}-\ref{cond:three:lem:decomp} and \ref{cond:five:lem:decomp} are satisfied.
    Condition \ref{cond:four:lem:decomp} is satisfied for $j < n$ by induction assumption and for $j=n$ we have $m_j = |\alpha_n\beta_n\alpha_n'| \leq C_n \leq C$ hence it is also satisfied.
    Condition \ref{cond:seven:lem:decomp} is satisfied for $j < n$ by induction assumption. It is satisfied for $j = n$ because otherwise $\pi_2$ could have been decomposed and we are in case 1.
    
    \myparagraph{Case 1}
    We know, that $\pi_2$ can be decomposed as described above. If there are multiple decompositions we choose any of the ones minimizing $|\alpha_n\beta_n|$. Now, we show that decomposition of $\rho$:
    $$\rho = \alpha_1\beta_1^{a_1}\alpha_1'\alpha_2\beta_2^{a_2}\alpha_2'\ldots\alpha_{n}\beta_{n}^{a_{n}}\alpha_{n}'$$
    satisfies all the conditions. We start with conditions \ref{cond:two:lem:decomp}-\ref{cond:seven:lem:decomp} as Condition \ref{cond:first:lem:decomp} is the most challenging to prove.
    \myparagraph{Condition \ref{cond:two:lem:decomp}}
    For $j < n-1$ this condition follows from the induction assumption. For $j = n-1$ we have that $\alpha_{n-1}' = \lambda t$. Recall that $t$ was a transition reading letter $b$ and from the induction assumption $\lambda$ does not read letter $b$.
    Moreover, $\alpha_n'$ is a suffix of $\pi_2$, which does not read the letter $b$.
    Hence the condition follows.
    \myparagraph{Condition \ref{cond:three:lem:decomp}}
    For $j < n$ the condition follows from the induction assumption and for $j=n$ we have that $\alpha_n\beta_n$ is a prefix of $\pi_2$, which does not read the letter $b$.
    \myparagraph{Condition \ref{cond:four:lem:decomp}}
    For $j < n$ we have this condition from induction assumption and for $j=n$ we have that $\beta_n \neq \varepsilon$ is a loop.
    \myparagraph{Condition \ref{cond:five:lem:decomp}}
    This condition for $j < n$ follows from inductions assumption and for $j = n$ follows from the fact that $\beta_n$ is a nonnegative loop on counters from $[d] \setminus A_n$ as required.
    \myparagraph{Condition \ref{cond:seven:lem:decomp}} 
    For $j < n$ we have it from induction assumption. For $j=n$ we have it from minimality of $|\alpha_n\beta_n|$.
    \myparagraph{Condition \ref{cond:first:lem:decomp}}
    Observe, that for $i \in [n-2]$ from induction assumption we have $|\alpha_i|, |\beta_i|, |\alpha_i'| \leq C \leq C'$ and moreover $|\alpha_{n-1}|, |\beta_{n-1}| \leq C \leq C'$. Moreover, from induction assumption, $|\alpha_{n-1}|, |\beta_{n-1}| \leq C \leq C'$ and $|\alpha_{n-1}'| = |\lambda t| \leq C + 1 \leq C'$. 
    Additionally $|\alpha_n\beta_n| \leq C_n \leq C'$. Therefore it is enough to prove that $|\alpha_n'| \leq K_n \leq C'$. Let
    $$S = [d] \setminus \bigcup_{i=1}^n \support(\beta_i)$$
    Observe, that for each $l \in S$ we have:
    \begin{multline*}
    \eff_l(\alpha_1\beta_1^{a_1}\alpha_1'\alpha_2\beta_2^{a_2}\alpha_2'\ldots\alpha_{n}\beta_{n}^{a_{n}}) = \eff_l(\alpha_n) + \Sigma_{i=1}^{n-1} (\eff_l(\alpha_i) +  \eff_l(\alpha_i')) \leq   \\
    mC_n + 2(n-1)mC
    \end{multline*}
    Recall that $K_n = M(V, S, T + mC_n)$ and $T = 2(n-1)mC$, so if $|\alpha_n'| > K_n$ then $\alpha_n'$ contains a nonnegative loop on counters from $S$.
    Hence it is enough to prove, that $\alpha_n'$ does not contain such loop. Assume, for the sake of contradiction, that $\alpha_n' = \lambda\gamma\lambda'$ such that $\gamma$ is a nonnegative loop on counters from $S$. We define accepting runs $\phi_1, \phi_2, \ldots, \phi_{k+1}$, which will read the same word, but will be different and hence this will create a contradiction with the fact that $V$ is \kamb VASS.
    Intuitively, due to loops $\beta_n$ and $\gamma$ we have more degrees of freedom in the last block and we can create $k+1$ different runs over the same word.
    As $\rho$ is a prefix of an accepting run we know, that there exists $\rho_1$ such that $\rho\rho_1$ is an accepting run.
    Firstly, we define runs $\psi_1$, $\psi_2$ and $\psi_3$, which will be parts of $\phi_1, \phi_2, \ldots, \phi_{k+1}$:
    $$\psi_1 = \alpha_1\beta_1^{a_1+b_1}\alpha_1'\alpha_2\beta_2^{a_2+b_2}\alpha_2'\ldots\alpha_{n}\beta_{n}^{a_{n}+b_n}$$
    $$\psi_2 = \lambda\gamma$$
    $$\psi_3 =  \lambda'\rho_1$$
    where for $i \in [n]$ we define $b_i = (k+1)m|\gamma||\beta_n| + \Sigma_{j=i+1}^n b_jm|\beta_j|$. The intuition is, that $b_i$ is chosen in such a way to compensate the possible decrease of counters because of additional executions of loops $\beta_{i+1}, \beta_{i+2}, \ldots, \beta_n$ and $\gamma$.
    Finally we define $\phi_i$
    $$\phi_i = \psi_1\beta_{n}^{i|\gamma|}\psi_2\gamma^{(k+1-i)|\beta_n|}\psi_3$$
    where for $i \in [n]$ we define $b_i = (k+1)m|\gamma||\beta_n| + \Sigma_{j=i+1}^n b_jm|\beta_j|$.
    Now it is enough to prove Claims \ref{clm:phi:cond:1:accepting}, \ref{clm:phi:cond:1:word} and \ref{clm:phi:cond:1:different}.
    \begin{claim}\label{clm:phi:cond:1:accepting}
        For each $i \in [k+1]$ we have that $\phi_i$ is an accepting run. 
    \end{claim}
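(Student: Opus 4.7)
The plan is to verify two things about $\phi_i$: (a) it can be fired from the initial configuration of $\rho\rho_1$, i.e.\ every counter stays nonnegative throughout, and (b) its final configuration covers some configuration in $F$. Both reduce to comparing $\phi_i$ with $\rho\rho_1$ and controlling the effect of the extra loop iterations injected via the constants $b_1,\dots,b_n$ together with the $\beta_n^{i|\gamma|}$ and $\gamma^{(k+1-i)|\beta_n|}$ blocks.

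I would start with (b) because it is cleaner. Since every insertion is a loop (condition \ref{cond:four:lem:decomp} for the $\beta_j$'s, and $\gamma$ is a loop by assumption), $\phi_i$ ends in the same state as $\rho\rho_1$, so it suffices to show $\eff(\phi_i) \geq \eff(\rho\rho_1)$ coordinate-wise. The effect difference is
\[
\Delta \;=\; \sum_{j=1}^{n-1} b_j\, \eff(\beta_j) \;+\; (b_n + i|\gamma|)\,\eff(\beta_n) \;+\; (k+1-i)|\beta_n|\, \eff(\gamma).
\]
On any counter $l \in S$, condition \ref{cond:five:lem:decomp} together with $l \notin \support(\beta_j)$ forces $\eff_l(\beta_j)=0$ for every $j$, and $\eff_l(\gamma)\geq 0$ by the choice of $\gamma$, so $\Delta_l\geq 0$ is immediate. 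On any counter $l\notin S$ I would let $j^{\ast}$ be the smallest index with $l\in\support(\beta_{j^{\ast}})$; by condition \ref{cond:five:lem:decomp} the earlier $\beta_j$'s vanish on $l$, while $\eff_l(\beta_{j^{\ast}})\geq 1$ contributes at least $b_{j^{\ast}}$. Bounding the later $\beta_j$ and $\gamma$ contributions crudely by $m|\beta_j|$ and $m|\gamma|$ per iteration and invoking the recursion $b_{j^{\ast}} = (k+1)m|\gamma||\beta_n| + \sum_{j>j^{\ast}} b_j m|\beta_j|$, the positive contribution of $\beta_{j^{\ast}}$ exactly absorbs the worst-case negative contribution, giving $\Delta_l\geq 0$. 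Since $\rho\rho_1$ already covers some configuration in $F$, so does $\phi_i$.

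For (a), I would fire $\phi_i$ step by step and, at every endpoint of a full $\alpha$-segment or a complete iteration of some $\beta_j$ or $\gamma$, compare with the corresponding position in $\rho\rho_1$. The difference is the partial sum of $\Delta$ restricted to iterations fired so far, and the same sign analysis shows it is coordinate-wise $\geq 0$, so nonnegativity at these macro boundaries follows from nonnegativity in $\rho\rho_1$. For positions strictly inside a single loop iteration the partial effect relative to the start of that iteration is bounded below by $-m|\beta_j|$ or $-m|\gamma|$; since the very same iteration was successfully fired in $\rho\rho_1$, it suffices that the start-of-iteration counter value in $\phi_i$ dominates its counterpart in $\rho\rho_1$, which is exactly what the macro-boundary analysis provides.

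The hard part will be the two blocks $\beta_n^{i|\gamma|}$ and $\gamma^{(k+1-i)|\beta_n|}$, which have no counterpart in $\rho\rho_1$ to compare against. Here I would argue that after executing $\psi_1$, on every counter $l \in A_n \cup ([d]\setminus S)$ the value in $\phi_i$ dominates the value at the end of the $\beta_n^{a_n}$-block of $\rho$ by at least $(k+1)m|\gamma||\beta_n|$, by the same bookkeeping used in (b). Firing at most $i|\gamma|$ additional copies of $\beta_n$ and at most $(k+1-i)|\beta_n|+1$ copies of $\gamma$ can decrease any such counter by at most $i|\gamma|\cdot m|\beta_n| + (k+1-i)|\beta_n|\cdot m|\gamma| \leq (k+1)m|\gamma||\beta_n|$, so the slack is exactly enough; then $\psi_3 = \lambda'\rho_1$ can be fired because $\phi_i$ at that point dominates $\rho\rho_1$ at the corresponding position coordinate-wise.
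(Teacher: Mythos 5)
Your proposal is correct and takes essentially the same route as the paper: the paper likewise reduces validity and acceptance of $\phi_i$ to coordinate-wise domination of $\rho\rho_1$ at the pumping-block boundaries (its Points 1--3), using the recursive definition of the $b_j$'s so that one earlier positive loop absorbs all later negative contributions, with the $(k+1)m|\gamma||\beta_n|$ summand exactly covering the inserted $\beta_n^{i|\gamma|}$ and $\gamma^{(k+1-i)|\beta_n|}$ blocks, and monotonicity handling positions inside a loop iteration. The only cosmetic difference is that you anchor the absorbing loop at the smallest index $j^*$ with $l \in \support(\beta_{j^*})$ (noting all earlier $\beta_t$ have $\eff_l(\beta_t)=0$, which makes the partial-sum check immediate), whereas the paper picks the maximal index $u$ with $\eff_l(\beta_u)>0$ and argues by induction.
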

    \begin{proof}
        Let us fix $i \in [k+1]$. Because $\rho\rho_1$ is an accepting run it is enough to prove that: 
        \begin{enumerate}
            \item For each $j \in [n-1]$ and $l \in [d]$ we have $$\eff_l(\alpha_1\beta_1^{a_1+b_1}\alpha_1'\alpha_2\beta_2^{a_2+b_2}\alpha_2'\ldots\alpha_{j}\beta_{j}^{a_{j}+b_{j}}) \geq \eff_l(\alpha_1\beta_1^{a_1}\alpha_1'\alpha_2\beta_2^{a_2}\alpha_2'\ldots\alpha_{j}\beta_{j}^{a_{j}})$$\label{phi:accepting:enum:1}
            \item For each $l \in [d]$ we have
            $$\eff_l(\psi_1\beta_{n}^{i|\gamma|}) \geq \eff_l(\alpha_1\beta_1^{a_1}\alpha_1'\alpha_2\beta_2^{a_2}\alpha_2'\ldots\alpha_{n}\beta_{n}^{a_{n}})$$
            \label{phi:accepting:enum:2}
            \item For each $l \in [d]$ we have
            \begin{multline*}
            \eff_l(\psi_1\beta_{n}^{b_n+i|\gamma|}\psi_2\gamma^{(k+1-i)|\beta_n|}) \geq  
            \eff_l(\alpha_1\beta_1^{a_1}\alpha_1'\alpha_2\beta_2^{a_2}\alpha_2'\ldots\alpha_{n}\beta_{n}^{a_{n}}\lambda\gamma)
            \end{multline*}\label{phi:accepting:enum:3}
        \end{enumerate}
        \myparagraph{Point \ref{phi:accepting:enum:1}}
        We prove this by induction on $j$. For $j = 1$ the inequality follows from the fact that for each $l \in [d]$ we have $\eff_l(\beta_1) \geq 0$. Now we show the induction step. If $\eff_l(\beta_j) \geq 0$ we get the inequality from the induction assumption. Hence now we assume $\eff_l(\beta_j) < 0$. Therefore there has to be $u < j$ such that $\eff_l(\beta_u) > 0$. Let $u$ be the maximal such. 
        Observe, that from induction assumption:
        \begin{multline*}
        \eff_l(\alpha_1\beta_1^{a_1+b_1}\alpha_1'\alpha_2\beta_2^{a_2+b_2}\alpha_2'\ldots\alpha_{s-1}\beta_{u-1}^{a_{u-1}+b_{u-1}}\alpha_u\beta_u^{a_u}\alpha_u'\ldots\alpha_j\beta_j^{a_j}\alpha_j') \geq \\
        \eff_l(\alpha_1\beta_1^{a_1}\alpha_1'\alpha_2\beta_2^{a_2}\alpha_2'\ldots\alpha_j\beta_j^{a_j}\alpha_j')
        \end{multline*}
        And moreover:
        $$b_u\eff_l(\beta_u) \geq b_u$$
        and for $t \in [u+1,j]$ we have:
        $$b_t\eff_l(\beta_t) \geq -b_tm|\beta_t|$$
        Hence:    
        \begin{multline*}
            \eff_l(\alpha_1\beta_1^{a_1+b_1}\alpha_1'\alpha_2\beta_2^{a_2+b_2}\alpha_2'\ldots\alpha_{j}\beta_{j}^{a_{j}+b_{j}}) = \\
            \eff_l(\alpha_1\beta_1^{a_1+b_1}\alpha_1'\alpha_2\beta_2^{a_2+b_2}\alpha_2'\ldots\alpha_{s-1}\beta_{u-1}^{a_{u-1}+b_{u-1}}\alpha_u\beta_u^{a_u}\alpha_u'\ldots\alpha_j\beta_j^{a_j}\alpha_j') + \Sigma_{t=u}^j b_t\eff_l(\beta_t) \geq \\
            \eff_l(\alpha_1\beta_1^{a_1}\alpha_1'\alpha_2\beta_2^{a_2}\alpha_2'\ldots\alpha_j\beta_j^{a_j}\alpha_j')+b_u - \Sigma_{t=u+1}^jb_tm|\beta_t| \geq \eff_l(\alpha_1\beta_1^{a_1}\alpha_1'\alpha_2\beta_2^{a_2}\alpha_2'\ldots\alpha_j\beta_j^{a_j}\alpha_j')
        \end{multline*}
        \myparagraph{Point \ref{phi:accepting:enum:2}}
        If $\eff_l(\beta_n) \geq 0$ than because of point \ref{phi:accepting:enum:1} we have:
        $$\eff_l(\alpha_1\beta_1^{a_1+b_1}\alpha_1'\alpha_2\beta_2^{a_2+b_2}\alpha_2'\ldots\alpha_{n}\beta_{n}^{a_{n}+b_n+i|\gamma|}) \geq \eff_l(\alpha_1\beta_1^{a_1}\alpha_1'\alpha_2\beta_2^{a_2}\alpha_2'\ldots\alpha_{n}\beta_{n}^{a_{n}})$$
        Hence, we can assume $\eff_l(\beta_n) < 0$. Therefore there has to be $j < n$ such that $\eff_l(\beta_j) > 0$. Let $j$ be the maximal such. 
        Observe, that:
        \begin{multline*}
            \eff_l(\alpha_1\beta_1^{a_1+b_1}\alpha_1'\alpha_2\beta_2^{a_2+b_2}\alpha_2'\ldots\alpha_{j}\beta_{j}^{a_{j}}\alpha_j'\beta_{j+1}^{a_{j+1}}\alpha_{j+1}'\ldots\alpha_n\beta_n^{a_n}) \geq \\
            \eff_l(\alpha_1\beta_1^{a_1}\alpha_1'\alpha_2\beta_2^{a_2}\alpha_2'\ldots\alpha_n\beta_n^{a_n})
        \end{multline*}
        $$b_j \eff_l(\beta_j) \geq b_j$$
        And for each $s \in [j+1, n]$
        $$b_s\eff_l(\beta_s) \geq b_s m |\beta_s|$$
        Hence:
        \begin{multline*}
            \eff_l(\alpha_1\beta_1^{a_1+b_1}\alpha_1'\alpha_2\beta_2^{a_2+b_2}\alpha_2'\ldots\alpha_{n}\beta_{n}^{a_{n}+b_n+i|\gamma|}) =\\
            \eff_l(\alpha_1\beta_1^{a_1+b_1}\alpha_1'\alpha_2\beta_2^{a_2+b_2}\alpha_2'\ldots\alpha_{j}\beta_{j}^{a_{j}}\alpha_j'\beta_{j+1}^{a_{j+1}}\alpha_{j+1}'\ldots\alpha_n\beta_n^{a_n}) + \Sigma_{s=j}^n b_s\eff_l(\beta_s) + i|\gamma|\eff_l(\beta_n) \geq \\
            \eff_l(\alpha_1\beta_1^{a_1}\alpha_1'\alpha_2\beta_2^{a_2}\alpha_2'\ldots\alpha_n\beta_n^{a_n}) + b_j - \Sigma_{s=j+1}^{n}b_sm|\beta_s| - mi|\lambda||\beta_n| \geq \eff_l(\alpha_1\beta_1^{a_1}\alpha_1'\alpha_2\beta_2^{a_2}\alpha_2'\ldots\alpha_n\beta_n^{a_n}) 
        \end{multline*}
        \myparagraph{Point \ref{phi:accepting:enum:3}}
        If $\eff_l(\gamma) \geq 0$ than the inequality follows from Point \ref{phi:accepting:enum:2}. Hence now we assume $\eff_l(\gamma) < 0$. Therefore there has to be $j \in [n]$ such that $\eff_l(\beta_j) > 0$. Let $j$ be the maximal such. Hence:
        Observe, that:
        \begin{multline*}
            \eff_l(\alpha_1\beta_1^{a_1+b_1}\alpha_1'\alpha_2\beta_2^{a_2+b_2}\alpha_2'\ldots\alpha_{j}\beta_{j}^{a_{j}}\alpha_{j}'\ldots\alpha_n\beta_{n}^{a_n}\lambda\gamma) \geq \\
            \eff_l(\alpha_1\beta_1^{a_1}\alpha_1'\alpha_2\beta_2^{a_2}\alpha_2'\ldots\alpha_{j}\beta_{j}^{a_{j}}\alpha_{j}'\ldots\alpha_n\beta_{n}^{a_n}\lambda\gamma)
        \end{multline*}
        $$b_j \eff_l(\beta_j) \geq b_j$$
        And for $s \in [j+1, n]$ we have:
        $$b_s \eff_l(\beta_s) \geq b_s m |\beta_s|$$
        \begin{multline*}
          \eff_l(\alpha_1\beta_1^{a_1+b_1}\alpha_1'\alpha_2\beta_2^{a_2+b_2}\alpha_2'\ldots\alpha_{n}\beta_{n}^{a_{n}+b_n+i|\gamma|}\lambda\gamma^{1+(k+1-i)|\beta_n|}) = \\
          \eff_l(\alpha_1\beta_1^{a_1+b_1}\alpha_1'\alpha_2\beta_2^{a_2+b_2}\alpha_2'\ldots\alpha_{j}\beta_{j}^{a_{j}}\alpha_{j}'\ldots\alpha_n\beta_{n}^{a_n}\lambda\gamma) + b_j\eff_l(\beta_j) + \\
          \Sigma_{s=j+1}^n b_s\eff_l(\beta_s) + i|\gamma|\eff_l(\beta_n) + (k+1-i)|\beta_n|\eff_l(\gamma) \geq \\
          \eff_l(\alpha_1\beta_1^{a_1}\alpha_1'\alpha_2\beta_2^{a_2}\alpha_2'\ldots\alpha_{j}\beta_{j}^{a_{j}}\alpha_{j}'\ldots\alpha_n\beta_{n}^{a_n}\lambda\gamma) + b_j - \Sigma_{s=j+1}^nb_sm|\beta_s| - (k+1)m|\beta_n||\gamma| = \\
        \eff_l(\alpha_1\beta_1^{a_1}\alpha_1'\alpha_2\beta_2^{a_2}\alpha_2'\ldots\alpha_{j}\beta_{j}^{a_{j}}\alpha_{j}'\ldots\alpha_n\beta_{n}^{a_n}\lambda\gamma)
        \end{multline*}
    \end{proof}
    \begin{claim}\label{clm:phi:cond:1:word}
        For each $i \in [k+1]$ run $\phi_i$ reads the same word.
    \end{claim}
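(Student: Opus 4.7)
The plan is to observe that $w(\phi_i)$ depends on $i$ only through the exponents of $\beta_n$ and $\gamma$, and that both of these loops read only the letter $a$, so that the $i$-dependent factors collapse to a power of $a$ whose exponent is independent of $i$.

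First, I will unfold the definition to get
\[
w(\phi_i) \;=\; w(\psi_1)\cdot w(\beta_n)^{i|\gamma|}\cdot w(\psi_2)\cdot w(\gamma)^{(k+1-i)|\beta_n|}\cdot w(\psi_3).
\]
Since $\psi_1$, $\psi_2 = \lambda\gamma$, and $\psi_3 = \lambda'\rho_1$ do not depend on $i$, the factors $w(\psi_1)$, $w(\psi_2)$, $w(\psi_3)$ are independent of $i$. It remains to analyse the middle part.

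Next, I invoke the structural information from Lemma~\ref{lem:decomp}. Condition~\ref{cond:three:lem:decomp} gives $w(\beta_n)\in L(a^*)$, and condition~\ref{cond:two:lem:decomp} gives $w(\alpha_n')\in L(a^*)$; since $\alpha_n' = \lambda\gamma\lambda'$, every one of $\lambda$, $\gamma$, $\lambda'$ also reads only the letter $a$. As $V$ is $\varepsilon$-free by global assumption, each transition reads exactly one letter, so $w(\beta_n)=a^{|\beta_n|}$ and $w(\gamma)=a^{|\gamma|}$. Hence
\[
w(\beta_n)^{i|\gamma|}\cdot w(\psi_2)\cdot w(\gamma)^{(k+1-i)|\beta_n|} \;=\; a^{i|\beta_n||\gamma|}\cdot a^{|\lambda|}\cdot a^{|\gamma|}\cdot a^{(k+1-i)|\beta_n||\gamma|} \;=\; a^{(k+1)|\beta_n||\gamma|+|\lambda|+|\gamma|},
\]
where in the last equality we used that the three middle factors concatenate into a single block of $a$'s. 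This expression does not depend on $i$, so $w(\phi_i)$ is the same for every $i\in[k+1]$.

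There is no real combinatorial obstacle here; the claim is essentially a bookkeeping verification, and the only thing one has to be careful about is citing the right conditions of Lemma~\ref{lem:decomp} to guarantee that $\beta_n$, $\lambda$, $\gamma$, $\lambda'$ all read in $a^*$, so that shifting $a$'s from one side of $\psi_2$ to the other does not change the overall word.
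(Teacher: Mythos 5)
Your proof is correct and follows essentially the same route as the paper: both arguments rest on conditions~\ref{cond:two:lem:decomp} and~\ref{cond:three:lem:decomp} of Lemma~\ref{lem:decomp} (so that $\beta_n$, $\lambda$, $\gamma$ all read in $a^*$, each $\varepsilon$-free transition contributing one letter) together with the arithmetic identity $i|\gamma||\beta_n|+(k+1-i)|\beta_n||\gamma|=(k+1)|\beta_n||\gamma|$. The only cosmetic difference is that you compute the $i$-independent middle block $a^{(k+1)|\beta_n||\gamma|+|\lambda|+|\gamma|}$ once for all $i$, whereas the paper compares $\phi_i$ and $\phi_j$ pairwise via the length equality $|\beta_n^{(j-i)|\gamma|}\lambda|=|\lambda\gamma^{(j-i)|\beta_n|}|$.
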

    \begin{proof}
        Let us fix $i,j \in [k+1]$ such that $i < j$. 
        Observe, that $w(\phi_i) = w(\phi_j)$ if and only if 
        $w(\beta_{n}^{(j-i)|\gamma|}\lambda)$ and $w(\lambda\gamma^{(j-i)|\beta_n|})$.
        We have, that $w(\beta_n), w(\alpha_n') \in L(a^*)$.
        Hence also $\gamma, \lambda \in L(a^*)$, as they are parts of $\alpha_n'$.
        Therefore these two runs read the same word because 
        $$|\beta_{n}^{(j-i)|\gamma|}\lambda| = |\lambda\gamma^{(j-i)|\beta_n|}|$$
    \end{proof}
    \begin{claim}\label{clm:phi:cond:1:different}
        For each $i, j \in [k+1]$ such that $i \neq j$ we have that $\phi_i \neq \phi_j$.
    \end{claim}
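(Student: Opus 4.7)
The plan is to proceed by contradiction. Fix $i<j$ in $[k+1]$ and set $r=j-i\ge 1$, and suppose $\phi_i=\phi_j$ as sequences of transitions. Both runs share the common left factor $\psi_1\beta_n^{i|\gamma|}$ and the common right factor $\gamma^{1+(k+1-j)|\beta_n|}\lambda'\rho_1$; cancelling these reduces the supposed equality to the transition-sequence identity
\[
\lambda\,\gamma^{r|\beta_n|} \;=\; \beta_n^{r|\gamma|}\,\lambda,
\]
both sides having length $|\lambda|+r|\beta_n||\gamma|$.

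I then compare the initial block of length $|\beta_n|$ on each side. The right-hand side begins with one full copy of $\beta_n$. In the case $|\lambda|\ge|\beta_n|$, the left-hand side's first $|\beta_n|$ transitions coincide with the first $|\beta_n|$ transitions of $\lambda$, and hence of $\alpha_n'=\lambda\gamma\lambda'$; equality thus exhibits $\beta_n$ as a prefix of $\alpha_n'$, contradicting the Case~1 stipulation that $\beta_n$ is not a prefix of $\alpha_n'$. In the remaining case $|\lambda|<|\beta_n|$, this comparison forces $\beta_n=\lambda\mu$ where $\mu$ is the prefix of the infinite repetition of $\gamma$ of length $|\beta_n|-|\lambda|$; substituting $\beta_n=\lambda\mu$ into the displayed identity and cancelling $\lambda$ on the left yields
\[
\gamma^{r|\beta_n|} \;=\; (\mu\lambda)^{r|\gamma|}.
\]

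From this equation I invoke the classical word-combinatorics fact that $u^p=v^q$ with $p,q\ge 1$ forces $u$ and $v$ to be powers of a common primitive word: so $\gamma=w^a$ and $\mu\lambda=w^b$ for some primitive $w$ and integers $a,b\ge 1$, with $|w|$ dividing both $|\gamma|$ and $|\beta_n|$. Since $\beta_n=\lambda\mu$ is the cyclic shift of $\mu\lambda=w^b$ by $|\lambda|$ positions and $\mu$ is already a prefix of $w^{\infty}$, reading $\mu$ against the prefix $w^a\lambda'$ of the tail of $\alpha_n'$ shows that $\mu$ is a prefix of $\gamma\lambda'$: this is immediate in the sub-case $|\beta_n|\le|\lambda|+|\gamma|$, where $|\mu|\le|\gamma|$, and in the sub-case $|\beta_n|>|\lambda|+|\gamma|$ it follows by iterating the length-$|\beta_n|$ block comparison on both sides of the identity and using the $|w|$-periodicity of $w^{\infty}$ to determine the relevant prefix of $\lambda'$. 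Therefore $\beta_n=\lambda\mu$ is a prefix of $\alpha_n'$, again contradicting the Case~1 choice.

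The main obstacle is the combinatorial analysis in the case $|\lambda|<|\beta_n|$, where one has to exploit the common primitive root of $\gamma$ and $\mu\lambda$ together with the periodicity of $w^{\infty}$ in order to locate $\beta_n$ as a prefix of $\alpha_n'$, rather than simply comparing the first $|\beta_n|$ characters directly. The case $|\lambda|\ge|\beta_n|$ is by contrast an immediate prefix comparison. In either case we derive a contradiction, hence $\phi_i\neq\phi_j$ for all $i\neq j$ in $[k+1]$, which completes the proof of the claim.
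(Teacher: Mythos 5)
Your reduction to the identity $\lambda\gamma^{r|\beta_n|}=\beta_n^{r|\gamma|}\lambda$ is exactly the paper's first step, and your first two situations are handled correctly: if $|\beta_n|\le|\lambda\gamma|$ (covering both your case $|\lambda|\ge|\beta_n|$ and the sub-case $|\mu|\le|\gamma|$), then $\beta_n$ is a prefix of $\lambda\gamma$ and hence of $\alpha_n'$, contradicting the Case~1 stipulation --- this is precisely the paper's observation that $\beta_n$ cannot be a prefix of $\lambda\gamma$, so that $\lambda\gamma$ must be a proper prefix of $\beta_n$. But in the remaining sub-case $|\beta_n|>|\lambda|+|\gamma|$ your argument has a genuine gap: $\lambda'$ was cancelled at the very start, so the identity contains no occurrence of $\lambda'$ whatsoever, and no ``iteration of the block comparison'' on that identity can determine any prefix of $\lambda'$. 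Concretely, writing $\mu=\gamma\mu'$ (possible since $\gamma$ and $\mu$ are both prefixes of $w^\infty$ and $|\mu|>|\gamma|$), what you need is that $\mu'$ is a prefix of $\lambda'$, and this simply does not follow: take $\lambda=\varepsilon$, $\gamma=w$, $\beta_n=w^2$ for a loop $w$; then $\lambda\gamma^{r|\beta_n|}=w^{2r|w|}=\beta_n^{r|\gamma|}\lambda$ holds, yet if $\lambda'$ does not begin with $w$ (note the proof of Lemma~\ref{lem:decomp} does not even guarantee $\lambda'\neq\varepsilon$ here), then $\beta_n$ is not a prefix of $\alpha_n'=\lambda\gamma\lambda'$. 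So the hypotheses you use are mutually consistent with the negation of your target, and the contradiction cannot be reached by word combinatorics alone.

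The paper closes exactly this case with VASS-specific data that your proposal never uses. Since the two sides of the identity are equal as transition sequences, their counter effects agree, giving $r|\gamma|\,\eff_l(\beta_n)=r|\beta_n|\,\eff_l(\gamma)$ for every $l\in[d]$; hence $\support(\gamma)=\support(\beta_n)$ and $\gamma$ is a nonnegative loop on the counters from $[d]\setminus A_n$. Then, since $\lambda\gamma$ is a \emph{proper} prefix of $\beta_n$, writing $\beta_n=\lambda\gamma\delta$ with $\delta\neq\varepsilon$ exhibits $\alpha_n\beta_n=(\alpha_n\lambda)\,\gamma\,\delta$, which contradicts the minimality of $|\alpha_n\beta_n|$ in the choice of the Case~1 decomposition (this is what underlies Condition~\ref{cond:seven:lem:decomp} for $j=n$). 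In short: the ``$\beta_n$ is not a prefix of $\alpha_n'$'' stipulation suffices only when $|\beta_n|\le|\lambda\gamma|$; when $\beta_n$ is longer, the missing idea is the effect/support computation combined with the minimality of $|\alpha_n\beta_n|$, and your proof needs to be repaired by replacing the periodicity argument in that sub-case with this minimality contradiction.
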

    \begin{proof}
    \Wlog assume $i < j$. Observe, that it is enough to show, that 
    $$\beta_{n}^{(j-i)|\gamma|}\lambda \neq \lambda\gamma^{(j-i)|\beta_n|}$$
    Assume, for the sake of contradiction that:
     $$\beta_{n}^{(j-i)|\gamma|}\lambda = \lambda\gamma^{(j-i)|\beta_n|}$$
    Observe, that $\beta_n$ is not a prefix of $\lambda\gamma$, because then $\beta_n$ would be a prefix of $\alpha_n'$, which is not possible because of properties of $\pi_2$ (recall, that $\pi_2 = \alpha_n\beta_n^{a_n}\alpha_n'$ and $\beta_n$ is not a prefix of $\alpha_n'$). 
    Hence $\lambda\gamma$ is a prefix of $\beta_n$. 
    
     Hence for all $l \in [d]$ we have:
     $$\eff_l(\beta_{n}^{(j-i)|\gamma|}\lambda) = \eff_l(\lambda\gamma^{(j-i)|\beta_n|})$$
     Hence $\support(\beta_n) = \support(\gamma)$. Because $\lambda\gamma$ is a prefix of $\beta_n$ and $\lambda\gamma \neq \beta_n$ this contradicts the minimality of $|\alpha_n\beta_n|$.
    \end{proof}
    
Since we have proven Claims~\ref{clm:phi:cond:1:accepting},~\ref{clm:phi:cond:1:word}~and~\ref{clm:phi:cond:1:different}
we know that runs $\phi_i$ are all accepting, different and read the same word. This concludes the proof of Lemma~\ref{lem:decomp}.
\end{toappendix}


%
\begin{toappendix}

\subsection{Proof of Lemma \ref{lem:first_tool}}
   The proof is by contradiction. We assume that $L_1$ is recognised by a \kamb VASS $V$. Firstly we define a few constants. Then we define word $w_1, w_2, \ldots, w_{k+1} \in L_1$ and respective accepting runs $\rho_1, \rho_2, \ldots, \rho_{k+1}$. Then we decompose each of these runs using Claim \ref{clm:decomp} and apply pumping to each of the runs.
   Finally we get $k+1$ different runs $\rho_1', \rho_2', \ldots, \rho_{k+1}'$ over the same word, which contradicts the fact that $V$ is \kamb.
   
   Assume, for the sake of contradiction, that $L_1$ is recognised by a \kamb VASS $V$ and let $m$ be the maximal norm of a transition in $V$ (\mywlog $m \geq 2$).
   Let also $n$ be the maximal norm of an initial configuration. Let us fix a few constants. Let $N_0$ be equal to $2C+1$ (where $C$ is the constant given by Lemma \ref{lem:decomp}). Moreover, for $i \in [k+2]$ we define $N_i$ as $\max(N_{i-1} + 2, M(V, [d], M_i))$ where $M_i = \Sigma_{j=1}^{i-1} m(N_j! + 1) + n$ and $M(V, [d], M_i)$ is the constant given by Lemma \ref{lem:loop} for VASS $V$, subset of counters $[d]$, and initial value of the counters $M_i$. Observe, that because of this for each $i \in [k+1]$ we have $N_{i+1}! \geq (N_i+2)! > (N_i+1)!$ and for each $i \in [k+2]$ we have $N_i \geq 3$ (because $N_0 \in \N_+$). Let also $u$ be a word from $L$ such that $f(u) \geq 2N_{k+2}!$. Such word exists because $\sup{f} = \omega$.

   Having these constants and the word $u$ we define words $w_1, w_2, \ldots w_{k+1}$ as:
   $$w_i = a^{N_1!}ba^{N_2!}b\ldots a^{N_i!}ba^{N_0!}ba^{N_{i+2}!}ba^{N_{i+3}!}b\ldots a^{N_{k+2}!}bu$$
   Observe, that $w_i \in L$ because $N_i! \geq N_0!$ and let $\rho_i$ be an accepting run over $w_i$. 
   The intuition is, that $\rho_i$ has a nonnegative loop on all the counters in each part reading a block of letters a possibly except the one reading $i+1$ block. This is formalised in the following claim:
   \begin{claim}
       Let $j \in [k+2] \setminus \{i+1\}$ and let $\pi$ be the part of $\rho_i$ reading $j$-th block of letters $a$. Then $\pi = \alpha\beta\alpha'$ where $\beta$ is a nonnegative loop on all the counters and $|\beta| \leq N_j$. 
   \end{claim}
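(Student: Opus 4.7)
The plan is to apply Lemma~\ref{lem:loop} to the portion $\pi$ of $\rho_i$ reading the $j$-th block, using the full set of counters $S = [d]$, and to extract the desired loop from the resulting decomposition. Three things have to be checked: that $|\pi|$ is long enough to trigger the lemma, that the counter values at the start of $\pi$ are bounded by $M_j$ so the lemma applies with that threshold, and that the loop the lemma produces has length at most $N_j$.

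First I would use that $j \neq i+1$ to conclude that the $j$-th block of $w_i$ is literally $a^{N_j!}$, hence $|\pi| = N_j!$. Next I would bound the counter values at the configuration entering $\pi$. The prefix of $w_i$ preceding the $j$-th block has total length $\Sigma_{l=1}^{j-1}(N_l!+1)$ in the easy case $j \leq i$, while in the case $j \geq i+2$ its length equals $\Sigma_{l=1}^{i}(N_l!+1) + (N_0!+1) + \Sigma_{l=i+2}^{j-1}(N_l!+1)$. I would then use the monotonicity $N_0 \leq N_{i+1}$, which follows by induction from $N_0 = 2C+1$ and $N_l \geq N_{l-1}+2$, to obtain $N_0! + 1 \leq N_{i+1}! + 1$, so the second case is also bounded by $\Sigma_{l=1}^{j-1}(N_l!+1)$. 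Since the initial norm is at most $n$ and each transition changes every counter by at most $m$, the counters at the start of $\pi$ are bounded by $n + m\cdot\Sigma_{l=1}^{j-1}(N_l!+1) = M_j$.

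Now Lemma~\ref{lem:loop} applied with $S = [d]$ and threshold $M_j$ provides a constant $M(V, [d], M_j)$, and by the very definition of $N_j$ in the preamble we have $N_j \geq M(V, [d], M_j)$. Because $|\pi| = N_j! \geq N_j \geq M(V, [d], M_j)$, the lemma yields a loop inside $\pi$ that is nonnegative on all counters; inspecting the proof of Lemma~\ref{lem:loop} (the loop arises from a dominated configuration met inside a prefix of length strictly below the threshold) shows that the loop $\beta$ can be chosen with $|\beta| \leq M(V, [d], M_j) \leq N_j$. Letting $\alpha$ and $\alpha'$ be the surrounding fragments of $\pi$ gives the decomposition $\pi = \alpha\beta\alpha'$ required by the claim.

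The main obstacle is the bookkeeping in the second case $j \geq i+2$, where the $(i+1)$-th block of $w_i$ is $a^{N_0!}$ rather than $a^{N_{i+1}!}$ and one has to verify that this substitution cannot push the counter bound above $M_j$; once the elementary inequality $N_0! \leq N_{i+1}!$ is established everything else is a direct application of Lemma~\ref{lem:loop} with the parameters for which the constants $N_j$ and $M_j$ were tailored.
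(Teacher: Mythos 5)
Your proof is correct and follows essentially the same route as the paper's: bound the counters entering the $j$-th block by $M_j$ (using $N_0! \leq N_l!$ to absorb the shortened $(i{+}1)$-th block), then apply Lemma~\ref{lem:loop} with $S=[d]$ and use $|\pi| = N_j! \geq N_j \geq M(V,[d],M_j)$ to extract the nonnegative loop. Your explicit remark that the length bound $|\beta| \leq M(V,[d],M_j)$ comes from inspecting the domination-tree proof of Lemma~\ref{lem:loop} is a point the paper leaves implicit, but it is the same reasoning.
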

   \begin{proof}
   Observe, that for each $l \in [k+2]$ we have $N_0! \leq N_l!$. Recall, that $n$ is the maximal norm of an initial configuration and $m$ is the maximal norm of a transition. Observe also that before reading $j$-th block of letter $a$ each counter is bounded by:
   $$\Sigma_{l=1}^{j-1}m(N_l!+1) + n = M_{j}$$
   Hence by definition of $N_j$, Lemma \ref{lem:loop} and the fact that $|\pi| = N_j! \geq N_j$ we know, that we have a nonnegative loop $\beta$ in $\pi$ such that there exists $\alpha$, $\alpha'$ such that $\pi = \alpha\beta\alpha'$ and $|\beta| \leq N_j$.
   \end{proof}
   Now we have a nonnegative loop in each block possibly except the $i+1$ block. We want to find a loop in this block, which we will be able to pump, probably with the help of previous nonnegative loops in the earlier blocks. For this, we need Lemma \ref{lem:decomp}.
   We formalise our goal in the following claim, which is similar to Lemma \ref{lem:decomp}, but more specific to our situation. In particular conditions \ref{cond:clm_decomp_first}, \ref{cond:clm_decomp_second}, \ref{cond:clm_decomp_third} and \ref{cond:clm_decomp_fourth}-\ref{cond:clm_decomp_fifth}
   from Claim \ref{clm:decomp} correspond respectively to conditions \ref{cond:two:lem:decomp}, \ref{cond:three:lem:decomp}, \ref{cond:four:lem:decomp} and \ref{cond:five:lem:decomp} from Lemma \ref{lem:decomp}.
   \begin{claim}\label{clm:decomp}
       For each $i \in [k+1]$ run $\rho_i$ can be decomposed as 
       $$\rho_i = \alpha_1^i(\beta_1^i)^{a_1^i}\gamma_1^i\alpha_2^i(\beta_2^i)^{a_2^i}\gamma_2^i\ldots\alpha_{k+2}^i(\beta_{k+2}^i)^{a_{k+2}^i}\gamma_{k+2}^i\pi_i$$
       for some $a_1^i, a_2^i, \ldots, a_{k+2}^i \in \N_+$ and moreover we have:
       \begin{enumerate}
           \item For each $j < k+2$ we have that $w(\gamma_j^i) \in L(a^*b)$ and $w(\gamma_{k+2}^i) \in L(a^*)$\label{cond:clm_decomp_first}
           \item For each $j \in [k+2]$ we have that $w(\alpha_j^i), w(\beta_j^i) \in L(a^*)$.\label{cond:clm_decomp_second}            
           \item For each $j \in [k+2]$ we have that $\beta_j^i$ is a loop and $\beta_j^i \neq \varepsilon$\label{cond:clm_decomp_third}
           \item For each $j \in [k+2] \setminus \{i+1\}$ we have that $\beta_j^i$ is a nonnegative loop and $|\beta_j^i| \leq N_j$\label{cond:clm_decomp_fourth}
           \item We have that $\beta_{i+1}^i$ is a nonnegative loop on counters from $[d] \setminus \bigcup_{1 \leq j \leq i} \support(\beta_j^i)$, $|\beta_{i+1}^i| \leq N_0$ and $\beta_{i+1}^i$ is not a nonnegative loop on all the counters\label{cond:clm_decomp_fifth}
       \end{enumerate}
   \end{claim}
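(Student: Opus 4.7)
My plan is to derive Claim~\ref{clm:decomp} by applying Lemma~\ref{lem:decomp} to the prefix of $\rho_i$ that reads all $a$-blocks and then upgrading the resulting decomposition using the immediately preceding claim on nonnegative loops in the long blocks. Concretely, I decompose $\rho_i = \rho_i^{\mathrm{pre}}\pi_i$, where $\rho_i^{\mathrm{pre}}$ reads $a^{N_1!}b\cdots a^{N_{k+2}!}$ and $\pi_i$ reads the final $bu$. Since $\rho_i$ is accepting, $\rho_i^{\mathrm{pre}}$ is a prefix of an accepting run, so Lemma~\ref{lem:decomp} with $n = k+2$ yields a constant $C$ (which I take to be the same $C$ used in defining $N_0 = 2C+1$) and a decomposition $\rho_i^{\mathrm{pre}} = \alpha_1\beta_1^{a_1}\alpha_1'\cdots\alpha_{k+2}\beta_{k+2}^{a_{k+2}}\alpha_{k+2}'$ satisfying items 1--7 of the lemma. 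Taking $\alpha_j^i := \alpha_j$, $\beta_j^i := \beta_j$, $a_j^i := a_j$, $\gamma_j^i := \alpha_j'$, conditions~\ref{cond:clm_decomp_first} and~\ref{cond:clm_decomp_second} follow from items 2 and 3; the ``loop'' part of~\ref{cond:clm_decomp_third} from item 4; the nonemptiness of each $\beta_j^i$ from item 4 combined with $N_j! \geq N_0! \geq 2C+1$; the length bounds in~\ref{cond:clm_decomp_fourth} and~\ref{cond:clm_decomp_fifth} from $|\beta_j| \leq C \leq N_0 \leq N_j$; and nonnegativity of $\beta_{i+1}^i$ on $[d]\setminus A_{i+1}$ from item 5.

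Two subtleties remain. The first is the nonnegativity of $\beta_j^i$ on \emph{all} counters for $j \neq i+1$ (part of condition~\ref{cond:clm_decomp_fourth}), which I handle by discarding the $\beta_j$ produced by Lemma~\ref{lem:decomp} and replacing it with the nonnegative-on-all-counters loop supplied by the preceding claim (whose existence follows because block $j$ has length $N_j! \geq N_j \geq M(V, [d], M_j)$, so Lemma~\ref{lem:loop} applies with $S = [d]$). I correspondingly redefine $\alpha_j^i$ and $\gamma_j^i$ to be the pre- and post-loop portions of block $j$ (appending the trailing $b$-transition to $\gamma_j^i$ when $j < k+2$) and set $a_j^i := 1$. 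Since each new $\beta_j^i$ is nonnegative everywhere, the nonnegativity of $\beta_{i+1}^i$ on $[d] \setminus A_{i+1}$ is preserved after $A_{i+1}$ is recomputed from the updated supports.

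The second subtlety, and the main obstacle, is the requirement in~\ref{cond:clm_decomp_fifth} that $\beta_{i+1}^i$ is \emph{not} nonnegative on all counters. I plan to handle it by revisiting the loop-selection procedure inside block $i+1$: among loops in that block that are nonnegative on $[d] \setminus A_{i+1}$, I select one that is strictly negative on at least one counter in $A_{i+1}$. If no such loop exists, then every nonnegative-on-$[d]\setminus A_{i+1}$ loop in block $i+1$ is in fact nonnegative on all counters; in that case the ambient proof of Lemma~\ref{lem:first_tool} can construct $k+1$ distinct accepting runs directly by pumping such a loop in the short block, bypassing Claim~\ref{clm:decomp} altogether. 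Formalising this case split and aligning the selected $\beta_{i+1}^i$ with the pumping argument that uses the claim is the technical crux I expect to require the most care.
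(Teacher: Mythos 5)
Your setup (splitting off $\pi_i$, applying Lemma~\ref{lem:decomp} with $n=k+2$, and reading off conditions \ref{cond:clm_decomp_first}--\ref{cond:clm_decomp_third} plus the length bounds) matches the paper, but both of your repairs for the two hard conditions have genuine gaps. First, for condition~\ref{cond:clm_decomp_fourth} you discard the $\beta_j$ produced by Lemma~\ref{lem:decomp} and substitute the everywhere-nonnegative loops supplied by Lemma~\ref{lem:loop}, asserting that nonnegativity of $\beta_{i+1}^i$ on the complement of the recomputed support union ``is preserved.'' It is not: Condition~\ref{cond:five:lem:decomp} of Lemma~\ref{lem:decomp} guarantees only that the counters on which $\beta_{i+1}$ is negative lie in $\bigcup_{j\leq i}\support(\beta_j)$ for the \emph{original} loops, and the replacement loops have unrelated supports, so a counter on which $\beta_{i+1}$ is negative may escape the new union, violating condition~\ref{cond:clm_decomp_fifth}. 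This linkage is not cosmetic; it is exactly what the ambient pumping needs (Case~2 of Claim~\ref{clm:accepting} compensates each negative coordinate of $\beta_{i+1}^i$ by pumping an earlier $\beta_j^i$ that is strictly positive there, so the pumped loops must be the very ones whose supports cover the negativity). The paper therefore \emph{keeps} the original $\beta_j$ and proves they must already be nonnegative on all counters, by contradiction with $k$-ambiguity: if $\eff_l(\beta_j)<0$ for some $j\neq i+1$, the long $j$-th block also contains a fully nonnegative loop $\delta$ (Claim~\ref{clm:nonegative_loop}), the minimality condition~\ref{cond:seven:lem:decomp} forces $\alpha_j\beta_j$ to be a prefix of $\lambda\delta$, and trading $|\delta|$-many copies of $\beta_j$ against $|\beta_j|$-many copies of $\delta$ yields $k+1$ distinct accepting runs over a single word.

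Second, your treatment of the crux of condition~\ref{cond:clm_decomp_fifth} is wrong in its fallback case: if every loop in the short block that is nonnegative on $[d]\setminus A_{i+1}$ were nonnegative on \emph{all} counters, you claim one could ``construct $k+1$ distinct accepting runs directly by pumping such a loop.'' Pumping a single fully nonnegative loop produces one accepting run per (longer) word, never several runs over the same word, so no contradiction with $k$-ambiguity arises this way. The paper's actual argument is a \emph{language-membership} contradiction, not an ambiguity one: if $\beta_{i+1}^i$ were nonnegative on all counters, pumping it $(2N_{i+1}!-N_0!)/|\beta_{i+1}^i|$ times yields an accepting run over $v=a^{m_1}b\ldots a^{m_{k+2}}bu$ with $m_{i+1}=2N_{i+1}!$, and the constants are arranged so that all blocks are strictly increasing and $m_{k+2}=N_{k+2}!<2N_{k+2}!\leq f(u)$, hence $v\notin L_1$ although $V$ accepts it. This is precisely why the word $w_i$ carries the short block $a^{N_0!}$ at position $i+1$ and why Lemma~\ref{lem:first_tool} is stated with a function $f$ satisfying $\sup f=\omega$; without this step the claim, and with it the whole proof, does not go through.
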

   \begin{proof}
       Let us fix $i \in [k+1]$. Let $\pi_i$  be part of $\rho_i$ reading word $bu$. Then $\rho_i=\pi\pi_i$ for some run $\pi$. Observe, that $\pi$ satisfies conditions of Lemma \ref{lem:decomp}. Hence:
       $$\pi = \alpha_1\beta_1^{a_1}\gamma_1\alpha_2\beta_2^{a_2}\alpha_2'\ldots \alpha_1\beta_{k+2}^{a_{k+2}}\alpha_{k+2}'$$
       Let us for each $j \in [k+2]$ set $\gamma_j=\alpha_j'$. Therefore:
       $$\rho_i = \alpha_1(\beta_1)^{a_1}\gamma_1\alpha_2(\beta_2)^{a_2}\gamma_2\ldots\alpha_{k+2}(\beta_{k+2})^{a_{k+2}}\gamma_{k+2}\pi_i$$
       Observe, that for simplicity we drop upper index $i$.
       Clearly conditions \ref{cond:clm_decomp_first} and \ref{cond:clm_decomp_second} are satisfied.
       Now we show, that the other conditions are also satisfied. We start with conditions \ref{cond:clm_decomp_third} and \ref{cond:clm_decomp_fifth}, which are significantly simpler to prove than Condition \ref{cond:clm_decomp_fourth}.
       \myparagraph{Condition \ref{cond:clm_decomp_third}}
       Because of Lemma \ref{lem:decomp} we have to only show that $\beta_j^i \neq \varepsilon$. Observe, that the number of letter a in the $j$-th block equals at least $N_0! \geq N_0 = C$ (where $C$ is constant from Lemma \ref{lem:decomp}). Hence $\beta_j^i \neq \varepsilon$. 
       \myparagraph{Condition \ref{cond:clm_decomp_fifth}}
       Because of Lemma \ref{lem:decomp} we have that $\beta_{i+1}$ is a nonnegative loop on counters from $[d] \setminus \bigcup_{1 \leq j \leq i} \support(\beta_j)$. Moreover, we also have:
       $$|\beta_{i+1}| \leq C \leq N_0$$
       The only thing left is that $\beta_{i+1}$ is not a nonnegative loop on all the counters.
       The idea is, that if $\beta_{i+1}$ is nonnegative on all the counters we can create an accepting run $\rho_i'$, which accepts word $v \notin L_1$, which is a contradiction.
       Therefore assume, towards contradiction, that $\beta_{i+1}^i$ is nonnegative on all the counters. Then run:
$$\alpha_1(\beta_1)^{a_1}\gamma_1\alpha_2(\beta_2)^{a_2}\gamma_2\ldots\alpha_{i+1}(\beta_{i+1})^{a_{i+1}+b}\gamma_{i+1}\ldots \alpha_{k+2}(\beta_{k+2})^{a_{k+2}}\gamma_{k+2}\pi_i$$
       where $b = \frac{2N_{i+1}!-N_0!}{|\beta_{i+1}^i|}$ is also an accepting run. Observe, that $b$ is well-defined because $|\beta_{i+1}^i| \leq N_0 \leq N_{i+1}$.
       This run reads the following word $v$:
       $$v = a^{m_1}ba^{m_2}b\ldots a^{m_{k+2}}bu$$
       where for $j \in [k+2] \setminus \{i+1\}$ we have $m_j = N_j!$ and $m_{i+1} = 2N_i!$. Observe that for $j \in [i-1] \cup [i+2, k+1]$ we have $N_{j}! < N_{j+1}!$. Moreover, we have $N_i! < 2N_{i+1}!$ and $2N_{i+1}! \leq (N_{i+1} + 1)! < N_{i+1}!$. Finally $N_{k+2}! < 2N_{k+2}! \leq f(u)$. Hence $v \notin L_1$ and therefore the run reading $v$ can not be an accepting run, which is a contradiction and therefore $\beta_{i+1}^i$ can not be nonnegative on all the counters. 
       \myparagraph{Condition \ref{cond:clm_decomp_fourth}}
       Observe, that for each $j \in [k+2] \setminus \{i+1\}$ by Lemma \ref{lem:decomp}:
       $$|\beta_j| \leq C \leq N_0 \leq N_j$$
       Hence, we have to prove that for each $j \in [k+2] \setminus \{i+1\}$ it occurs that $\beta_j$ is a nonnegative loop. Assume, towards contradiction, that there exists $j \in [k+2] \setminus \{i+1\}$ such that $\beta_j$ is not a nonnegative loop and let $j$ be the minimal such. We aim to argue that in this case, we would have at least $k+1$ different runs over the same word. 
       The idea is, that in the $j$th block of letters $a$ we also have another loop, which is nonnegative on all the counters and therefore if $\beta_j$ is not a nonnegative loop we have too many degrees of freedom and we are able to construct $k+1$ different runs over the same word.
       For constructing these runs, we need to be able to apply pumping to the run $\rho_i$. 
       For shortcut, for all $l \in \N$ and $i \in [k+2]$ we define $\psi_i^l = \alpha_j(\beta_j)^l\gamma_j$. We present the core of this pumping technique in the following claim:
       \begin{claim}\label{clm:pumping}
           For each $b_j \in \N$ there exist $ b \in \N$ such that for each $l \in [d]$ and $b_1, b_2, \ldots, b_{j-1} \geq b$
           \begin{multline*}
           \eff_l(\psi_1^{a_1+b_1}\psi_2^{a_2+b_2}\ldots\psi_{j-1}^{a_{j-1}+b_{j-1}}\alpha_j(\beta_j)^{b_j}) \geq  \eff_l(\psi_1^{a_1}\psi_2^{a_2}\ldots\psi_{j-1}^{a_{j-1}}\alpha_j)
           \end{multline*}
       \end{claim}
       \begin{proof}
       Recall that $m$ is the maximal norm of a transition and set $b=b_jm|\beta_j^i|$ and let us fix $l \in [d]$.
       If $\eff_l(\beta_j) \geq 0$ the inequality from the Claim \ref{clm:decomp} holds.
       Hence we can assume $\eff_l(\beta_j) < 0$. Because $\beta_j$ is a nonnegative loop on the counters from $[d] \setminus \bigcup_{1 \leq m < j} \support(\beta_m^i)$ there exists $n < j$ such that $\eff_l(\beta_n) > 0$. Hence
       \begin{multline*}
           \eff_l(\psi_1^{a_1+b_1}\psi_2^{a_2+b_2}\ldots\psi_{j-1}^{a_{j-1}+b_{j-1}}\alpha_j(\beta_j)^{b_j}) \geq \\ \eff_l(\psi_1^{a_1}\psi_2^{a_2}\ldots\psi_{j-1}^{a_{j-1}}\alpha_j) + b\eff_l(\beta_n^i) + b_j\eff_l(\beta_j^i) \geq \\
           \eff_l(\psi_1^{a_1}\psi_2^{a_2}\ldots\psi_{j-1}^{a_{j-1}}\alpha_j) + b_jm|\beta_j^i| - b_jm|\beta_j^i| =
           \eff_l(\psi_1^{a_1}\psi_2^{a_2}\ldots\psi_{j-1}^{a_{j-1}}\alpha_j)
           \end{multline*}
       \end{proof}
       First, we observe, the following fact about $\psi_j^{a_j}$, which will be useful for defining $k+1$ different runs over the same word:
       \begin{claim}\label{clm:nonegative_loop}
           There exist runs $\lambda, \delta, \lambda'$ such that $w(\lambda\delta) \in L(a^*)$, $\delta$ is a nonnegative loop and $\psi_j^{a_j} = \lambda\delta^m\lambda'$ for $m \in \N_+$.
       \end{claim}
       \begin{proof}
           Observe, that for each $l \in [d]$ we have:
           \begin{multline*}
           \eff_l(\psi_1^{a_1}\psi_2^{a_2}\ldots\psi_{j-1}^{a_{j-1}}) \leq 
           m|\psi_1^{a_1}\psi_2^{a_2}\ldots\psi_{j-1}^{a_{j-1}}|  \leq \Sigma_{m=1}^{j-1}m(N_m!+1) \end{multline*}
           Moreover $$|\psi_j^{a_j}| \geq N_j! > N_j$$
           Moreover
           $$\psi_1^{a_1}\psi_2^{a_2}\ldots\psi_{j-1}^{a_{j-1}}$$
           is a run. Hence because of the definition of $N_j$ there exist runs $\lambda$, $\delta$, $\lambda'$ such that $\lambda' \neq \varepsilon$, $\delta$ is a nonnegative loop and there exist $m \in \N+$ such that $\psi_j^{a_i} = \lambda\delta^m\lambda'$. Hence and because $\lambda' \neq \varepsilon$ we have that $w(\lambda\gamma) \in L(a^*)$.
       \end{proof} 
       Because $\delta$ is a nonnegative loop run
       $$\psi_1^{a_1}\psi_2^{a_2}\ldots\psi_{j-1}^{a_{j-1}}\lambda\delta^{m+1}\lambda'\psi_{j+1}^{a_{j+1}}\ldots\psi_{k+2}^{a_{k+2}}\pi_i$$
       is an accepting run. Because of Condition \ref{cond:seven:lem:decomp} from Lemma \ref{lem:decomp} we know that $\alpha_j\beta_j$ is a prefix of $\lambda\delta$ (hence $\lambda\delta^m = \alpha_j\beta_j\pi$ for some run $\pi$).
       Therefore this run can be written as:
       $$\psi_1^{a_1}\psi_2^{a_2}\ldots\psi_{j-1}^{a_{j-1}}\alpha_j\beta_j\pi\delta\lambda'\psi_{j+1}^{a_{j+1}}\ldots\psi_{k+2}^{a_{k+2}}\pi_i$$
       Now we define runs $\phi_n$ for $n \in [k+1]$.
       $$\phi_n = \psi_1^{a_1+b}\psi_2^{a_2+b}\ldots\psi_{j-1}^{a_{j-1}+b}\alpha_j(\beta_j)^{1+|\delta|n}\pi(\delta)^{1+(k+1-n)|\beta_j|}\lambda'\psi_{j+1}^{a_{j+1}}\ldots\psi_{k+2}^{a_{k+2}}\pi_i$$
       where $b$ is the maximal $b$ got from application of Claim \ref{clm:pumping} to $b_j = |\delta|, 2|\delta|, \ldots, (k+1)|\delta|$.
       Now it is enough to prove three claims \ref{clm:phi_accepting}, \ref{clm:phi_same_word} and \ref{clm:phi_different} to reach a contradiction with VASS $V$ being $k$-ambiguous.
       \begin{claim}\label{clm:phi_accepting}
       For each $n \in [k+1]$ we have that $\phi_n$ is an accepting run.   
       \end{claim}
       \begin{proof}
       Observe, that because $\beta_1^i, \beta_2^i, \ldots, \beta_{j-1}^i$ and $\delta$ are nonnegative it is enough to prove for each $l \in [d]$:
       $$\eff_l(\psi_1^{a_1+b}\psi_2^{a_2+b}\ldots\psi_{j-1}^{a_{j-1}+b}\alpha_j(\beta_j)^{|\delta|n}) \geq \eff_l(\psi_1^{a_1}\psi_2^{a_2}\ldots\psi_{j-1}^{a_{j-1}}\alpha_j)$$
       We get this inequality directly from Claim \ref{clm:pumping}.
       \end{proof}
       \begin{claim}\label{clm:phi_same_word}
       For each $n,m  \in [k+1]$ such that $n \neq m$ we have that $w(\phi_n) = w(\phi_m)$.
       \end{claim}
       \begin{proof}
       \Wlog let $n < m$ and observe that it is enough to show, that runs $$w(\pi\delta^{(m-n)|\beta_j|})= w((\beta_j)^{|\delta|(m-n)}\pi)$$
       Because of Claim \ref{clm:decomp}, Claim \ref{clm:nonegative_loop} and the way how $\pi$ was set (we have $\alpha_j^i\beta_j^i\pi = \lambda\delta^m$) we have that $w(\pi), w(\delta), w(\beta_j) \in L(a^*)$. 
       Hence equality $|\pi\delta^{(m-n)|\beta_j|}| = |(\beta_j)^{|\delta|(m-n)}\pi|$ concludes the proof.
       \end{proof}
       \begin{claim}\label{clm:phi_different}
           For each $n,m \in [k+1]$ such that $i \neq j$ we have $\phi_n \neq \phi_m$.
       \end{claim}
       \begin{proof}
       \Wlog let $n < m$ and observe that it is enough to show
       $$\pi\delta^{(m-n)|\beta_j^i|} \neq (\beta_j^i)^{|\delta|(m-n)}\pi$$
       Recall, that we assumed, that $\beta_j$ is not a nonnegative loop. Hence there exists $l \in [d]$ such that $\eff_l(\beta_j) < 0$.
       Let us fix such $l$. It is enough to prove, that 
       $$\eff_l(\pi\delta^{(m-n)|\beta_j^i|}) \neq \eff_l((\beta_j^i)^{|\delta|(m-n)}\pi)$$
       Observe, that from the nonnegativity of $\delta$ we have:
       $$\eff_l(\pi\delta^{(m-n)|\beta_j^i|}) \geq \eff_l(\pi)$$
       Moreover, we have:
       $$\eff_l((\beta_j^i)^{|\delta|(m-n)}\pi) < \eff_l(\pi)$$
       Hence clearly 
       $$\eff_l(\pi\delta^{(m-n)|\beta_j^i|}) \neq \eff_l((\beta_j^i)^{|\delta|(m-n)}\pi)$$
       \end{proof}
       Hence we reached a contradiction with VASS $V$ being \kamb and therefore for each $j \in [k+2] \setminus \{i+1\}$ we have that $\beta_j^i$ is a nonnegative loop. 
   \end{proof}
   Having these decompositions our goal is to create $k+1$ different runs over word $w = a^{n_1}ba^{n_2}b\ldots a^{n_{k+2}}bu$ where for each $j \in [k+2]$ we have $n_{j} = 2m^{k+3-j}\Pi_{l=j}^{k+2}N_l!$.
   Therefore for $i \in [k+1]$ we define the following runs:
   $$\rho_i' = \alpha_1^i(\beta_1^i)^{a_1^i+b_1^i}\gamma_1^i\alpha_2^i(\beta_2^i)^{a_2^i+b_2^i}\gamma_2^i\ldots\alpha_{k+2}^i(\beta_{k+2}^i)^{a_{k+2}^i+b_{k+2}^i}\gamma_{k+2}^i\pi_i$$
   where
   $$b_j^i = \begin{cases}
   \frac{n_j - N_0!}{|\beta_j^i|},& j = i + 1\\
   \\
   \frac{n_j - N_j!}{|\beta_j^i|},              & \text{otherwise}
\end{cases}$$
   Observe, that $b_j^i \in \N_+$. This is because
   $$n_j \geq N_{k+2}! \geq N_0!$$
   Moreover, for $j = i + 1$ we have $|\beta_j^i| \leq N_{0}$ and 
   therefore $|\beta_j^i|$ divides $n_j - N_0!$. Similarly when $j \neq i + 1$ we have $|\beta_j^i| \leq N_j$ and therefore $|\beta_j^i|$ divides $n_j - N_j!$. 
   
   Now we have to prove three things to conclude, that we have a contradiction with the assumption that $V$ is a \kamb VASS.
   Firstly, in Claim \ref{clm:w}, we show, that each run reads the word $w$.
   Secondly, in Claim \ref{clm:phi_accepting}, we show, that each $\rho_i'$ is a valid and accepting run.
   Finally, in Claim \ref{clm:different}, we show, that there is no $i, j \in [k+1]$ such that $i \neq j$ and $\rho_i' = \rho_j'$.
   \begin{claim}\label{clm:w}
       For each $i \in [k+1]$ run $\rho_i'$ reads a word $w$ such that
       $$w = a^{n_1}ba^{n_2}b\ldots a^{n_{k+2}}bu$$
       where
       $$n_j = 2m^{k+3-j}\Pi_{l= j}^{k+2}N_l!$$
   \end{claim}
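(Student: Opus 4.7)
The plan is to verify the claim by a direct computation on the word read by each of the blocks of $\rho_i'$, exploiting the structural conditions on the decomposition given by Claim~\ref{clm:decomp}.

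First, I would unpack what the original run $\rho_i$ reads in each block. By construction $\rho_i$ is an accepting run over $w_i = a^{N_1!}b\cdots a^{N_i!}ba^{N_0!}ba^{N_{i+2}!}b\cdots a^{N_{k+2}!}bu$, so by Conditions~\ref{cond:clm_decomp_first} and~\ref{cond:clm_decomp_second} of Claim~\ref{clm:decomp} the factor $\alpha_j^i(\beta_j^i)^{a_j^i}\gamma_j^i$ reads exactly $a^{M_j}b$ for $j < k+2$ (with the final $b$ contributed by $\gamma_j^i$) and $a^{M_{k+2}}$ for $j = k+2$, where $M_{i+1} = N_0!$ and $M_j = N_j!$ for $j \neq i+1$. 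Moreover $\pi_i$ reads the suffix $bu$.

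Next I would look at $\rho_i'$ block by block. The only change from $\rho_i$ to $\rho_i'$ is that the exponent $a_j^i$ of $\beta_j^i$ is increased to $a_j^i + b_j^i$, adding an extra $b_j^i \cdot |\beta_j^i|$ letters. Since $w(\beta_j^i) \in L(a^*)$ by Condition~\ref{cond:clm_decomp_second}, all these added letters are $a$'s, and no new $b$'s are introduced. Hence the $j$-th block of $\rho_i'$ reads $a^{M_j + b_j^i |\beta_j^i|}b$ (or $a^{M_{k+2} + b_{k+2}^i |\beta_{k+2}^i|}$ for the last one). By the definition of $b_j^i$ we have $b_j^i |\beta_j^i| = n_j - M_j$ in both cases ($j = i+1$ and $j \neq i+1$), so the number of $a$'s in the $j$-th block equals $n_j$.

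Finally, since $\pi_i$ is untouched and still reads $bu$, concatenating the contributions of the $k+2$ blocks and $\pi_i$ gives exactly
\[
w(\rho_i') \;=\; a^{n_1} b\, a^{n_2} b \cdots a^{n_{k+2}} b\, u \;=\; w,
\]
which is what we wanted to show. There is no real obstacle here: the only subtlety is that $b_j^i$ is a nonnegative integer, but this is precisely the divisibility/size observation made immediately after the definition of $b_j^i$ (using $|\beta_j^i| \leq N_0$ when $j = i+1$ and $|\beta_j^i| \leq N_j$ otherwise, together with $n_j \geq N_{k+2}! \geq N_j!,N_0!$), so the resulting $\rho_i'$ is a syntactically well-formed run.
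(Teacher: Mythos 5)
Your proposal is correct and matches the paper's own proof essentially step for step: both argue that $\rho_i'$ differs from $\rho_i$ only by extra repetitions of the loops $\beta_j^i$, which by Condition~\ref{cond:clm_decomp_second} of Claim~\ref{clm:decomp} read only letters $a$, and then compute $b_j^i|\beta_j^i| = n_j - N_0!$ (for $j=i+1$) or $n_j - N_j!$ (otherwise) to conclude each block contains exactly $n_j$ letters $a$. Your additional remark on the integrality of $b_j^i$ appears in the paper immediately before the claim rather than inside its proof, so nothing is missing.
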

   \begin{proof}
       Let us fix $i \in [k+1]$. Observe, that because of Claim \ref{clm:decomp} for each $j \in [k+2]$ we have that $(\beta_j^i)^{b_j^i}$ reads word $v_j^i$ where:
       $$
       v_j^i = \begin{cases}
           
      a^{n_j - N_0!},& j = i + 1\\
   \\
   a^{n_j - N_j!},              & \text{otherwise}
       \end{cases}
   $$
       Because $\rho_i'$ differs from $\rho_i$ only by additional repetition of $\beta_1^i, \beta_2^i, \ldots, \beta_{k+2}^i$ we have that $\rho_i'$ reads word:
       $$a^{m_1}ba^{m_2}b\ldots a^{m_{k+2}}bu$$
       where
       $$m_j =  \begin{cases}
           |v_j^i| + N_0!,& j = i + 1\\
   \\
           |v_j^i| + N_j!,              & \text{otherwise}
       \end{cases} = n_j$$
   This concludes the proof.
   \end{proof}
   \begin{claim}\label{clm:accepting}
       For each $i \in [k+1]$ run $\rho_i'$ is a valid and accepting run of $V$.
   \end{claim}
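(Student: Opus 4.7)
The plan is to compare $\rho_i'$ with $\rho_i$ position by position and show that at every corresponding position the counter vector of $\rho_i'$ dominates (coordinate-wise) that of $\rho_i$. Since $\rho_i$ is an accepting run, this immediately yields both validity (every intermediate counter value stays nonnegative, so every transition can fire) and acceptance (the final configuration covers the same accepting configuration as $\rho_i$). The bookkeeping is simple: $\rho_i'$ is obtained from $\rho_i$ by inserting $b_j^i$ extra copies of the loop $\beta_j^i$ inside block $j$, so at any position $p$ of $\rho_i'$ the counter vector equals the counter vector at the corresponding position of $\rho_i$ plus $\sum_j c_j \cdot \eff(\beta_j^i)$, where $c_j \leq b_j^i$ counts how many extra copies of $\beta_j^i$ have already been executed by $p$.

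I would split the dominance argument into three cases according to the block containing $p$. Before the extra copies of $\beta_{i+1}^i$ start, only $\beta_j^i$ for $j \leq i$ have been extra-iterated; by condition~\ref{cond:clm_decomp_fourth} of Claim~\ref{clm:decomp} these are loops nonnegative on every counter, so the correction $\sum_{j \leq i} c_j \cdot \eff(\beta_j^i)$ is nonnegative. Symmetrically, for positions in blocks $j > i+1$ the loops $\beta_j^i$ are again nonnegative on all counters, so those additional iterations can only help. The delicate case is inside block $i+1$: the correction is $\sum_{j \leq i} b_j^i \eff(\beta_j^i) + c_{i+1}\eff(\beta_{i+1}^i)$. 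On counters outside $A_i := \bigcup_{j \leq i} \support(\beta_j^i)$, condition~\ref{cond:clm_decomp_fifth} says $\beta_{i+1}^i$ is nonnegative, so the whole sum is nonnegative. On a counter $l \in A_i$ one can pick $j^* \leq i$ with $\eff_l(\beta_{j^*}^i) \geq 1$, and the positive contribution of at least $b_{j^*}^i$ must beat the total decrease from the extra copies of $\beta_{i+1}^i$, which is at most $c_{i+1} \cdot m \cdot |\beta_{i+1}^i| \leq b_{i+1}^i \cdot m \cdot N_0$.

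The arithmetic check is the only real obstacle, but it is purely routine given the chosen $n_j = 2m^{k+3-j}\prod_{l=j}^{k+2} N_l!$. From $n_{j^*} \geq 2N_{j^*}!$ one gets $b_{j^*}^i \geq (n_{j^*}-N_{j^*}!)/N_{j^*} \geq n_{j^*}/(2N_{j^*})$, while $b_{i+1}^i \leq n_{i+1}/|\beta_{i+1}^i| \leq n_{i+1}$, so the required inequality $b_{j^*}^i \geq m N_0 \cdot b_{i+1}^i$ reduces to $n_{j^*} \geq 2 m N_0 N_{j^*} \cdot n_{i+1}$. The ratio $n_{j^*}/n_{i+1} = m^{i-j^*+1}\prod_{l=j^*}^{i} N_l!$ easily dominates $2 m N_0 N_{j^*}$ using $N_l \geq N_0 + 2 \geq 3$ and $l \geq 1$. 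Once this estimate is verified, the coordinate-wise domination of $\rho_i'$ over $\rho_i$ holds throughout every prefix, so all transitions are firable and the final configuration of $\rho_i'$ covers the accepting configuration reached by $\rho_i$, completing the proof.
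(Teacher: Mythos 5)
Your proposal is correct and takes essentially the same route as the paper: reduce validity and acceptance to coordinate-wise dominance of $\rho_i'$ over $\rho_i$, observe that only the pumped $\beta_{i+1}^i$ can hurt, split on whether the counter lies in $S = [d] \setminus \bigcup_{j \leq i} \support(\beta_j^i)$, and in the bad case let the gain $b_{j^*}^i$ from an earlier strictly positive loop absorb the loss of at most $m|\beta_{i+1}^i| b_{i+1}^i$. Your arithmetic verification (via the ratio $n_{j^*}/n_{i+1}$ and the bound $b_{i+1}^i \leq n_{i+1}$) differs only cosmetically from the paper's estimate $\min_{j \in [i]} b_j^i \geq (N_j-1)!(m n_{j+1}-1) \geq m n_{i+1} - m N_0!$, and both checks go through.
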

   \begin{proof}
       Let us fix $i \in [k+1]$. Recall that by Claim \ref{clm:decomp} for each $j \in [k+2] \setminus \{i+1\}$ we have that $\beta_j^i$ is a nonnegative loop. To prove that $\rho_i'$ is an accepting run and that it is also valid, which means no counter drops below zero, it is enough to prove, that for each $l \in [d]$ we have:
       \begin{multline}\label{eq:valid_run}
       \eff_l(\alpha_1^i(\beta_1^i)^{a_1^i}\gamma_1^i\alpha_2^i(\beta_2^i)^{a_2^i}\gamma_2^i\ldots\alpha_{i+1}^i(\beta_{i+1}^i)^{a_{i+1}^i}) \leq \\
       \eff_l(\alpha_1^i(\beta_1^i)^{a_1^i+b_1^i}\gamma_1^i\alpha_2^i(\beta_2^i)^{a_2^i+b_2^i}\gamma_2^i\ldots\alpha_{i+1}^i(\beta_{i+1}^i)^{a_{i+1}^i+b_{i+1}^i})
       \end{multline}
       We have two cases. Let 
       $$S = [d] \setminus \bigcup_{1 \leq j \leq i} \support(\beta_j^i)$$
       \myparagraph{Case 1: $l \in S$}
       By Claim \ref{clm:decomp} we have that $\beta_{i+1}^i$ is nonnegative on counter $l$. Hence all $\beta_1^i, \beta_2^i, \ldots, \beta_{i+1}^i$ are nonnegative on this counter. From this inequality \ref{eq:valid_run} follows. \\
       \myparagraph{Case 2: $l \notin S$}
       Therefore we have $l \in \bigcup_{1 \leq j \leq i} \support(\beta_j^i)$. Hence, at least one of $\beta_1^i, \beta_2^i, \ldots, \beta_i^i$ is strictly positive on this counter, and the others are nonnegative. Because $|\eff_l(\beta_{i+1}^i)| \leq m|\beta_{i+1}^i|$ it is enough to prove:
       $$\min_{j \in [i]} b_j^i \geq m|\beta_{i+1}^i|b_{i+1}^i = mn_{i+1} - mN_0!$$
       Observe that:
       \begin{multline*}
           \min_{j \in [i]} b_j^i = \min_{j \in [i]} \frac{n_j - N_j!}{|\beta_j^i|} \geq
           \min_{j \in [i]} (N_j-1)!(mn_{j+1} - 1) \geq
           mn_{j+1} - 1 \geq mn_{i+1} - mN_0!
       \end{multline*}
       This concludes this case and the proof of the whole claim.
   \end{proof}
   \begin{claim}\label{clm:different}
       For each $i, j \in [k+1]$ such that $i \neq j$ we have $\rho_i' \neq \rho_j'$.
   \end{claim}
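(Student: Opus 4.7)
The plan is as follows. Assume without loss of generality that $i < j$, and suppose for contradiction that $\rho_i' = \rho_j'$ as sequences of transitions. The strategy is to isolate the sub-run reading the $(i+1)$-th block of letters $a$ and derive a contradiction by bounding its effect on a carefully chosen counter using the two different decompositions.

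Since both $\rho_i'$ and $\rho_j'$ read the word $w$, whose $(i+1)$-th $a$-block has length $n_{i+1}$, the sub-runs responsible for reading this block coincide; call this common sub-run $T$. From the decomposition of $\rho_i'$ we have $T = \alpha_{i+1}^i (\beta_{i+1}^i)^{a_{i+1}^i + b_{i+1}^i} \hat\gamma_{i+1}^i$, where $\hat\gamma_{i+1}^i$ denotes $\gamma_{i+1}^i$ with its terminating $b$-transition removed (we may split off this transition since $w(\gamma_{i+1}^i) \in L(a^*b)$). Symmetrically, from the decomposition of $\rho_j'$, $T = \alpha_{i+1}^j (\beta_{i+1}^j)^{a_{i+1}^j + b_{i+1}^j} \hat\gamma_{i+1}^j$. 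Crucially, since $i < j$ we have $i+1 \neq j+1$, so block $i+1$ is non-special in $\rho_j$, and by Condition~\ref{cond:clm_decomp_fourth} of Claim~\ref{clm:decomp} the loop $\beta_{i+1}^j$ is nonnegative on all counters.

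By Condition~\ref{cond:clm_decomp_fifth} the loop $\beta_{i+1}^i$ is not a nonnegative loop on all counters, so I can pick some $l \in [d]$ with $\eff_l(\beta_{i+1}^i) \leq -1$. Using the $\rho_i'$-factorisation of $T$, together with the length bounds $|\alpha_{i+1}^i|, |\hat\gamma_{i+1}^i| \leq C$, I get $\eff_l(T) \leq -(a_{i+1}^i + b_{i+1}^i) + 2mC \leq -b_{i+1}^i + 2mC$. Using the $\rho_j'$-factorisation together with $\eff_l(\beta_{i+1}^j) \geq 0$, I get $\eff_l(T) \geq -2mC$. Since $T$ is a single sub-run, both bounds must apply and yield $b_{i+1}^i \leq 4mC$. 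However $b_{i+1}^i \geq (n_{i+1} - N_0!)/N_0$, and the choice of $n_{i+1}$ makes this lower bound much larger than $4mC$, which is the desired contradiction. The main obstacle is spotting the right asymmetry between the two factorisations of $T$: one yields a very negative upper bound on $\eff_l(T)$ coming from the non-all-nonnegative special loop, while the other yields an almost-zero lower bound coming from the all-nonnegative non-special loop; once this asymmetry is identified, the argument reduces to routine arithmetic verifying that the pumping multiplier $b_{i+1}^i$ dominates the bounded constant $4mC$.
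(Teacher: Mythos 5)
Your proposal is correct and takes essentially the same route as the paper's proof: both isolate the common sub-run reading the $(i+1)$-th block of letters $a$, pick a counter $l$ with $\eff_l(\beta_{i+1}^i)<0$ (available by Condition~\ref{cond:clm_decomp_fifth} of Claim~\ref{clm:decomp}), and exploit that $\beta_{i+1}^j$ is nonnegative on all counters since $i+1\neq j+1$ (Condition~\ref{cond:clm_decomp_fourth}), so that the two factorisations force incompatible values of $\eff_l$ once the pumping exponent $b_{i+1}^i$ is large. The paper expresses the clash via the thresholds $<-mN_{i+1}!$ versus $\geq -mN_{i+1}!$ while you derive $b_{i+1}^i\leq 4mC$ and contradict the size of $n_{i+1}$, but this is only a cosmetic difference in the arithmetic.
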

   \begin{proof}
       Without loss of generality assume that $i < j \leq k+1 $. Observe, that it is enough to prove, that there exists $l \in [d]$ such that 
       \begin{multline}\label{eq:different_runs}
       \eff_l(\alpha_{i+1}^i(\beta_{i+1}^i)^{a_{i+1}^i + b_{i+1}^i}\gamma_{i+1}^i) \neq \eff_l(\alpha_{i+1}^j(\beta_{i+1}^j)^{a_{i+1}^j+b_{i+1}^j}\gamma_{i+1}^i)
       \end{multline}
       Let us take $l \in [d]$ such that $\eff_l(\beta_{i+1}^i) < 0$. Such $l$ exists because of Claim \ref{clm:decomp}. Now we have:
       \begin{multline}\label{eq:different_runs_first}
       \eff_l(\alpha_{i+1}^i(\beta_{i+1}^i)^{a_{i+1}^i + b_{i+1}^i}\gamma_{i+1}^i) = \eff_l(\alpha_{i+1}^i(\beta_{i+1}^i)^{a_{i+1}^i}\gamma_{i+1}^i) + b_{i+1}^i\eff_l(\beta_{i+1}^i) \leq \\
       m|\alpha_{i+1}^i(\beta_{i+1}^i)^{a_{i+1}^i}\gamma_{i+1}^i| - b_{i+1}^i = m(N_0! + 1) - \frac{n_{i+1} - N_0!}{|\beta_{i+1}^i|} \leq \\
       m(N_0! + 1) - \frac{n_{i+1} - N_0!}{N_0} \leq m(N_0! + 1) - m\frac{N_{i+1}!}{N_0}n_{i+2} + (N_0-1)! \leq \\ m(N_0! + 1) - 2m^2N_{k+2}! + (N_0-1)! \leq (m+1)N_{k+2}!-2m^2N_{k+2}! = (1-m)(2m+1)N_{k+2}! <\\
       -mN_{k+2}! \leq -mN_{i+1}!
       \end{multline}
       Moreover we have:\begin{multline}\label{eq:different_runs_second}
\eff_l(\alpha_{i+1}^j(\beta_{i+1}^j)^{a_{i+1}^j+b_{i+1}^j}\gamma_{i+1}^i) \geq \eff_l(\alpha_{i+1}^j\gamma_{i+1}^i) \geq -m|\alpha_{i+1}^j\gamma_{i+1}^i| \geq -m(N_{i+1}!)
       \end{multline}
       From inequalities \ref{eq:different_runs_first} and \ref{eq:different_runs_second} we get property \ref{eq:different_runs}, which concludes the proof.
   \end{proof}
   Therefore we reached a contradiction with the fact that $V$ is a \kamb VASS and hence $L_1$ is not recognised by a \kamb VASS, which finishes the proof of Lemma~\ref{lem:first_tool}.
\end{toappendix}

\subsection{Semilinear image}\label{subsec:image}
Now, we develop the second tool for showing that language is not recognised by a $k$-ambiguous VASS. 
Before formulating the tool we have to provide a few definitions.
For any language $L \subseteq \{a,b\}^*$ such that for each $w \in L$ we have $\#_b(w) = l$ for some $l \in \N$ we define
$$im(L) = \{(a_1, a_2, \ldots, a_{l+1}) \mid a_1, a_2, \ldots, a_{l+1} \in \N, a^{a_1}ba^{a_2}b\ldots ba^{a_{l+1}} \in L\}$$
Given a base vector $b \in \Z^d$ and a finite set of period vectors $P = \{p_1, \ldots, p_n\} \subseteq \Z^d$, the linear set $L(b, P)$ is defined as
$$L(b, P) = \{b + a_1p_1 + \ldots + a_np_n \mid a_i \in \N, 1 \leq i \leq n\}$$
A semi-linear set is a finite union of linear sets. 
Now we are ready to formulate the second tool.
\begin{lemma}\label{lem:image}
Let $L \subseteq \{a,b\}^*$ be a language satisfying:
\begin{itemize}
    \item $L$ is recognised by $k$-ambiguous VASS $V$.
    \item There exists $n \in \N$ such that for each $w \in L$ we have $\#_b(w) = n$.
\end{itemize}
Then $im(L)$ is a semilinear set.
\end{lemma}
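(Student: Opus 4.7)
My plan is to apply the decomposition Lemma~\ref{lem:decomp} to every accepting run of the given $k$-ambiguous VASS $V$ and express $im(L)$ as a finite union of affine images of Presburger-definable sets. For $w = a^{m_1} b a^{m_2} b \cdots b a^{m_{n+1}} \in L$ and an accepting run $\rho$ over $w$, Lemma~\ref{lem:decomp} (applied with its parameter set to $n+1$, the number of $a$-blocks) gives a decomposition $\rho = \alpha_1\beta_1^{a_1}\alpha_1'\cdots\alpha_{n+1}\beta_{n+1}^{a_{n+1}}\alpha_{n+1}'$ in which each piece has length at most the constant $C$. Since $V$ is $\varepsilon$-free and $\rho$ reads the entire word, this decomposition covers all of $\rho$. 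I call a choice of initial configuration $c_0 \in I$ together with a tuple $(\alpha_j, \beta_j, \alpha_j')_{j \in [n+1]}$ of bounded-length transition sequences satisfying the syntactic conditions of Lemma~\ref{lem:decomp} a \emph{template}; the set $\mathcal{T}$ of templates is finite since $I$ and the transition set are finite and $C$ is fixed.

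For each $T \in \mathcal{T}$ set
$$S_T = \{(a_1, \ldots, a_{n+1}) \in \N^{n+1} : a_j \geq 1 \text{ and } \alpha_1\beta_1^{a_1}\alpha_1' \cdots \alpha_{n+1}\beta_{n+1}^{a_{n+1}}\alpha_{n+1}' \text{ is an accepting run from } c_0\}.$$
The main obstacle is showing that $S_T$ is semilinear. The counter vector reached at any fixed checkpoint of the template is an affine function of $(a_1, \ldots, a_{n+1})$, because $\eff(\beta_j^{a_j}) = a_j\cdot\eff(\beta_j)$. Within a block $\beta_j^{a_j}$ and each coordinate $l$, the minimum counter value during the $a_j$ iterations is attained either in the first iteration (if $\eff(\beta_j)_l \geq 0$) or in the last iteration (if $\eff(\beta_j)_l < 0$), and in either case depends linearly on $(a_1, \ldots, a_j)$. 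Thus the constraint ``all intermediate counter values remain $\geq 0$'' reduces to a finite conjunction of linear inequalities in $(a_1, \ldots, a_{n+1})$, and the acceptance condition ``the final configuration dominates some $(q, v') \in F$'' becomes a finite disjunction of such inequalities. Hence $S_T$ is a Presburger-definable subset of $\N^{n+1}$, and therefore semilinear.

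Finally, define the affine map $\phi_T : \N^{n+1} \to \N^{n+1}$ by $\phi_T(a_1, \ldots, a_{n+1})_j = |\alpha_j| + a_j|\beta_j| + |\alpha_j'|$, which sends multiplicities to block lengths of the word read. Then $\phi_T(S_T)$ is semilinear as the affine image of a semilinear set. Since Lemma~\ref{lem:decomp} guarantees that every accepting run of $V$ over a word of $L$ decomposes according to some template in $\mathcal{T}$, we obtain
$$im(L) = \bigcup_{T \in \mathcal{T}} \phi_T(S_T),$$
which is a finite union of semilinear sets and hence semilinear. The $k$-ambiguity hypothesis enters only through the invocation of Lemma~\ref{lem:decomp}; once the bounded decomposition is available, the rest is a routine Presburger-arithmetic argument.
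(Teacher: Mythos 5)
Your proposal is correct and takes essentially the same route as the paper's proof: both apply Lemma~\ref{lem:decomp} with parameter $n+1$, reduce to finitely many fixed choices of the bounded-length pieces $\alpha_j, \beta_j, \alpha_j'$ (together with initial and accepting configurations), encode validity and acceptance of the parametrised run as a system of linear inequalities in $(a_1,\ldots,a_{n+1})$ using exactly the same observation that within $\beta_j^{a_j}$ the per-coordinate minimum is attained in the first or last iteration, and conclude by semilinearity of solution sets of linear Diophantine systems together with closure under affine images and finite unions. The only cosmetic difference is that you handle the accepting configuration via a disjunction over $F$ rather than fixing it as part of the template, which changes nothing.
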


\begin{proof}
The proof is based on Lemma~\ref{lem:decomp} and the fact that set of solution of a system of linear Diophantine inequalities is semilinear.
Notice first, that $L$ satisfies conditions of Lemma~\ref{lem:decomp}. 
    Therefore we can apply Lemma \ref{lem:decomp} to $L$ and $n+1$ and decompose each accepting run in $V$ in the way presented in Lemma \ref{lem:decomp}. 
    Let us fix some $\alpha_1, \ldots, \alpha_{n+1}$, $\alpha_1', \ldots, \alpha_{n+1}'$ and $\beta_1, \ldots, \beta_{n+1}$, initial configuration $q(c)$ and accepting configuration $p(f)$. 
    We call a run $\rho$ accepting with respect to $q(c)$ and $p(f)$ if $\rho$ is an accepting run of the VASS starting in $q(c)$ and ending in configuration $p(c')$ such that $c' \geq f$. 
    Let $K$ be the following language:
    $$K = \{ w \in L \mid \text{there exist } a_1, a_2, \ldots, a_{n+1} \in \N \text{ such that } \alpha_1\beta_1^{a_1}\alpha_1'\alpha_2\beta_2^{a_2}\alpha_2'\ldots\alpha_{n+1}\beta_{n+1}^{a_{n+1}}\alpha_{n+1}'$$
    $$\text{   is an accepting run with respect to $q(c)$ and $p(f)$ and reads $w$}\}$$
    Observe, that $K$ depends on chosen $\alpha_i, \beta_i$ and $\alpha_i'$. Moreover, observe, that because of the constant given by Lemma $\ref{lem:decomp}$ we have only a finite number of possibilities of $\alpha_1, \ldots, \alpha_{k+1}$, $\alpha_1', \ldots, \alpha_{k+1}'$, $\beta_1, \ldots, \beta_{k+1}$, initial configuration $q(c)$ and accepting configuration $p(f)$. 
    Moreover, semilinear sets are closed under a finite union. Therefore to conclude, that $im(L)$ is a semilinear set it is enough to show that $im(K)$ is a semilinear set.  
    Notice, that from conditions of Lemma \ref{lem:decomp}, we know, that only $\alpha_i'$ (for $i \in [n]$) contain letter $b$, each exactly one letter at the last position. Hence:
    $$im(K) = \{(|\beta_1|a_1 + |\alpha_1| + |\alpha_1'| -1, |\beta_2|a_2 + |\alpha_2| + |\alpha_2'| -1, \ldots,$$$$, |\beta_n|a_n + |\alpha_n| + |\alpha_n'| -1, |\beta_{k+1}|a_{n+1} + |\alpha_{n+1}| + |\alpha_{n+1}'|) \mid $$$$\text{such that  $\alpha_1\beta_1^{a_1}\alpha_1'\alpha_2\beta_2^{a_2}\alpha_2'\ldots\alpha_{n+1}\beta_{n+1}^{a_{k+1}}\alpha_{n+1}'$ is an accepting run with respect to $q(c)$ and $p(f)$}\}$$
    Therefore it is enough to show, that:
    $$A = \{(a_1, a_2, \ldots, a_{n+1}) \mid \text{such that } $$$$\alpha_1\beta_1^{a_1}\alpha_1'\alpha_2\beta_2^{a_2}\alpha_2'\ldots\alpha_{n+1}\beta_{n+1}^{a_{n+1}}\alpha_{n+1}'$$$$ \text{is an accepting run with respect to $q(c)$ and $p(f)$}\}$$
    is a semilinear set. We have two cases. Either $A = \emptyset$, hence semilinear. This case occurs if for any $a_1, a_2, \ldots, a_{n+1} \geq 1$ we do not have an accepting run with respect to $q(c)$ and $p(f)$. Otherwise, we will show semilinearity, by providing a system of linear inequalities for $a_1, a_2, \ldots, a_{n+1}$. 
    It is enough because in \cite{semilinear} it was shown, that the set of solutions of a system of linear inequalities is a semilinear set. The goal of this system of linear inequalities is to express, that after each prefix of a run, we are non-negative on all of the counters. 
    Moreover, we want also to express the acceptance condition. Therefore for each counter $i$ we write the following inequalities:
    \begin{itemize}
        \item For each transition $t$ and each  $\alpha_j$ such that there exist $u$ and $v$ such that $\alpha_j=utv$: $$c_i + \eff_i(\alpha_1\beta_1^{a_1}\alpha_1'\alpha_2\beta_2^{a_2}\alpha_2'\ldots\alpha_{j-1}\beta_{j-1}^{a_{j-1}}\alpha_{j-1}') + \eff_i(ut) \geq 0 $$
        \item For each transition $t$ and each $\alpha_j'$ such that there exist $u$ and $v$ such that $\alpha_j'=utv$:
        $$c_i + \eff_i(\alpha_1\beta_1^{a_1}\alpha_1'\alpha_2\beta_2^{a_2}\alpha_2'\ldots\alpha_{j-1}\beta_{j-1}^{a_{j-1}}\alpha_{j-1}'\alpha_j\beta_j^{a_j})  + \eff_i(ut) \geq 0 $$
        \item For each transition $t$ and each $\beta_j$ such that there exist $u$ and $v$ such that $\beta_j=utv$:
            \begin{itemize}
                \item If $\eff_i(\beta_j) \leq 0$:
                $$c_i + \eff_i(\alpha_1\beta_1^{a_1}\alpha_1'\alpha_2\beta_2^{a_2}\alpha_2'\ldots\alpha_{j-1}\beta_{j-1}^{a_{j-1}}\alpha_{j-1}'\alpha_j\beta_j^{a_j-1}) + \eff_i(ut) \geq 0$$
                \item Otherwise:
                $$c_i + \eff_i(\alpha_1\beta_1^{a_1}\alpha_1'\alpha_2\beta_2^{a_2}\alpha_2'\ldots\alpha_{j-1}\beta_{j-1}^{a_{j-1}}\alpha_{j-1}'\alpha_j) + \eff_i(ut) \geq 0$$
            \end{itemize}
        \item Acceptance condition:
        $$c_i + \eff_i(\alpha_1\beta_1^{a_1}\alpha_1'\alpha_2\beta_2^{a_2}\alpha_2'\ldots\alpha_{k+1}\beta_{n+1}^{a_{n+1}}\alpha_{n+1}') \geq f_i$$
        \item Condition, that each $a_i$ is positive (this is needed because of conditions of Lemma \ref{lem:decomp}):
        $a_i \geq 1$
    \end{itemize}
    
    In other words, this system of inequalities ensures, that each transition in the sequence can be fired, check the acceptance condition and ensures that each $a_i$ is positive. We have shown, that the set $A$ is semilinear and hence $im(K)$ is a semilinear set. Therefore, because semilinar sets are closed under a finite union we have that $im(L)$ is a semilinear set.
\end{proof}

Then, as shown in Lemma \ref{lem:closure_prop}, $k$-ambiguous VASS languages are closed under intersection with regular languages. As mentioned below languages $K_n$ are regular for each $n \in \N$, the following corollary holds:

\begin{corollary}
Let $a, b \in \Sigma$, $L \subseteq \Sigma^*$ be a language recognised by a $k$-ambiguous VASS and for $n \in \N$ let $K_n \subseteq \{a,b\}^*$ be the language of words containing exactly $n$ letters $b$. Then for any $n \in N$ we have that $im(L \cap K_n)$ is a semilinear set.
\end{corollary}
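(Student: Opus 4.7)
The plan is to reduce the statement directly to Lemma~\ref{lem:image} by verifying its two hypotheses for the language $L \cap K_n$. First I would observe that $K_n$ is a regular language: a finite automaton over $\{a,b\}$ that tracks the number of letters $b$ seen so far (using $n+2$ states, one per count up to $n$ and a sink for counts exceeding $n$) accepts exactly $K_n$. Moreover, from Lemma~\ref{lem:closure_prop} (as cited in the paper), the class of $k$-ambiguous VASS languages is closed under intersection with regular languages, so $L \cap K_n$ is recognised by some $k$-ambiguous VASS~$V_n$ (obtained, as usual, as a product construction between $V$ and a DFA for $K_n$, preserving the ambiguity bound since the DFA is deterministic).

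Next I would check the two hypotheses of Lemma~\ref{lem:image} for $L \cap K_n$. The first hypothesis, that the language is recognised by a $k$-ambiguous VASS, is exactly what the previous step gives. The second hypothesis asks that every word in the language contains the same number of $b$'s; this holds by the very definition of $K_n$, since $L \cap K_n \subseteq K_n$ and every word in $K_n$ contains exactly $n$ letters~$b$. Note that we also have $L \cap K_n \subseteq \{a,b\}^*$, which is what the function $im$ needs as a domain.

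Having verified both hypotheses, Lemma~\ref{lem:image} applies and yields that $im(L \cap K_n)$ is a semilinear set, which is the desired conclusion. There is no real obstacle here: the corollary is essentially a packaging of Lemma~\ref{lem:image} in which the hypothesis about the fixed number of $b$'s is enforced externally via an intersection with $K_n$, rather than assumed about $L$ itself. The only point deserving a sentence of justification is that taking a product with a deterministic finite automaton preserves the $k$-ambiguity of the VASS, because distinct accepting runs of the product project to distinct accepting runs of~$V$.
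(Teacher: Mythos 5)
Your proof is correct and follows essentially the same route as the paper: the paper also derives the corollary by noting that each $K_n$ is regular, invoking the closure of $k$-ambiguous VASS languages under intersection with regular languages (Lemma~\ref{lem:closure_prop}), and then applying Lemma~\ref{lem:image} to $L \cap K_n$, whose words all contain exactly $n$ letters $b$ by definition of $K_n$. Your added remark that the product with a deterministic automaton preserves the ambiguity bound is a correct justification of the closure step that the paper leaves implicit.
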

\section{Properties}\label{sec:properties}
In this section we present several properties of languages of boundedly-ambiguous Vector Addition Systems with States.
\subsection{Closure properties}
First, we investigate the closure properties of boundedly-ambiguous languages.


\begin{lemma}\label{lem:closure_prop}
    If $L_1$ and $L_2$ are recognised by a $k_1$-ambiguous and $k_2$-ambiguous VASS respectively then:
    \begin{itemize}
        \item $L_1 \cap L_2$ is recognised by a $(k_1 \cdot k_2)$-ambiguous VASS;
        \item $L_1 \cup L_2$ is recognised by a $(k_1 + k_2)$-ambiguous VASS.
    \end{itemize}
    Moreover, class of languages of boundedly-ambiguous VASSs is not closed under complementation and commutative closure. 
\end{lemma}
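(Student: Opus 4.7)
The plan is to dispatch the four parts of the lemma in turn, using standard VASS constructions for the positive claims and invoking the techniques of Section~\ref{sec:techniques} for the negative ones. Throughout I fix $V_i = (\Sigma, Q_i, \delta_i, I_i, F_i)$ of dimension $d_i$ witnessing $L_i$ with ambiguity $k_i$.

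For the intersection $L_1 \cap L_2 \in (k_1 k_2)$-$\amb$, I would form the synchronous product $V = V_1 \times V_2$: its states are $Q_1 \times Q_2$, its counter-vector has dimension $d_1 + d_2$ (the two machines' counters placed in disjoint blocks), and for each letter $a$ it has a joint transition $((p_1,p_2), a, v_1 \oplus v_2, (q_1,q_2))$ for every pair of $a$-transitions $(p_i, a, v_i, q_i) \in \delta_i$; initial and accepting sets are coordinate-wise pairs of the $I_i$ and $F_i$. Accepting runs of the product over $w$ correspond bijectively to pairs consisting of an accepting run of $V_1$ and one of $V_2$ over $w$, giving at most $k_1 k_2$ accepting runs. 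For the union I take the disjoint union $V = V_1 \sqcup V_2$: states $Q_1 \sqcup Q_2$, counter-vector still of dimension $d_1 + d_2$ with each original transition padded by zeros on the foreign block, initial set $I_1 \cup I_2$ and accepting set $F_1 \cup F_2$. Every accepting run stays in exactly one component, so there are at most $k_1 + k_2$ accepting runs per word.

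For non-closure under complementation I would use Lemma~\ref{lem:first_tool}, together with the observation that $\amb$ is closed under intersection with regular languages (an immediate consequence of the intersection bullet, since every regular language is a deterministic $0$-dimensional VASS language). The plan is to produce $L \in \dett \subseteq \amb$ and, for each $k \in \N$, a regular language $R_k$ such that $\overline{L} \cap R_k$ encodes (up to an easy rewriting) an instance of the hard language of Lemma~\ref{lem:first_tool} not in $k$-$\amb$; this rules out $\overline{L}$ from every $k$-$\amb$ and hence from $\amb$ itself. A candidate $L$ is the deterministic VASS language of non-increasing-block words $\{a^{n_1}b \cdots b a^{n_{k+2}} : n_1 \geq \cdots \geq n_{k+2}\}$, recognised with $k+1$ counters, each filled during block $i$ and drained during block $i+1$, so that any violation $n_i < n_{i+1}$ renders the drain transitions unfireable. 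Its complement, sliced by the regular set of words with the right number of blocks, is the strict-increasing counterpart of $L_k$ and falls outside $k$-$\amb$ by the symmetric version of Lemma~\ref{lem:first_tool_simple} obtained from Lemma~\ref{lem:first_tool}.

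For non-closure under commutative closure I would use Lemma~\ref{lem:image} as the obstruction: if $cc(L) \in \amb$ then for every $n$ the image of $cc(L) \cap \{w : \#_b(w) = n\}$ is semilinear. The plan is to exhibit $L \in \amb$ over an alphabet strictly containing $\{a,b\}$ such that the commutative closure shuffles ordering information from the extra letters into the $\{a,b\}$-positions, producing after the above slicing an image whose sum-projection is a non-ultimately-periodic subset of $\N$, contradicting semilinearity.

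The main obstacle I expect is isolating the concrete witnesses for the two negative claims. The complementation case needs a careful calibration between the parameter $k$ of Lemma~\ref{lem:first_tool} and the regular slice, together with the strict-inequality variant of the lemma. The commutative-closure case is more delicate: since $L \in \amb$ already forces its own sections to have semilinear image (by Lemma~\ref{lem:image}), the non-semilinear behaviour must be engineered purely through the reordering that the closure performs on letters of a richer alphabet, while $L$ itself stays boundedly ambiguous; balancing these two requirements is where the technical work sits.
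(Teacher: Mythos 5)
Your intersection and union constructions are exactly the paper's (synchronised product for $(k_1\cdot k_2)$-ambiguity, disjoint union for $(k_1+k_2)$-ambiguity) and are correct. Both negative claims, however, have genuine problems.

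For complementation, your concrete candidate fails. With $L$ the non-increasing language on exactly $k+2$ blocks, ruling $\overline{L}$ out of \kamb VASSs for that single $k$ does not rule it out of $\amb$: $\overline{L}$ is the union of a regular language (words of the wrong block shape) with the language ``some $n_i < n_{i+1}$'' on $k+2$ blocks, which is a union of $k+1$ single-constraint languages, so nothing prevents $\overline{L}$ from being $(k+1)$- or $(k+2)$-ambiguous. To exclude every ambiguity level you would need one fixed $L$ whose complement, sliced by $R_{k'}$, is hard for \emph{every} $k'$, i.e.\ the non-increasing language with unboundedly many blocks --- and it is far from clear this is a coverability VASS language at all (the natural two-counter fill/drain automaton accumulates garbage $n_i - n_{i+1}$ across blocks). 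Worse, the ``symmetric/strict-inequality variant'' of Lemma~\ref{lem:first_tool} you invoke is not a formal consequence of the lemma: coverability languages are not closed under reversal, the direction of the inequality is baked into the construction of the words $w_i$ in its proof, and strict lower-bound constraints such as $\{a^nba^m \mid m > n\}$ have a reachability rather than coverability flavour, so even the building blocks of your encoding are in doubt. The paper's proof is entirely different and sidesteps all of this: it takes $L = a^nb^{\leq n}$ (deterministic) and uses the regular-separability theorem for disjoint coverability (WSTS) languages of \cite{CzerwinskiLMMKS18} --- if $\overline{L}$ were any coverability language, the regular separator would equal $L$, making $L$ regular, contradicting pumping. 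This even shows non-closure for the full class of coverability VASS languages.

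For commutative closure, your plan provably cannot work. Permutation preserves the Parikh image, so a word of $\{a,b\}^*$ lies in the commutative closure $cc(L)$ only if it is a permutation of a word of $L \cap \{a,b\}^*$; letters outside $\{a,b\}$ cannot be ``shuffled away''. Hence $im(cc(L) \cap K_n) = \{(x_1,\ldots,x_{n+1}) \mid \sum_i x_i \in S_n\}$, where $S_n = \{\#_a(w) \mid w \in L \cap \{a,b\}^*,\ \#_b(w) = n\}$. If $L \in \amb$, then $L \cap K_n$ is boundedly ambiguous by the regular-intersection closure you yourself established, so by Lemma~\ref{lem:image} its image is semilinear; $S_n$ is the image of that set under summation, hence semilinear, and therefore $im(cc(L) \cap K_n)$ is \emph{always} semilinear. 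The obstruction of Lemma~\ref{lem:image} can never fire on the commutative closure of a boundedly-ambiguous language. The paper instead argues directly: for $L = a^nb^{\leq n}$ one has $cc(L) = \{w \mid \#_a(w) \geq \#_b(w)\}$; slicing with the regular $b^*a^*$ gives $b^na^{\geq n}$, and a pumping argument via Lemma~\ref{lem:loop} (a long accepting run over $b^Na^N$ contains a nonnegative loop reading $b^l$ with $l \geq 1$, which pumped once more accepts $b^{N+l}a^N$ outside the language) yields the contradiction.
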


\begin{proof}
    We split the proof into several parts, each corresponding to the closure properties under one operation.
    \myparagraph{Intersection} Let $L_1$ and $L_2$ be languages recognised respectively by $d_1$-dimensional $k_1$-ambiguous VASS $A_1$ and $d_2$-dimensional $k_2$-ambiguous VASS $A_2$. Language $L_1 \cap L_2$ can be recognised by the standard
synchronised product of $A_1$ and $A_2$. It is easy to observe that the product is a $(k_1 \cdot k_2)$-ambiguous $(d_1+d_2)$-VASS.
    \myparagraph{Union}
    Let $L_1$ and $L_2$ be languages recognised by $d_1$-dimensional $k_1$-ambiguous VASS $A_1$ and $d_2$-dimensional $k_2$-ambiguous VASS $A_2$. The idea is to recognise $L_1 \cup L_2$ by taking union VASS $A_1 \cup A_2$, which is clearly $k_1+k_2$-ambiguous and $\max(d_1, d_2)$-dimensional.
    
    \myparagraph{Complementation}
    This comes from a general fact, that coverability languages are not closed under complementation. Language $L = a^nb^{\leq n}$ is recognised even by a deterministic VASS (hence also by one with bounded-ambiguity). On the other hand, in \cite{CzerwinskiLMMKS18} it was shown, that every two disjoint coverability VASS\footnote{In fact this result was shown for a wider class of Well-structured transition system (WSTS)} languages $L_1$ and $L_2$ are regular separable, which means there exists a regular language $L_3$ such that $L_1 \subseteq L_3$ and $L_2 \cap L_3 = \emptyset$. Because $L$ is a coverability VASS language its complement can be a coverability VASS language if and only if $L$ is a regular language.
    Clearly $L_1$ is not and this can be shown using the Pumping Lemma for regular languages.
    
    \myparagraph{Commutative closure}
    Boundedly-ambiguous languages are not closed under commutative closure.
    Let us consider language $L = a^nb^{\leq n}$, which is recognised by a deterministic VASS. Its commutative closure is equal to $L_1 = \{w \mid \#_a(w) \geq \#_b(w)\}$.
    Assume, towards contradiction, that $L_1$ is recognised by a $k$-ambiguous VASS for $k \in \Nplus$.
    Because boundedly-ambiguous languages are closed under intersection with regular languages also language $L_1 \cap L(b^*a^*) = b^na^{\geq n}$ is recognised by a $k$-ambiguous VASS. Let $V$ be $k$-ambiguous VASS recognising $b^na^{\geq n}$. By Lemma \ref{lem:loop} we can take $N$ such, that while accepting $b^Na^N$ VASS $V$ will fire a non-negative loop on all of the counters. This loop reads $b^l$ for some $l \in \N$. Because VASS $V$ is $k$-ambiguous $l \geq 1$.  Hence we can fire this loop one more time and accept $b^{N+l}a^N$, which is not in the language $b^na^{\geq n}$. Therefore we reached a contradiction and boundedly-ambiguous languages are not closed under commutative closure.
    
    \end{proof}
Using the fact that regular languages are unambiguous (i.e. $1$-ambiguous) VASS languages and Lemma~\ref{lem:closure_prop} we can formulate the following remark about closure of boundedly-ambiguous languages under intersection with unambiguous (hence also regular) languages.
\begin{remark}
    For each $k \in \N_+$ the class of $k$-ambiguous VASS languages is closed under intersection with unambiguous VASS languages. Hence, the same holds for the intersection with regular languages.
\end{remark}
We complement Lemma \ref{lem:closure_prop} with Lemma \ref{lem:union} and Conjecture \ref{conj:intersection}.

\begin{lemma}\label{lem:union}
For each $k_1, k_2 \in \N_+$ there exist languages $L_1, L_2$, which are respectively recognised by a $k_1$ and $k_2$ ambiguous VASS such that language $L_1 \cup L_2$ is not recognised by an $n$-ambiguous VASS for $n \in [k_1+k_2 - 1]$.
\end{lemma}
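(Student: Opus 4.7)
The plan is to decompose the hard language $L_{k_1+k_2-1}$ from Lemma~\ref{lem:first_tool_simple} into a union of two languages of the required ambiguities and then invoke Lemma~\ref{lem:first_tool_simple} on the whole union to obtain the lower bound. Set $k = k_1 + k_2 - 1$ and define
\[
L_1 = \{a^{n_1}b\ldots ba^{n_{k+2}} \mid \exists\, i \in [k_1]\ \text{with}\ n_i \geq n_{i+1}\},
\]
\[
L_2 = \{a^{n_1}b\ldots ba^{n_{k+2}} \mid \exists\, i \in [k_1+1, k_1+k_2]\ \text{with}\ n_i \geq n_{i+1}\}.
\]
Since $[k_1] \cup [k_1+1, k_1+k_2] = [k_1+k_2] = [k+1]$, we have $L_1 \cup L_2 = L_{k_1+k_2-1}$.

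First I would construct a $k_1$-ambiguous $1$-VASS $V_1$ recognising $L_1$. The VASS nondeterministically guesses at the start an index $i \in [k_1]$ to witness; the finite-state control then reads the first $i-1$ blocks of $a$'s and the separating $b$'s without touching the counter, increments the counter once per letter $a$ in the $i$-th block, decrements the counter once per letter $a$ in the $(i+1)$-st block (so the run only survives if $n_i \geq n_{i+1}$), and then reads the remaining blocks and $b$'s without touching the counter, reaching an accepting state exactly when the whole input has the form $a^{n_1}b\ldots ba^{n_{k+2}}$. The initial nondeterministic guess of $i$ is the only nondeterminism: once $i$ is fixed, the rest of the run over a fixed word is determined. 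Consequently the number of accepting runs of $V_1$ over a word $w = a^{n_1}b\ldots ba^{n_{k+2}} \in L_1$ equals $|\{i \in [k_1] : n_i \geq n_{i+1}\}| \leq k_1$, showing that $V_1$ is $k_1$-ambiguous. A completely symmetric construction produces a $k_2$-ambiguous $1$-VASS $V_2$ recognising $L_2$.

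Finally, because $L_1 \cup L_2 = L_{k_1+k_2-1}$, Lemma~\ref{lem:first_tool_simple} applied with parameter $k = k_1+k_2-1$ immediately yields that $L_1 \cup L_2$ is not recognised by any $(k_1+k_2-1)$-ambiguous VASS. Since every $n$-ambiguous VASS with $n \leq k_1+k_2-1$ is in particular $(k_1+k_2-1)$-ambiguous, this rules out $n$-ambiguity for all $n \in [k_1+k_2-1]$, which is the desired conclusion.

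There is no serious obstacle here: the combinatorics is packaged inside Lemma~\ref{lem:first_tool_simple}, and the only things to verify are the two elementary VASS constructions and the set-theoretic identity $L_1 \cup L_2 = L_{k_1+k_2-1}$. The one point requiring a little care is checking that the initial guess of $i$ is indeed the only source of nondeterminism in each $V_j$, so that the number of accepting runs is bounded exactly by the number of witnessing indices; this follows directly from the fact that after fixing $i$ the VASS behaves like a deterministic $1$-VASS over the input.
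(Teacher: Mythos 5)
Your proposal is correct and follows essentially the same route as the paper: the same decomposition of the hard language from Lemma~\ref{lem:first_tool_simple} into a union of single-witness languages $U_i$ split into the first $k_1$ and the last $k_2$ indices, followed by the same application of Lemma~\ref{lem:first_tool_simple} with parameter $k_1+k_2-1$. The only cosmetic difference is that you construct the $k_1$- and $k_2$-ambiguous $1$-VASSs explicitly (guess the witness index, then compare adjacent blocks deterministically with one counter), whereas the paper obtains them by noting each $U_i$ is unambiguous and invoking the union closure of Lemma~\ref{lem:closure_prop}; both verifications are sound.
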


\begin{proof}
    Let $k = k_1 + k_2$. For $i \in [k]$ let us define language $U_i = \{a^{n_1}ba^{n_2}ba^{n_3}b\ldots a^{n_{k+1}} \mid  n_i \geq n_{i+1} \}$. Observe, that $U_i$ is a language of an unambiguous VASS.
    Let $L_1 = \bigcup_{i=1}^{k_1} U_i$ and $L_2 = \bigcup_{i=k_1+1}^{k}$. By Lemma \ref{lem:closure_prop} languages $L_1$ and $L_2$ are respectively recognised by a $k_1$ and $k_2$ ambiguous VASS. 
    By applying Lemma \ref{lem:first_tool_simple} to $k-1$ we get that language $L_1 \cup L_2$ is not recognised by a $n$-ambiguous VASS for $n \in [k-1]$.
\end{proof}
\begin{conjecture}\label{conj:intersection}
For each $k_1, k_2 \in \N_+$ there exists languages $L_1, L_2$, which are respectively recognised by a $k_1$ and $k_2$ ambiguous VASS such that language $L_1 \cap L_2$ is not recognised by a $n$-ambiguous VASS for $n \in [k_1 \cdot k_2 - 1]$.

\end{conjecture}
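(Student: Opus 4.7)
The plan is to construct $L_1, L_2$ as two syntactically separated copies of the language from Lemma \ref{lem:first_tool_simple}, so that the $k_1$ ``witness positions'' on one side and the $k_2$ ``witness positions'' on the other behave independently and their counts multiply. I would fix a fresh letter $\#$ and set
$$L_1 = \{u\#v \mid u \in L_{k_1 - 1},\ v \in \{a,b\}^*\}, \qquad L_2 = \{u\#v \mid u \in \{a,b\}^*,\ v \in L_{k_2 - 1}\},$$
with $L_k$ as in Lemma \ref{lem:first_tool_simple}. As in the proof of Lemma \ref{lem:union}, $L_{k_1 - 1}$ is the union of $k_1$ unambiguous VASS languages $U_1, \ldots, U_{k_1}$, so $L_1$ is recognised by a $k_1$-ambiguous VASS: nondeterministically pick $i$, run the unambiguous machine for $U_i$ on the prefix up to $\#$, then after reading $\#$ accept any suffix using a deterministic sink with self-loops. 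Symmetrically, $L_2$ is $k_2$-ambiguous, and by Lemma \ref{lem:closure_prop} the intersection is recognised by a $k_1 k_2$-ambiguous VASS, matching from above the bound the conjecture asks to prove tight.

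The main work is the lower bound. Suppose for contradiction that some $(k_1 k_2 - 1)$-ambiguous VASS $V$ recognises $L_1 \cap L_2$. I would pick two chains of constants $N_0 < N_1 < \cdots < N_{k_1+1}$ and $M_0 < M_1 < \cdots < M_{k_2+1}$, both growing fast enough for Lemma \ref{lem:loop} to force a nonnegative loop in every sufficiently long block, and with the $M_\ell$-chain growing large with respect to the whole left-hand prefix, so that Lemma \ref{lem:loop} can still be invoked block-by-block on the right half. For each $(i,j) \in [k_1] \times [k_2]$ consider the witness word
$$w_{i,j} = a^{N_1!} b \cdots a^{N_i!} b\, a^{N_0!}\, b\, a^{N_{i+2}!} b \cdots a^{N_{k_1+1}!}\ \#\ a^{M_1!} b \cdots a^{M_j!} b\, a^{M_0!}\, b\, a^{M_{j+2}!} b \cdots a^{M_{k_2+1}!},$$
which lies in $L_1 \cap L_2$ because the only consecutive pair of left-blocks with $n_\ell \ge n_{\ell+1}$ is at $\ell = i$, and similarly for the right side at $\ell = j$.

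Pick an accepting run $\rho_{i,j}$ of $V$ on $w_{i,j}$. Applying Lemma \ref{lem:decomp} and the analysis of Claim \ref{clm:decomp} from the proof of Lemma \ref{lem:first_tool} separately to the prefix of $\rho_{i,j}$ before the $\#$-transition and to the suffix after it, I would extract a loop $\beta^{(i)}_\mathrm{L}$ inside the $(i+1)$-th left $a$-block that is nonnegative on $[d] \setminus \bigcup_{\ell \le i} \support(\beta^{(i)}_{\ell,\mathrm{L}})$ but not nonnegative on all counters, and symmetrically a loop $\beta^{(j)}_\mathrm{R}$ inside the $(j+1)$-th right $a$-block. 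Pumping $\beta^{(i)}_\mathrm{L}$ and $\beta^{(j)}_\mathrm{R}$ the right numbers of times, while using the nonnegative loops in the remaining blocks to absorb the losses exactly as in the single-sided proof, I would produce $k_1 k_2$ accepting runs $\rho'_{i,j}$ over one common word $a^{n_1} b \cdots a^{n_{k_1+1}}\# a^{m_1} b \cdots a^{m_{k_2+1}}$, with the $n_\ell, m_\ell$ chosen divisibly enough to absorb every $\beta^{(i)}_\mathrm{L}, \beta^{(j)}_\mathrm{R}$ simultaneously. The distinctness argument of Claim \ref{clm:different} then applies on each side independently: if $(i,j) \neq (i',j')$ differ in the first coordinate, the runs differ on the left half, and otherwise on the right half. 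Having $k_1 k_2$ distinct runs over one word contradicts $(k_1 k_2 - 1)$-ambiguity of $V$.

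The main obstacle I anticipate is the simultaneous compensation step. Pumping $\beta^{(j)}_\mathrm{R}$ on the right can decrease counters that the left half had to build up, and vice versa, so the compensation coefficients must dominate losses coming from \emph{both} sides. Since each run crosses $\#$ exactly once, the two halves interact only through the counter values at that single transition; consequently, if the block lengths $N_\ell, M_\ell$ are chosen with enough successive gaps (each $M_{\ell+1}!$ dwarfing the total mass that both-sided pumping could consume), the bookkeeping should decouple and the computation mirrors the arithmetic of the proof of Lemma \ref{lem:first_tool}. Verifying this carefully, and in particular checking that pumping on one side cannot accidentally merge two runs $\rho'_{i,j}$ and $\rho'_{i,j'}$ with the same left coordinate, is the technical heart of the argument.
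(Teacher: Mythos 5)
First, a point of fact: the paper does not prove this statement. It is stated as Conjecture~\ref{conj:intersection} and explicitly left open (the authors prove only the union analogue, Lemma~\ref{lem:union}, and the $k_1\cdot k_2$ upper bound via Lemma~\ref{lem:closure_prop}). So there is no proof in the paper to compare against, and your proposal must be judged as a standalone proof of an open problem. Your upper-bound half is correct and routine: $L_1$ is a union of $k_1$ unambiguous languages $U_i\,\#\,\{a,b\}^*$, so it is $k_1$-ambiguous, symmetrically for $L_2$, and Lemma~\ref{lem:closure_prop} gives a $k_1k_2$-ambiguous VASS for the intersection. Your lower-bound plan — a product of two witness chains, mirroring the proof of Lemma~\ref{lem:first_tool} on each side of the $\#$ — is the natural attack and I see no structural obstruction to it. But as written it is a plan, not a proof, and the deferred steps are precisely the hard ones.

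The concrete gap is your step ``applying Lemma~\ref{lem:decomp} \ldots\ separately to the prefix before the $\#$-transition and to the suffix after it.'' Lemma~\ref{lem:decomp} does not apply to the suffix: its hypothesis is that $\rho$ is a prefix of an accepting run, i.e.\ starts from an initial configuration, and its constants are computed from the bound $T = s + 2(n-1)mC$ on the counters \emph{outside the supports of earlier loops}, where $s$ is the (constant) norm of an initial configuration. After reading the left half, counters in $\bigcup_\ell \support(\beta^{(i)}_{\ell,\mathrm{L}})$ can be of magnitude $N_{k_1+1}!$, so no constant of the lemma survives. The repair is to rerun the induction of Lemma~\ref{lem:decomp} through the $\#$ (which the proof tolerates, since $\#$ acts like $b$ as a block separator), but then condition~\ref{cond:five:lem:decomp} only gives that each right-side loop is nonnegative outside the union of supports of \emph{all} earlier loops, left-side loops included. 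This is exactly the entanglement you flag as ``the technical heart'': the analogue of condition~\ref{cond:clm_decomp_fourth} of Claim~\ref{clm:decomp} (non-special right blocks carry loops nonnegative on \emph{all} counters) must be reproved with Karp--Miller constants $M(V,[d],\cdot)$ computed from bounds that include the entire left half, forcing the right originals $M_\ell$ to dwarf the left originals, while the compensation inequalities (Claims~\ref{clm:pumping} and~\ref{clm:accepting}) now involve two negative pumped loops per run, with the right special loop possibly negative on counters supported only by left loops, so the left \emph{target} lengths must dominate $m$ times the right target lengths; and the distinctness argument for pairs $(i,j)\neq(i,j')$ needs this full-nonnegativity on the right, which is the unproved condition again. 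I believe these constraints are mutually satisfiable (originals: left $\ll$ right; targets: right $\ll$ left, with factorial divisibility on both sides), so your route plausibly closes the conjecture — but until the cross-$\#$ decomposition, the two-sided cascading compensation, and the same-left-index distinctness are actually carried out, you have a credible programme rather than a proof, which matches the status the paper itself assigns to the statement.
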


\subsection{Expressiveness}\label{subsec:expressivity}
Firstly, we show that the hierarchy of $k$-ambiguous VASS is strict.
\begin{lemma}
    For every $k \in \N_+$ there exists language $L \in \xamb{(k+1)} \setminus \xamb{k}$. 
\end{lemma}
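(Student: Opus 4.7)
The plan is to take $L = L_k = \{a^{n_1}ba^{n_2}b\ldots ba^{n_{k+2}} \mid \exists_{i \in [k+1]} \ n_i \geq n_{i+1}\}$, which is exactly the witness language from Lemma~\ref{lem:first_tool_simple}. The lower bound $L \notin \xamb{k}$ is immediate from that lemma, so all of the work is in exhibiting a $(k+1)$-ambiguous VASS recognising $L$.

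For the upper bound I would construct, for each fixed $i \in [k+1]$, a deterministic (hence unambiguous) $1$-VASS $V_i$ accepting the language $L_k^{(i)} := \{a^{n_1}b\ldots ba^{n_{k+2}} \mid n_i \geq n_{i+1}\}$. The states of $V_i$ are $s_1, s_2, \ldots, s_{k+2}$, corresponding to ``currently reading the $j$-th block of $a$'s''; the letter $b$ moves $s_j \to s_{j+1}$ with zero effect, and loops on $a$ at $s_j$ are zero-effect for $j \notin \{i, i+1\}$, increment the counter at $s_i$, and decrement it at $s_{i+1}$. The initial configuration is $s_1(0)$ and the accepting state is $s_{k+2}$ (accepting by state). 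Since the counter must stay non-negative, $V_i$ accepts a word in $\{a,b\}^*$ of the correct shape exactly when $n_i \geq n_{i+1}$; and it is deterministic, hence unambiguous.

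Next, I would invoke the union construction from the proof of Lemma~\ref{lem:closure_prop}: the disjoint union of the $V_i$ for $i = 1, \ldots, k+1$ is a VASS recognising $\bigcup_{i=1}^{k+1} L_k^{(i)} = L_k$, with ambiguity bounded by the sum $\underbrace{1 + 1 + \cdots + 1}_{k+1 \text{ times}} = k+1$. Hence $L_k \in \xamb{(k+1)}$.

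Combining the two directions yields $L_k \in \xamb{(k+1)} \setminus \xamb{k}$, as required. There is no real obstacle here: the heavy lifting is carried by Lemma~\ref{lem:first_tool_simple}, and the only thing to verify carefully is that a single inequality $n_i \geq n_{i+1}$ can be checked with just one counter and no nondeterminism, which the sketched $V_i$ does.
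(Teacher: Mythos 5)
Your proposal is correct and follows the paper's proof exactly: the paper also takes $L_k = \bigcup_{i=1}^{k+1} L_i$ with $L_i = \{a^{n_1}b\ldots ba^{n_{k+2}} \mid n_i \geq n_{i+1}\}$, gets the lower bound from Lemma~\ref{lem:first_tool_simple}, notes each $L_i$ is an unambiguous VASS language, and applies the union bound of Lemma~\ref{lem:closure_prop}. The only difference is that you spell out the deterministic $1$-VASS for each $L_i$ explicitly, which the paper leaves as an observation; your construction is valid (the coverability acceptance at $s_{k+2}(0)$ and the nonnegativity of the single counter correctly enforce $n_i \geq n_{i+1}$).
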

\begin{proof}
    Let $L = \{a^{n_1}ba^{n_2}ba^{n_3}b\ldots a^{n_{k+2}} \mid  \exists_{i \in [k+1]} \ n_i \geq n_{i+1} \}$.
    Because of Lemma \ref{lem:first_tool_simple} language $L$ is not recognised by a \kamb VASS.
    On the other hand it is a union of $k+1$ unambiguous VASS languages. For $i \in [k+1]$ let us define $L_i = \{a^{n_1}ba^{n_2}ba^{n_3}b\ldots a^{n_{k+2}} \mid  n_i \geq n_{i+1} \}$. 
    Observe, that $L = \bigcup_{i=1}^{k+1} L_i$. Hence, by Lemma \ref{lem:closure_prop}, $L$ is recognised by a $(k+1)$-ambiguous VASS.
\end{proof}
In terms of the classes of languages they define, boundedly-ambiguous VASS can express strictly more than deterministic ones. Moreover, they are strictly less expressive than non-deterministic ones.
We prove this in Lemmas \ref{lem:u-d} and \ref{lem:n-u}.
\begin{lemma}\label{lem:u-d}
    There exists language $L \in \xamb{1} \setminus \dett$.
\end{lemma}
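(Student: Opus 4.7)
The plan is to exhibit a concrete language $L \in \xamb{1} \setminus \dett$. Set
\[
L_1 = \{a^{n_1} b a^{n_2} \mid n_1 \geq n_2\}, \qquad L_2 = \{a^{n_1} b a^{n_2} b a^{n_3} \mid n_1 \geq n_3\},
\]
and let $L = L_1 \cup L_2$. Each component is recognised by a deterministic coverability $1$-VASS whose single counter is incremented on each $a$ of the first block and decremented on each $a$ of the last block, accepting on the state reached at the end of the word. Since $L_1$ and $L_2$ are disjoint (they differ in the number of $b$'s), the disjoint union of these two VASSs -- taking disjoint copies of the state spaces, keeping all initial configurations of both, and all accepting configurations of both -- is an unambiguous VASS for $L$, giving $L \in \xamb{1}$.

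For $L \notin \dett$, suppose toward contradiction that a deterministic $d$-VASS $V$ recognises $L$. The key observation is that $a^{n_1} b a^{n_2} b \in L_2 \subseteq L$ for every $n_1, n_2 \geq 0$, so the unique run of $V$ on every such word must reach the second $b$; in particular, $V$ never aborts while reading the middle block of $a$'s, no matter how long that block is. Since the sequence of states visited during the middle block is deterministic, by finiteness of the state space it is eventually periodic, say with period $p$ and pre-period $N_0$, and the accumulated counter effect $\delta$ of one full period must be coordinatewise nonnegative -- otherwise some counter would be forced below zero for sufficiently large $n_2$, contradicting non-aborting.

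By the pigeonhole principle, fix a state $q^*$ such that the state reached after $a^{n_1} b$ equals $q^*$ for infinitely many $n_1$, and choose such an $n_1 \geq N_0$. Writing $(q(n_1, n_2), v(n_1, n_2))$ for the configuration of $V$ after reading $a^{n_1} b a^{n_2}$, periodicity yields $q(n_1, n_1 + p) = q(n_1, n_1)$ together with $v(n_1, n_1 + p) = v(n_1, n_1) + \delta \geq v(n_1, n_1)$. Since $a^{n_1} b a^{n_1} \in L_1$, the configuration $(q(n_1, n_1), v(n_1, n_1))$ is accepting, hence covers the accepting threshold $v_f$ associated with state $q(n_1, n_1)$; consequently $v(n_1, n_1 + p) \geq v_f$, making $(q(n_1, n_1 + p), v(n_1, n_1 + p))$ accepting as well. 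But $a^{n_1} b a^{n_1 + p}$ belongs neither to $L_1$ (because $n_1 + p > n_1$) nor to $L_2$ (wrong number of $b$'s), hence not to $L$, contradicting correctness of $V$.

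The main obstacle is justifying the non-aborting property of the middle block, which is forced precisely by the presence of $L_2$ in the union: without $L_2$, the language $L_1$ alone is deterministic via the obvious $1$-VASS that simply aborts on the second block once its counter reaches zero, and a similar periodicity argument would break because the loop effect could have negative components.
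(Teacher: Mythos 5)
Your proof is correct, but it takes a genuinely different route from the paper. The paper does not attack determinism directly: it proves the stronger Lemma~\ref{lem:u-h}, exhibiting $L = \{a,b\}^*ba^{n>0}b^{\leq n} \in \xamb{1} \setminus \hist$, and derives Lemma~\ref{lem:u-d} from the inclusion $\dett \subseteq \hist$; the negative part there is outsourced to cited results on history-deterministic VASSs (closure of \hist under union, and the fact that $\{a,b\}^*a^{n>0}b^{\leq n}$ is not history-deterministic). Your argument, by contrast, is elementary and self-contained: you take the union witness $L_1 \cup L_2$ (unambiguous because the parts are deterministic and disjoint by $b$-count, which is legitimate since the paper's VASSs allow several initial configurations), and against a hypothetical deterministic VASS you exploit the fact that the successor state depends only on the current state and the letter, so once the run over the middle $a$-block is known never to abort (forced by $a^{n_1}ba^{n_2}b \in L_2$ for all $n_2$) the state sequence is the orbit of a partial function on $Q$, hence eventually periodic, and the cycle effect $\delta$ must be coordinatewise nonnegative; pumping the cycle on top of the accepting configuration reached after $a^{n_1}ba^{n_1} \in L_1$ then forces acceptance of $a^{n_1}ba^{n_1+p} \notin L$. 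All steps check out under the paper's coverability acceptance; the pigeonhole over $q^*$ is even slightly more than needed, since any $n_1 \geq |Q|$ already lands past the pre-period of whatever state follows the first $b$. What each approach buys: yours avoids all external machinery, while the paper's yields a strictly stronger separation. It is worth noting that your witness could not establish that stronger statement: $L_1$ and $L_2$ are deterministic, hence history-deterministic, and \hist is closed under union by the very result the paper cites, so your $L$ lies in $\hist$; correspondingly, your periodicity argument genuinely requires full determinism, since a resolver may choose transitions based on counter values and history, so the state sequence along the middle block need not be eventually periodic for a history-deterministic VASS.
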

We show an even stronger Lemma \ref{lem:u-h}. It is stronger because we have $\dett \subseteq \hist$.
\begin{lemma}\label{lem:u-h}
    There exists language $L \in \xamb{1} \setminus \hist$.
\end{lemma}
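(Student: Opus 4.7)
The plan is to exhibit a specific language $L$ that is recognised by an unambiguous $\varepsilon$-free VASS but is not the language of any history-deterministic VASS. A natural candidate is a language whose membership forces a late, offline case analysis that no online resolver can emulate, for example a union
$$L = L_c \cup L_d,$$
where $L_c$ and $L_d$ are disjoint coverability VASS languages distinguished by a final letter ($c$ versus $d$), and where the counter invariants required to verify the two branches cannot be simultaneously maintained by any single online strategy. A concrete shape is $L = \{a^{n_1}b a^{n_2} b \cdots a^{n_k} x \mid x \in \{c,d\}, (n_1,\ldots,n_k) \in S_x\}$ with the sets $S_c, S_d$ designed to force disjointness and to encode the obstruction described below.

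First I would construct an unambiguous $\varepsilon$-free VASS $U$ with $L(U) = L$, by taking two independent sub-VASSs $U_c$ and $U_d$ accepting $L_c$ and $L_d$ respectively, each using the deterministic counter tricks already present in the paper, and joining them through an initial nondeterministic branching. Because each word of $L$ ends in a unique letter from $\{c,d\}$, exactly one sub-VASS has an accepting run, so $U$ has at most one accepting run per word and hence belongs to $\xamb{1}$.

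Second I would assume for contradiction that some HD VASS $V$ with resolver $r$ satisfies $L(V) = L$. Since $r$ is a function of the history only (even though it may interleave $\varepsilon$-transitions between input letters), after reading any prefix of the form $a^{n_1}b\cdots a^{n_k}$ the resolver has committed to a specific configuration of $V$. Applying Lemma~\ref{lem:loop} to the deterministic run chosen by $r$ inside a sufficiently long block of $a$'s yields a loop that is non-negative on every counter; combined with the WQO of Lemma~\ref{wqo} lifted to resolver-configurations, a pigeonhole argument produces two distinct prefixes after which $r$ reaches dominating configurations on every coordinate. Appending the same continuation letter $c$ (respectively $d$) to both prefixes then forces the resolver's runs to agree on accept/reject, whereas the prefixes were chosen so that membership in $L$ disagrees; this is the required contradiction.

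The main obstacle is Step two: the pumping-and-domination argument must be lifted from the bare VASS $V$ to the deterministic reduction induced by $r$, accounting for the unbounded $\varepsilon$-closures $r$ may insert between input letters. The difficulty is that $r$ may use the full prefix history, so the WQO has to be applied on an enriched state space; this parallels the strategy of Lemma~\ref{lem:first_tool}, but instead of concluding with $k+1$ distinct accepting runs over one word, here the conclusion is an incompatibility between two words that the resolver cannot separate online. Once this is carried out we obtain $L \in \xamb{1}\setminus\hist$, which, since $\dett \subseteq \hist$, strictly strengthens Lemma~\ref{lem:u-d}.
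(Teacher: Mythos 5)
There is a genuine gap: your proposal is a proof plan rather than a proof. The witness language is never actually defined --- the sets $S_c, S_d$ that are supposed to ``encode the obstruction'' are left as placeholders, so neither the membership in $\xamb{1}$ nor the non-membership in $\hist$ can be verified for any concrete object. More importantly, the entire non-HD half rests on the resolver-pumping argument that you yourself flag as ``the main obstacle,'' and as sketched it does not go through. The step ``appending the same continuation letter $c$ then forces the resolver's runs to agree on accept/reject'' is not the right mechanism: the resolver sees different histories after the two prefixes and is in no way forced to behave consistently across them. What a domination $\mathrm{conf}(u) \preceq \mathrm{conf}(v)$ between resolver-reached configurations actually gives (since coverability acceptance is upward closed) is that every accepting continuation fireable from $\mathrm{conf}(u)$ is also fireable from $\mathrm{conf}(v)$, yielding an accepting run of the \emph{underlying} VASS; the contradiction must therefore be with $L(V) = L$ itself, via a continuation $w$ with $uw \in L$ but $vw \notin L$. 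To make such a pair exist along every infinite sequence of prefixes, membership after the final letter must be non-monotone in the configuration order --- precisely the property your unspecified $S_c, S_d$ would have to engineer, and which is the missing idea. (One smaller point: the worry about an ``enriched state space'' for the WQO is a red herring --- the resolver appends only finitely many transitions per letter, so each prefix yields an ordinary configuration in $Q \times \N^d$ and Lemma~\ref{wqo} applies directly.)

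For comparison, the paper avoids this entire difficulty. It takes the concrete language $L = \{a,b\}^*ba^{n>0}b^{\leq n}$, exhibits a four-state unambiguous VASS for it (the sole nondeterministic choice --- guessing the onset of the final block --- admits at most one accepting resolution per word), and then obtains $L \notin \hist$ purely by citation: history-deterministic VASS languages are closed under union~\cite{history-deterministic}, $a^{n>0}b^{\leq n}$ is deterministic hence HD, so if $L$ were HD then $L \cup a^{n>0}b^{\leq n} = \{a,b\}^*a^{n>0}b^{\leq n}$ would be HD, contradicting the non-HD result for that language proved in~\cite{history-deterministic}. Even if you completed your resolver-pumping argument, you would essentially be re-proving from scratch a lower bound that is available off the shelf; if you want a self-contained route, the closure-under-union reduction is the step to borrow, as it converts the problem into one already solved in the literature.
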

\begin{figure}
    \centering
    \begin{tikzpicture}
        \begin{scope}[initial text = \textcolor{red}{}]
            \node[state, initial] (A) at (0,0) {$q_1$};
            \node[state] (B) at (3,0) {$q_2$};
            \node[state, accepting] (C) at (6,0) {$q_3$};
            \node[state, accepting] (D) at (9,0) {$q_4$};
        \end{scope}
        
        \begin{scope}[>={Stealth[black]},
                      every edge/.style={draw=black,very thick}]
            \path [->] (A) edge [above] node {$b$} (B);
            \path [->] (B) edge [above] node {$a, +1$} (C);
            \path [->] (C) edge [above] node {$b, -1$} (D);
            \path [->] (A) edge [loop above] node {$a, b$} (A);
            \path [->] (C) edge [loop above] node {$a, +1$} (C);
            \path [->] (D) edge [loop above] node {$b, -1$} (D);
        \end{scope}
    \end{tikzpicture}
\caption{Unambiguous VASS recognising (starting from zero) $\{a,b\}^*ba^{n > 0}b^{\leq n}$}\label{fig:u-h}
\end{figure}
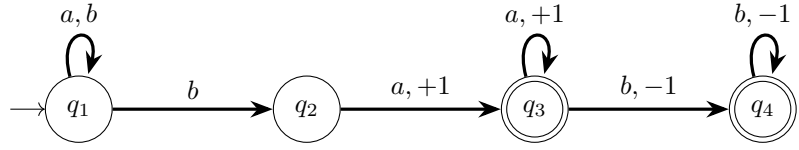
\begin{proof}
     Let $L = \{a,b\}^*ba^{n>0}b^{\leq n}$. Observe, that $L$ is recognized by an unambiguous VASS depicted in Figure \ref{fig:u-h}. The only point of nondeterminism is in state $q_1$ and there is only one way to guess when the last block, ending the word in the form $a^{n > 0}b^{\leq n}$ comes.
    
     On the other hand, assume, towards contradiction, that $L$ is recognised by a history-deterministic VASS. It is easy to see, that $L_1 = a^{n>0}b^{\leq n}$ is a language of a deterministic VASS. Hence it is also history-deterministic. In \cite{history-deterministic} it was shown, that history-deterministic VASS languages are closed under union. Hence $L_2 = L \cup L_1 = \{a,b\}^*a^{n > 0}b^{\leq n}$ is also history-deterministic VASS language. However, in~\cite{history-deterministic} it was also shown, that $L_2$ is not a history-deterministic VASS language. Hence we get, that $L$ is not a history-deterministic VASS language.

\end{proof}
\begin{lemma}\label{lem:n-u}
    There exists language $L \in \xndet{1} \setminus \amb$.
\end{lemma}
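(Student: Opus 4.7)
The plan is to exhibit a single language recognised by a $1$-VASS that simultaneously defeats every $k$-ambiguous VASS via Lemma~\ref{lem:first_tool_simple}. Concretely, take
\[
L \;=\; \{a^{n_1} b a^{n_2} b \cdots b a^{n_m} \mid m \ge 2,\ \exists\, i \in [m-1] : n_i \ge n_{i+1}\},
\]
i.e.\ all words over $a^*(ba^*)^*$ with at least one $b$ such that some two adjacent $a$-blocks are non-strictly decreasing. I claim $L \in \xndet{1} \setminus \amb$.

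To show $L \in \xndet{1}$, I would construct a $1$-VASS $V$ with states $s, p, q, t$, set of initial configurations $\{(s,0),(p,0)\}$, and accepting states $\{q,t\}$. From $s$ the machine may loop on $a$ and $b$ (effect $0$) or take $s \to p$ on $b$, which guesses that the witness block $i$ begins immediately after this $b$. In $p$ it reads $a$ with effect $+1$ (counting $n_i$), and on the next $b$ moves to $q$. In $q$ it reads $a$ with effect $-1$ (counting $n_{i+1}$); if a further $b$ appears it moves to $t$, which loops freely on both letters. The second initial configuration $(p,0)$ handles the case $i=1$, and making $q$ accepting handles the case $i+1 = m$. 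A word is accepted exactly when some choice of $i$ lets the counter remain non-negative through the $q$-phase, i.e.\ exactly when $n_i \ge n_{i+1}$ for some $i$. Hence $L(V) = L$.

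To show $L \notin \amb$, suppose towards a contradiction that $L$ is recognised by some $k$-ambiguous VASS, for some $k \in \N_+$. The regular language $R_k = a^*(ba^*)^{k+1}$ is recognised by an unambiguous (even deterministic) VASS, so by Lemma~\ref{lem:closure_prop} the intersection $L \cap R_k$ is recognised by a $(k\cdot 1) = k$-ambiguous VASS. But $L \cap R_k$ is precisely the language $L_k$ of Lemma~\ref{lem:first_tool_simple}, which is \emph{not} recognisable by any $k$-ambiguous VASS. This contradiction gives $L \notin \xamb{k}$ for every $k$, i.e.\ $L \notin \amb$.

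There is no serious obstacle here once Lemma~\ref{lem:first_tool_simple} is in hand: all the mathematical work is already encapsulated in that lemma. The only thing that requires a little attention is the boundary case analysis in the $1$-VASS (what happens when $i = 1$ or $i+1 = m$), which is handled uniformly by the second initial configuration and by letting the decrement state be accepting.
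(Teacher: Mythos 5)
Your proof is correct and takes essentially the same approach as the paper: the paper uses the same witness language (words containing an adjacent non-increasing pair of $a$-blocks, recognised by an explicit $1$-VASS), intersects it with the regular language $(a^*b)^{k+2}$, and invokes Lemma~\ref{lem:first_tool} with $L=\{\varepsilon\}$ and $f(\varepsilon)=\omega$, which is precisely the instantiation that yields Lemma~\ref{lem:first_tool_simple}, the form you apply directly. Your explicit two-initial-configuration $1$-VASS and the absence of a trailing $b$ in the block format are only cosmetic differences.
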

\begin{figure}
    \centering
    \begin{tikzpicture}
        \begin{scope}[initial text = \textcolor{red}{}]
            \node[state, initial] (A) at (0,0) {$q_1$};
            \node[state] (B) at (2,0) {$q_2$};
            \node[state] (C) at (4,0) {$q_3$};
            \node[state] (D) at (6,0) {$q_4$};
            \node[state, accepting] (E) at (8,0) {$q_5$};
        \end{scope}
        
        \begin{scope}[>={Stealth[black]},
                      every edge/.style={draw=black,very thick}]
            \path [->] (A) edge [above] node {$b$} (B);
            \path [->] (B) edge [loop above] node {$a, +1$} (B);
            \path [->] (B) edge [above] node {$b$} (C);
            \path [->] (A) edge [loop above] node {$a, b$} (A);
            \path [->] (C) edge [loop above] node {$a, -1$} (C);
            \path [->] (C) edge [above] node {$b$} (D);
            \path [->] (D) edge [loop above] node {$a,b$} (D);
            \path [->] (D) edge [above] node {$b$} (E);
        \end{scope}
    \end{tikzpicture}
\caption{VASS recognising (starting from zero) $\{a^{n}ba^{m}ba^{k} \mid n, m, k \in \N, (n \geq m \lor n \geq k) \}$}\label{fig:n-u}
\end{figure}
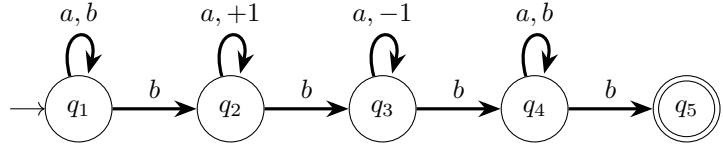
\begin{proof}
    Let $L = \setof{ua^nba^mbv}{u,v \in L((a^*b)^*), n \geq m}$. Let us fix $k \in \N_+$. We show, that $L$ is not recognised by a $k$-ambiguous VASS. Assume, towards contradiction, that it is.
    Hence language $L_k = L \cap L((a^*b)^{k+2})$ is also recognised by a $k$-ambiguous VASS.
    Let $f$ be a function such that $f : \set{\epsilon} \rightarrow \N_\omega$ and $f(\epsilon) = \omega$.
    Therefore, by Lemma \ref{lem:first_tool}, we get, that $L_k$ is not recognised by a $k$-ambiguous VASS, contradiction. Hence $L$ is not recognised by a $k$-ambiguous VASS.
    On the other hand, it can be recognised by a 1-dimensional VASS, which is presented in Figure \ref{fig:n-u}.
\end{proof}

Observe, that because each word is read by only finite number of accepting runs, one can see bounded-ambiguity as an extension of the notion of determinism.
A second extension of the determinism is history-determinism. 
In \cite{history-deterministic} it was shown, that history-deterministic VASSs can express more than deterministic ones and less than nondeterministic ones.
Up to now, there has been no comparison of the expressive power of history-deterministic and bounded-ambiguous VASSs. 
Now, we show, that these language classes are incomparable, which means that there exists a language recognised by an unambiguous VASS, which is not a language of history-deterministic VASS and language which is recognised by a history-deterministic VASS and it is not recognised by a $k$-ambiguous VASS for any $k \in \N_+$.
Recall, that in Lemma \ref{lem:u-h} we have shown that there exists $L \in \xamb{1} \setminus \hist$. Hence it is enough to show the following Lemma:
\begin{figure}
    \centering
    \begin{tikzpicture}
        \begin{scope}[initial text = \textcolor{red}{}]
            \node[state, initial] (A) at (0,0) {$q_1$};
            \node[state, accepting] (B) at (3,0) {$q_2$};
            \node[state, accepting] (C) at (6,0) {$q_3$};
        \end{scope}
        
        \begin{scope}[>={Stealth[black]},
                      every edge/.style={draw=black,very thick}]
            \path [->] (A) edge [above] node {$b, (0,1,0)$} (B);
            \path [->] (B) edge [loop above] node {$a, (0,-1,1)$} (B);
            \path [->] (B) edge [bend right] node [below]{$a, (0,0,0)$} (C);
            \path [->] (C) edge [bend right] node [above]{$a, (-1,0,0)$} (B);
            \path [->] (A) edge [loop above] node {$a, (1,0,0)$} (A);
            \path [->] (C) edge [loop above] node {$a, (0, 2, -1)$} (C);
        \end{scope}
    \end{tikzpicture}
\caption{VASS recognising (starting from zero) $\{a^{n}ba^{m} \mid n, m\in \N, (m \leq  2n + 2^{n+2}-1) \}$}\label{fig:h-a}
\end{figure}
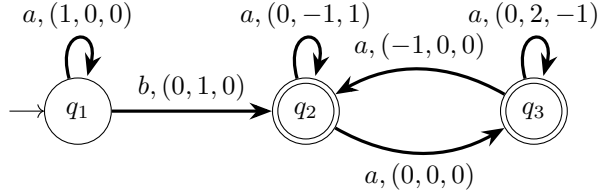
\begin{lemma}\label{lem:h-a}
    There exists language $L \in \hist \setminus \amb$.
\end{lemma}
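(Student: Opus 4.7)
I take $L = \{a^n b a^m \mid n, m \in \N,\ m \leq 2n + 2^{n+2} - 1\}$, the language accepted (from initial configuration $(0,0,0)$ in state $q_1$) by the $3$-dimensional VASS in Figure~\ref{fig:h-a}. The task is to show $L \in \hist$ and $L \notin \amb$.

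For $L \in \hist$, my plan is to exhibit a canonical resolver for this VASS. At $q_1$ the only choice is between the $a$-self-loop and the $b$-transition, and the resolver fires $b$ exactly when $b$ appears in the input (a unique position if any, by the shape of words in $L$). In $q_2$ and $q_3$ the resolver follows the unique maximal trajectory: keep the $q_2$-self-loop (which costs counter~$2$ and feeds counter~$3$) as long as counter~$2$ is positive, then take $q_2 \to q_3$; keep the $q_3$-self-loop (which costs counter~$3$ and feeds counter~$2$ with factor~$2$) as long as counter~$3$ is positive, then take $q_3 \to q_2$ at the price of one unit of counter~$1$. A direct induction on the number of alternations shows this schedule is well-defined, always stays in an accepting state, and reads a suffix $a^m$ whose maximum length is exactly $2n + 2^{n+2} - 1$; in particular every word of $L$ is accepted. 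Carrying out this induction and checking that it yields precisely the claimed exponential bound is the main technical step, and I expect it to be the principal obstacle.

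For $L \notin \amb$, I apply Lemma~\ref{lem:image}. Every word of $L$ contains exactly one $b$, so the hypothesis of the lemma holds with constant $1$. If $L$ were a boundedly-ambiguous VASS language, the set $im(L) = \{(n, m) \in \N^2 \mid m \leq 2n + 2^{n+2} - 1\}$ would be semilinear, and hence Presburger-definable. However, for any semilinear subset of $\N^2$, the partial function sending $n$ in the first-coordinate projection to the maximum $m$ with $(n, m) \in S$ (when finite) is bounded, on each arithmetic progression, by a linear function of $n$; the upper envelope $n \mapsto 2n + 2^{n+2} - 1$ of $im(L)$ grows exponentially and so violates this. This contradiction gives $L \notin \amb$, and combining the two halves yields $L \in \hist \setminus \amb$.
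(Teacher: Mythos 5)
Your proposal is correct and takes essentially the same route as the paper: the paper uses exactly the same language $L=\{a^nba^m \mid m \leq 2n+2^{n+2}-1\}$, shows $L \in \hist$ via the VASS of Figure~\ref{fig:h-a} with precisely your greedy resolver (fire the loops in $q_2$ and $q_3$ as long as possible, which yields the maximal suffix length $2\Sigma_{i=1}^{n-1}(2^i+1)+2^n+1+2^n = 2n+2^{n+2}-1$), and shows $L \notin \amb$ by applying Lemma~\ref{lem:image} to conclude from non-semilinearity of $im(L)$. The only difference is cosmetic: you spell out why $im(L)$ is not semilinear (the upper envelope of a semilinear subset of $\N^2$ is linearly bounded, whereas here it is exponential), a fact the paper simply asserts.
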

\begin{proof}
    Let $L = a^nba^{2n+2^{n+2}-1}$. Observe, that $im(L)$ is not a semilinear set.
    Hence, due to Lemma \ref{lem:image}, for every $k \in \N_+$ we have that $L$ is not recognised by a $k$-ambiguous VASS.
    Hence $L \notin \amb$. On the other hand it is recognised by a history-deterministic VASS presented in the Figure \ref{fig:h-a}.
    It is history-deterministic, because the best option is to fire loops in states $q_2$ and $q_3$ as long as possible.
    In this way one can read words $a^nba^k$ for each $k \in \N$ such that
    $$k \leq 2 \cdot \Sigma_{i=1}^{n-1}(2^i+1) + 2^n + 1 + 2^n = 2 \cdot (2^n - 1) + 2 \cdot n + 2^{n+1} + 1 = 2 \cdot n + 2^{n+2} - 1$$
\end{proof}   

\section{Proof of Theorem~\ref{thm:main}}\label{sec:undecidability}
In this section we prove Theorem \ref{thm:main}.
We start by introducing \emph{Lossy counter machines} (LCMs) \cite{lcm1, lcm2}.  Formally, an LCM is 
$M = \tupple{Q, Z, \Delta}$ where $Q = \{l_1, \ldots , l_m\}$ is a finite set of states, $Z = (z_1, \ldots , z_n)$ are $n$
counters, and $\Delta \subseteq Q \times OP(Z) \times Q$, where $OP(Z) = \{\inc, \dec, \ztest, \ski\}^n$.
A configuration of $M$ is $q(a)$ where $ q \in Q$ and $a = (a_1, . . . , a_n) \in \N^n$. There is a
transition $q(a) \trans{t} q'(b)$ if there exists $t \in \Delta$ such that $t = (q,op,q')$ and for each $i \in [1,n]$:
\begin{itemize}
    \item If $op_i = \inc$ then $b_i \leq a_i + 1$
    \item If $op_i = \dec$ then $b_i \leq a_i - 1$
    \item If $op_i = \ski$ then $b_i \leq a_i$
    \item If $op_i = \ztest$ then $a_i = b_i = 0$
\end{itemize}
Observe, that counters can nondeterministically decrease at each step. A run of M is a finite sequence $q_1(a_1) \trans{t_1} q_2(a_2) \trans{t_2} \ldots \trans{t_{n-1}} q_n(a_n)$. Given a configuration $q(u)$, the reachability set of $q(u)$ is the set of all configurations reachable from $q(u)$
via runs of M. We denote this set as $\reach(q(u))$. For simplicity we denote by $\reach(q)$ the set of configurations reachable from $q(\vec{0})$.
It was shown in \cite{lcm2} that the problem of deciding whether for a configuration $q(u)$ and LCM $M$ set $\reach(q(u))$
configuration is finite, is undecidable. Due to \cite{DBLP:conf/concur/AlmagorY22} even if $u$ is always equal to $\vec{0}$ the problem is still undecidable.
We call this problem \textbf{$0$-finite reach}. We prove Theorem \ref{thm:main} by reducing from $0$-finite reach.
The proof is similar to the proofs of undecidability of regularity \cite{regularity} and determinization \cite{DBLP:conf/concur/AlmagorY22}.
The rest of the section is devoted to the proof of Theorem \ref{thm:main}. 

    Firstly, we present an overview of the proof. For each LCM $M_1$ with an initial state we create another LCM $M$ and initial state $l_0$ with the same answer to the $0$-finite reach problem.
    Then, we define a language $L_{M, l_0}$, which intuitively encodes the correct runs of $M$. For technical reasons, namely because coverability VASSs are well-suited
    for recognising languages similar to $a^nb^{\leq n}$ we encode the correct runs in reverse, that means from the final configuration to the initial one. 
    Moreover, because we work with LCMs, it is better for us to work with the complement of $L_{M, l_0}$, that is $\widebar{L_{M, l_0}}$. 
    In such a way we get at the end a language for which Lemma \ref{lem:first_tool} is useful.
    Then, the proof of Theorem \ref{thm:main} is split into three claims. Claim \ref{clm:recognised} states, that
    $\widebar{L_{M,l_0}}$ is recognised by an effectively constructible $1$-dimensional VASS $A$. Claim \ref{clm:regular} provides, that if \reach($l_0$) is finite then $L_{M,l_0}$ is a regular language.
    Finally, Claim \ref{clm:not_k_amb} states, that if \reach($l_0$) is not finite then $L_{M,l_0}$ is not recognised by a VASS from the class of all boundedly-ambiguous VASS.
    All these claims give a direct reduction from $0$-finite reach to deciding whether a language of a $1$-VASS belongs to class $C$ of languages, which contains all regular languages and is contained in the class of all boundedly-ambiguous VASS languages. Thus they conclude the proof of Theorem \ref{thm:main}.  

    Let us fix LCM $M_1 = \tupple{Q_1, Z_1, \Delta_1}$ with $Q = \{l_1,l_2, \ldots l_m\}$ and $Z = \{z_1, z_2, \ldots, z_n\}$. Let $l_0$ be the initial state of $M_1$. We add to $M_1$ two states: $q_1$ and $q_2$ and for $i \in [1,m]$ transitions $t_i$ from $l_i$ to $q_1$ with no effect on the counters. 
    In addition, for each $i \in [2,n]$ we add a transition from $q_1$ to $q_1$ decrementing the $i$th counter and incrementing the first counter. We also add two transitions decrementing the first counter. 
    The first one goes from $q_1$ to $q_2$. 
    The second one goes from $q_2$ to $q_2$. 
    In such a way we obtain LCM $M = \tupple{Q, Z, \Delta}$. The sketch of the construction of $M$ is presented in the Figure \ref{fig:sketch}.
    \begin{figure}
        \centering
        \begin{tikzpicture}
            \begin{scope}[initial text = \textcolor{red}{}]
                \node[state] (A) at (0,6) {$l_1$};
                \node[state] (B) at (0,4) {$l_2$};
                \node[] (C) at (0,2) {$\vdots$};
                \node[state] (D) at (0,0) {$l_m$};
                \node[state] (E) at (3,0) {$q_1$};
                \node[state] (F) at (6,0) {$q_2$};
            \end{scope}
            
            \begin{scope}[>={Stealth[black]},
                          every edge/.style={draw=black,very thick}]
                \path [->] (A) edge [above] node {} (E);
                \path [->] (B) edge [above] node {} (E);
                \path [->] (C) edge [above] node {} (E);
                \path [->] (D) edge [above] node {} (E);
                \path [->] (E) edge [loop below] node {$\inc(c_1), \dec(c_i)$} (E);
                \path [->] (E) edge [above] node {$\dec(c_1)$} (F);
                \path [->] (F) edge [loop above] node {$\dec(c_1)$} (F);
            \end{scope}
        \end{tikzpicture}
    \caption{Sketch of the construction of LCM $M$.}\label{fig:sketch}
    \end{figure}
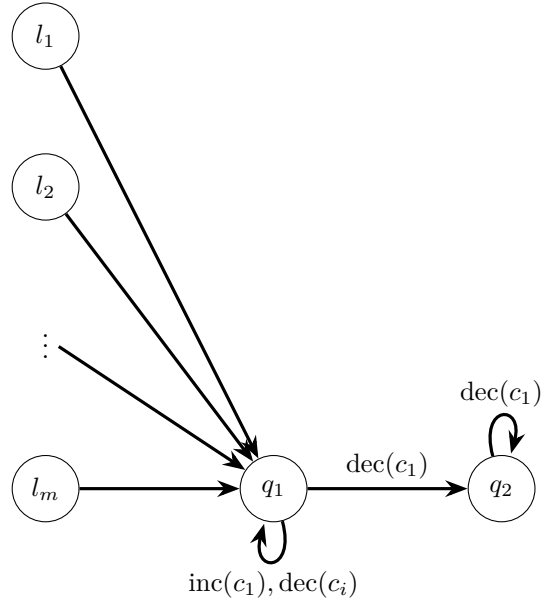
    Observe, that because each of the added transitions does not increase the sum of the counters, from $q_1$ we can go only to $q_2$ and later we can only stay in $q_2$ the answer for $0$-finite reach is the same for both: $M_1$ and $M$. Hence, we proceed later with $M$.
    We encode each configuration $q(a_1, a_2, \ldots, a_n)$ as a word over $\Sigma = Q \cup Z$ as $qz_1^{a_1}z_2^{a_2}\ldots z_n^{a_n}$.
    We use encodings of configurations to obtain an encoding of a run by concatenating encodings of its configurations.
    Finally, we define language $L_{M,l_0} = \{ w^r \mid w \text{ is an encoding of a run in $M$ from } l_0(0,0,\ldots, 0)\}$.
     
    \begin{claim}\label{clm:recognised}
    One can construct $1$-dimensional VASS recognising $\widebar{L_{M,l_0}}$.
    \end{claim}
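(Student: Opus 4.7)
Plan. I will construct $A$ as a disjoint (nondeterministic) union of sub-machines, each witnessing one kind of failure for a word $w$ over $Q \cup Z$ to be a reversed encoding of a run of $M$ starting at $l_0(\vec{0})$; $A$ accepts $w$ iff some sub-machine does. The failures decompose into three kinds: (S) syntactic, where $w$ does not match the regular pattern $(z_n^\ast z_{n-1}^\ast \cdots z_1^\ast Q)^+$ of alternating $z$-blocks and state letters; (I) initial-configuration, where $w$ does not end with the symbol $l_0$ immediately preceded by an empty $z$-block (so the run does not start in $l_0(\vec{0})$); and (T) transition, where some two consecutive encoded configurations do not arise from any transition of $\Delta$. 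Failures (S) and (I) are regular and handled by sub-DFAs that leave the counter idle.

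For (T), the sub-machine reads the word while tracking block boundaries and nondeterministically commits, at some block boundary, to checking the transition associated with the next two state letters $q_{i+1}, q_i$ encountered (which, because the encoding is reversed, correspond to the forward transition $q_i(a) \to q_{i+1}(b)$, with $a$ recorded in the block between $q_{i+1}$ and $q_i$ and $b$ recorded in the block ending at $q_{i+1}$). If $\Delta$ contains no transition from $q_i$ to $q_{i+1}$, this is a purely regular check. Otherwise, for each candidate transition $t = (q_i, op, q_{i+1})$ and each counter $j$, a threshold $\tau_j(op) \in \{0,1,2\}$ (namely $0$ for $\dec$, $1$ for $\ski$, $2$ for $\inc$) characterises the counter-$j$ violation $b_j \geq a_j + \tau_j(op)$; a $\ztest$ violation is the mere presence of $z_j$ in the relevant block (regular). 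Such a local inequality is witnessed by a 1-VASS mode: while traversing the $b$-block, increment on each $z_j$; when crossing the middle state letter $q_{i+1}$, subtract $\tau_j(op)$; while traversing the $a$-block, decrement on each $z_j$; coverability acceptance then captures exactly $b_j - a_j \geq \tau_j(op)$, after which the remainder of the word is consumed by an ``all-symbol'' loop.

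The main obstacle is that a pair of states may carry several transitions in $\Delta$, so invalidity requires every one of them to fail simultaneously, and a single counter cannot directly enforce conjunctions of counter inequalities. In particular, the $q_1 \to q_1$ loop of $M$ has one transition $t^{(i)}$ per $i \in [2,n]$. I handle this with a tailored case analysis: $(a, b)$ fails every $t^{(i)}$ iff at least one of (A) $b_1 \geq a_1 + 2$, (B) $\exists j \in [2,n]:\, b_j \geq a_j + 1$, or (C) $\sum_{j=2}^n b_j \geq \sum_{j=2}^n a_j$ holds. The forward direction is an inspection of which constraint of $t^{(i)}$ must break; the converse uses that any valid application of some $t^{(i)}$ forces $\sum_{j=2}^n(b_j - a_j) \leq -1$, ruling out (C). Each of (A), (B), (C) is a single 1-VASS coverability check of the same shape as in the single-transition case, where (C) uses aggregate increments/decrements over all $z_j$ with $j \in [2,n]$. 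Any remaining multi-transition pairs are treated by an analogous finite decomposition, or, more uniformly, by preprocessing $M_1$ so that every LCM transition is atomic by splitting it through fresh intermediate states — this does not change the answer to $0$-finite reach. Taking the disjoint union of all these sub-machines yields the desired 1-dimensional VASS $A$ with $L(A) = \widebar{L_{M,l_0}}$.
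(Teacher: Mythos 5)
Your construction is correct and follows essentially the paper's own route: the same nondeterministic union of gadgets over the same three violation types (malformed reversed encoding, wrong initial configuration, counter violation between two consecutive configurations), with the regular violations handled by NFAs and each single-counter violation $b_j \geq a_j + \tau_j(op)$ checked by a one-counter coverability gadget that increments through the later block, pays the threshold at the middle state letter, and decrements through the earlier block; your thresholds $\tau \in \{0,1,2\}$ match the paper's four cases exactly. The one place you diverge is the treatment of several transitions between the same pair of states. The paper dispatches this upfront in a single line --- w.l.o.g.\ there is at most one transition between each pair of states, achieved by splitting states --- which is precisely your ``preprocessing'' fallback. Your tailored analysis of the $q_1 \to q_1$ loop is a genuine addition and is correct (I checked both directions of the equivalence; the key point making it work is that any successful $t^{(i)}$ strictly decreases $\sum_{j=2}^{n}$ of the counter values, so the one-sided aggregate condition (C) suffices and stays checkable under coverability acceptance), and it buys a construction that leaves the machine $M$ untouched at that pair. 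Be warned, though, that your side claim that ``any remaining multi-transition pairs are treated by an analogous finite decomposition'' is false in general: for two transitions between the same pair of states, one with $\dec$ on $z_1$ and $\ski$ on $z_2$, the other with $\ski$ on $z_1$ and $\dec$ on $z_2$, failing both is equivalent to $b_1 \geq a_1+1 \ \vee\ b_2 \geq a_2+1 \ \vee\ (b_1 = a_1 \wedge b_2 = a_2)$, and the last cell imposes two crossing constraints on the pattern $z_2^{b_2}z_1^{b_1}\,q\,z_2^{a_2}z_1^{a_1}\,p$ that a single counter with coverability acceptance cannot verify, and no aggregation removes the conjunction --- your $q_1$-loop collapse exploited the special inc/dec coupling of those transitions. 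So for arbitrary pairs the proof must rest on the preprocessing alternative you mention, which is exactly the paper's WLOG; with that in place, your argument goes through.
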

    
    \begin{proof}
    \Wlog we assume, that there is at most one transition between each pair of states (otherwise we may split a state into several states,
    one for each incoming letter).
We construct $A$ such that it accepts $w$ if and only if $w^r$ does not represent a valid run of $M$ from $l_0(0,0,\ldots, 0)$.
    The idea is, that $A$ guesses the violation in the run represented by the word $w$. We have three types of violations. The first type is a control state violation, that is the run
    uses nonexisting transition or does not start in $l_0(0,0,\ldots,0)$. The second type is a counter violation, that we have invalid counter values between two consecutive configurations.   
    The third violation is that we have invalid encoding of a configuration, that is we have two consecutive letters $z_iz_j$ such that $j > i$.
    To spot control violation for nonexisting transitions we will have gadget for each pair of states $p$ and $q$ such that there is no transition from $p$ to $q$ spotting infix of the form $z_n^*z_{n-1}^*\ldots z_1^*qz_n^*z_{n-1}^*\ldots z_1^*p$. Such gadget is just an NFA.
    Observe, that if a run start in $l_0(0,0,\ldots, 0)$ then either the whole encoding of the run is equal to $l_0$ or suffix is of the form $pl_0$ for some $p \in Q$.
    Therefore to spot control violation, that the run does not start in $l_0(0,0,\ldots, 0)$ we will have an NFA  recognising words such that suffix is not of the form $pl_0$ for some $p \in Q$ and the encoding is not equal to $l_0$.
    To spot counter violation we will have a 1-dimensional VASS for each transition and each counter spotting violation when firing this transition on this counter.
    Let us fix transition $t$ and counter $z_i$. Let transition $t$ be from state $p$ to state $q$. We have four possibilities for the operation performed by $t$ on the counter $z_i$.
    Therefore we need to spot infix of the form:
    \begin{enumerate}
        \item $z_n^*z_{n-1}^*\ldots z_i^{a}\ldots z_1^*qz_n^*z_{n-1}^*\ldots z_i^{b}\ldots z_1^*p$ where $a > b - 1$ (equivalently $a \geq b$) if transition $t$ decrements counter $z_i$. 
        \item $z_n^*z_{n-1}^*\ldots z_i^{a}\ldots z_1^*qz_n^*z_{n-1}^*\ldots z_i^{b}\ldots z_1^*p$ where $a > b + 1$ if transition $t$ increments counter $z_i$. 
        \item $z_n^*z_{n-1}^*\ldots z_i^{a}\ldots z_1^*qz_n^*z_{n-1}^*\ldots z_i^{b}\ldots z_1^*p$ where $a > b$ if transition $t$ has no effect on the counter $z_i$. 
        \item $z_n^*z_{n-1}^*\ldots z_i^{a}\ldots z_1^*qz_n^*z_{n-1}^*\ldots z_i^{b}\ldots z_1^*p$ where $a > 0$ or $b > 0$ if transition $t$ zero-tests counter $z_i$. 
    \end{enumerate}
    All of the above can be done with $1$-dimensional VASS. To spot invalid encoding of a configuration for each $1 \leq i < j \leq n$  we will have an NFA recognising words with infix $z_iz_j$. 
    As all the gadgets have one initial state in which we "ignore" some prefix of the word, we can join them and obtain one dimensional VASS with single initial configuration.
    \end{proof}
    \begin{claim}\label{clm:regular}
    If $\reach(l_0)$ is finite then $\widebar{L_{M,l_0}}$ is a regular language.
    \end{claim}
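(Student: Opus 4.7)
The plan is to argue that when $\reach(l_0)$ is finite, $L_{M, l_0}$ itself is regular; the claim then follows from closure of regular languages under complementation. The strategy is to build a finite NFA that directly recognises reversed encodings of valid runs of $M$ from $l_0(\vec{0})$ by parsing them one configuration at a time.

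First I would fix a uniform bound: since $\reach(l_0) \subseteq Q \times \N^n$ is finite, there exists $N$ such that every reachable configuration $q(a_1, \ldots, a_n)$ satisfies $a_i \leq N$ for all $i$. Consequently the encoding $qz_1^{a_1}z_2^{a_2}\cdots z_n^{a_n}$ of any such configuration is a word over $\Sigma = Q \cup Z$ of length at most $1 + nN$, and any valid run from $l_0(\vec{0})$ visits only configurations in the finite set $\reach(l_0)$. Then I would construct an NFA $B$ whose state records (i) the reachable configuration $c \in \reach(l_0)$ currently being parsed and (ii) the position inside its reversed encoding $z_n^{a_n}\cdots z_1^{a_1}q$. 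Reading the input left-to-right corresponds to reading a run in reverse: $B$ starts by nondeterministically guessing the final configuration $c_k$ of the run and consuming its reversed encoding, then guesses a predecessor $c' \in \reach(l_0)$, verifies that $c' \trans{t} c$ holds in $M$ for some $t \in \Delta$, and proceeds to consume the reversed encoding of $c'$. Acceptance occurs exactly when the input is exhausted and the last configuration parsed equals $l_0(\vec{0})$. Because $|\reach(l_0)|$ is finite and each encoding has bounded length, $B$ has only finitely many states, so $L_{M, l_0}$ is regular and hence so is $\widebar{L_{M,l_0}}$.

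The main point that requires attention is making sure the local transition check between two explicit configurations correctly reflects the lossy semantics of $M$. This is straightforward once one notes that $c' \trans{t} c$ is itself defined by the local inequalities $b_i \leq a_i + 1$, $b_i \leq a_i - 1$, $b_i \leq a_i$, or $a_i = b_i = 0$, all of which can be checked once both $c$ and $c'$ are given explicitly; the lossiness is already baked into the ``$\leq$''. Apart from this observation, the construction is routine, so no serious obstacle is expected in formalising it.
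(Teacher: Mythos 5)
Your proposal is correct, but it takes a genuinely different route from the paper. The paper's proof stays inside the architecture of Claim~\ref{clm:recognised}: it observes that the only non-regular gadgets in the $1$-VASS for $\widebar{L_{M,l_0}}$ are the counter-violation detectors, and, using the bound $B$ on counter values that finiteness of $\reach(l_0)$ provides, it replaces each such $1$-VASS gadget by a finite union of NFAs (one for each pair $0 \leq b < a \leq B$ spotting the offending infix), adding one further NFA gadget that spots an infix $z_i^{B+1}$, i.e.\ a counter exceeding $B$; this directly yields an NFA for the complement. You instead prove regularity of the \emph{positive} language $L_{M,l_0}$: since $\reach(l_0)$ is finite, every run from $l_0(\vec{0})$ lives in a finite configuration graph with encodings of bounded length, so an NFA can parse the reversed encoding configuration by configuration, guessing predecessors and checking the lossy transition inequalities locally (which, as you correctly note, is a finite check once both configurations are explicit — the lossiness is absorbed into the $\leq$ conditions), and then you conclude by closure of regular languages under complementation. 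Both arguments are sound; yours is more self-contained and elementary, essentially the standard ``finite reachability set implies regular run language'' argument, while the paper's buys a tight correspondence with the construction of Claim~\ref{clm:recognised}, making explicit that the counter gadgets are the \emph{only} source of non-regularity — a fact that is conceptually aligned with the converse direction in Claim~\ref{clm:not_k_amb}. One small point worth making explicit if you formalise your version: the accepting condition must also cover the one-configuration run, whose encoding is just the letter $l_0$ (all counters being zero), which your construction handles since the initially guessed configuration may itself be $l_0(\vec{0})$.
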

    \begin{proof}
        Observe, that in the proof of Claim \ref{clm:recognised} only gadgets spotting counter violations were not an NFA. Therefore it is enough to replace them by some NFAs.
        As $\reach(l_0)$ is finite there exists a bound $B \in \N$ on the possible values of the counters. Hence we can implement each gadget spotting counter violation using the fact, that $B \geq a,b$. For instance for transition $t$ from state $p$ to $q$ having no effect on the counter $z_i$ we can have an NFA being a union of NFAs for each $0 \leq b < a \leq B$ spotting infix of the form  $z_n^*z_{n-1}^*\ldots z_i^{a}\ldots z_1^*qz_n^*z_{n-1}^*\ldots z_i^{b}\ldots z_1^*p$.
        In this way we will not detect counter violation increasing the counter to the value above $B$. Therefore we add another gadget spotting, that one counter is above $B$ hence then the word does not encode a correct run.
        This can be done by spotting for each $i \in [1,n]$ infix $z_i^{B+1}$, which can be done by an NFA.
        Because every gadget is now an NFA we showed that $\widebar{L_{M,l_0}}$ is a regular language when $\reach(l_0)$ is finite.
    \end{proof} 
    \begin{claim}\label{clm:not_k_amb}
    If $\reach(l_0)$ is infinite then $\widebar{L_{M,l_0}}$ is not recognised by a boundedly-ambiguous VASS.
    \end{claim}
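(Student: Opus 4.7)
The plan is to exhibit, inside $\widebar{L_{M,l_0}}$ intersected with a suitable regular language, a sublanguage of the form handled by Lemma~\ref{lem:first_tool}, thereby contradicting $k$-ambiguity for every $k \in \N$. The key observation is that any valid forward run of $M$ must end with a $q_1 \to q_2 \to q_2 \to \cdots$ tail of the shape $q_1(b,0,\ldots,0),\, q_2(m_1),\, \ldots,\, q_2(m_\ell)$ with $b > m_1 > m_2 > \cdots > m_\ell \geq 0$. Reversing and encoding this tail produces a prefix of the form $z_1^{n_1} q_2 \cdots z_1^{n_\ell} q_2$ with strictly increasing block sizes and $n_\ell < b$, which is exactly the pattern exploited by Lemma~\ref{lem:first_tool} with $a := z_1$ and separator letter $q_2$.

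Concretely, I would define $L$ to be the set of $u \in (\Sigma \setminus \{q_2\})^*$ whose reverse encodes a valid forward run of $M$ from $l_0(0,\ldots,0)$ ending at some $q_1(b, 0, \ldots, 0)$, and set $f(u) := b$. Infiniteness of $\reach(l_0)$ in $M_1$ (and hence in $M$) yields, by Dickson's lemma applied to the reachable configurations, reachable vectors of arbitrarily large $\ell_1$-norm; the $q_1$-self-loops can then accumulate all of that mass into $c_1$, so $\sup f = \omega$. Next, take the regular language $R = (z_1^* q_2)^{k+2} \cdot (\Sigma \setminus \{q_2\})^*$ and verify the defining correspondence: a word $W = z_1^{n_1} q_2 \cdots z_1^{n_{k+2}} q_2 \cdot u$ in $R$ belongs to $L_{M,l_0}$ iff $u \in L$, $n_i < n_{i+1}$ for all $i \in [k+1]$, and $n_{k+2} < f(u)$. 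Consequently the language
\[
L_1 = \{ z_1^{n_1} q_2 \cdots z_1^{n_{k+2}} q_2 \cdot u : u \in L,\ (\exists i \in [k+1])\ n_i \geq n_{i+1}\ \text{or}\ n_{k+2} \geq f(u) \}
\]
is contained in $\widebar{L_{M,l_0}} \cap R$ and has exactly the form required by Lemma~\ref{lem:first_tool}.

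Suppose, for contradiction, that $\widebar{L_{M,l_0}}$ is recognised by a $k$-ambiguous VASS. By Lemma~\ref{lem:closure_prop}, $L' := \widebar{L_{M,l_0}} \cap R$ is then also $k$-ambiguous. I would re-run the argument of Lemma~\ref{lem:first_tool} inside $L'$: choose witnesses $w_1, \ldots, w_{k+1} \in L_1 \subseteq L'$, apply the decomposition of Lemma~\ref{lem:decomp} to their accepting runs in the VASS recognising $L'$, and pump the identified loops to produce $k+1$ distinct accepting runs on a single word, contradicting $k$-ambiguity. Since $k$ was arbitrary, $\widebar{L_{M,l_0}}$ is not boundedly-ambiguous. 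The main obstacle is ensuring that the Lemma~\ref{lem:first_tool} argument transfers from $L_1$ to the possibly strictly larger superset $L'$; inspection of that proof shows that the only place the concrete shape of $L_1$ is used, beyond membership of the initial witnesses, is a contradiction derived when a certain pumped word $v$ with $n_j < n_{j+1}$ for all $j$ and $n_{k+2} < f(u)$ is argued not to be accepted. Such a $v$ satisfies all three validity conditions above, hence $v \in L_{M,l_0}$ and therefore $v \notin L'$, which is exactly what the argument needs.
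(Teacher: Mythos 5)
Your proposal is correct and follows the same reduction skeleton as the paper: intersect $\widebar{L_{M,l_0}}$ with a regular filter of the shape $(z_1^*q_2)^{k+2}\cdot(\text{suffix})$, read the $q_2$-tail of a reversed run as the block pattern of Lemma~\ref{lem:first_tool} (with $a:=z_1$, $b:=q_2$), and get $\sup f=\omega$ from infiniteness of $\reach(l_0)$ via the mass-transfer loops at $q_1$. The genuine difference is how Lemma~\ref{lem:first_tool} is invoked. The paper defines $L$ as the set of \emph{all} suffixes of words in the intersection and sets $f(w)=0$ whenever $w^r$ encodes an incorrect run, so that the clause $n_{k+2}\geq f(w)$ absorbs every invalid-suffix word; this makes the intersection \emph{exactly} a language of the lemma's shape, and the lemma applies as a black box. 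You instead take $L$ to be only the valid suffixes ending at $q_1(b,0,\ldots,0)$, which yields merely $L_1\subseteq L':=\widebar{L_{M,l_0}}\cap R$ with strict containment, and so you must reopen the proof of Lemma~\ref{lem:first_tool} and run it against the superset $L'$. Your audit of that proof is accurate: checking the appendix, membership in the language is used only to place the witnesses $w_i$ and to choose $u$ with large $f(u)$ (both fine, since $w_i\in L_1\subseteq L'$), and non-membership is used in exactly one spot, namely the contradiction establishing Condition~\ref{cond:clm_decomp_fifth} of Claim~\ref{clm:decomp}, where a pumped word $v$ with strictly increasing blocks and $n_{k+2}<f(u)$ must lie outside the recognised language; your observation that such $v$ lies in $L_{M,l_0}$, hence outside $L'$, discharges this (all other contradictions in the proof are of the form ``$k{+}1$ distinct accepting runs over one word'' and need no membership facts). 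Each approach buys something: the paper's $f{=}0$ trick gives a clean black-box citation, while your white-box route is more robust precisely because it never needs an exact characterisation of $\widebar{L_{M,l_0}}\cap R$ --- indeed, by insisting in the definition of $L$ that the final $q_1$-configuration has all higher counters zero, you sidestep a delicate point (a valid suffix ending at $q_1(c_1,c_2,\ldots)$ with some $c_i>0$ for $i\geq 2$, where the leading-$z_1$-power reading of $f$ does not match the true validity threshold $c_1$) that the black-box equality must handle with care. One blemish: your ``defining correspondence'' is stated as an \emph{iff}, but the only-if direction fails --- a word $W\in L_{M,l_0}\cap R$ may have $u\notin L$, exactly because the lossy $q_1\to q_2$ step can silently zero out nonzero higher counters. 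Since your argument only uses the two implications restricted to $u\in L$ (for $L_1\subseteq L'$ and for $v\in L_{M,l_0}$), both of which do hold, this mis-statement does not affect correctness, but you should weaken the claim accordingly.
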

    \begin{proof}
        Assume, towards contradiction, that $\widebar{L_{M,l_0}}$ is recognized by a \kamb VASS for some $k \in \N_+$. Recall, that there exist $q_1, q_2 \in Q$ such that from each $q_3 \in Q$ such that $q_3 \neq q_2$ and $q_3 \neq q_1$  there exists a transition from $q_3$ to $q_1$ with no effect on the counters. 
        Moreover, for each $i \in [2,n]$ there exists a transition from $q_1$ to $q_1$ decrementing the $i$th counter and increasing the first counter.
        Additionally, there is a single transition leaving $q_1$ decrementing the first counter to $q_2$ and there is exactly one transition leaving $q_2$, which goes to $q_2$ and decrements the first counter.
       
        As $k$-ambiguous VASS languages are closed under intersection with a regular language also $L_1 = \widebar{L_{M,l_0}} \cap (z_1^*q_2)^{k+2}Z^*q_1(Q \cup Z \setminus q_2)^*$ is recognised by a \kamb VASS.
        The idea of this intersection is to get a language similar to the language presented in Lemma \ref{lem:first_tool}.
        Observe, that each $w \in L_1$ can be uniquely decomposed into $w = z_1^{n_1}q_2z_1^{n_2}q_2\ldots z_1^{n_{k+2}}q_2v$. We denote $v$ by $\suff(w)$. We define $L = \setof{\suff(w)}{w \in L_1}$.
        We define function $f$ on words from $L$ by setting $f(w) = 0$ if $w^r$ encodes an incorrect run of $M$ from $l_0(0, 0, \ldots, 0)$ and otherwise $f(w) = n$ where $n$ is the maximal number such that $w = z_1^nv$ for some word $v$.
        Observe, that $L_1 = \{z_1^{n_1}q_2z_1^{n_2}q_2z_1^{n_3}q_2\ldots z_1^{n_{k+2}}q_2w \mid w \in L, \exists_{1 \leq i \leq k+1} n_i \geq n_{i+1} \lor n_{k+2} \geq f(w) \}$.
        This is because both transition from $q_1$ to $q_2$ and loop from $q_2$ to $q_2$ decrements the first counter and therefore for each $v \in L_1$ such that $v = z_1^{n_1}q_2z_1^{n_2}q_2z_1^{n_3}q_2\ldots z_1^{n_{k+2}}q_2w$  we have that $v^r$ encodes invalid run if and only if either $\suff(v)^r = w^r$ encodes an incorrect run (that is $f(w) = 0 \leq n_{k+2}$) or $v^r$ encodes a correct run but $f(w) \leq n_{k+2}$ (recall the definition of $f(w)$ in this case) or there exists $i \in [k+1]$ such that $n_i \geq n_{i+1}$.
        Moreover, notice that because $\reach(l_0)$ is infinite and we have transitions moving values to the first counter from all the other counters in the state $q_1$  for each each $B \in \N$ there exist $B' \in \N$ such that $B \leq B'$ and a correct run $\rho_{B}$ of $M$ from $l_0(0,0, \ldots, 0)$ to $q_1(B', 0, \ldots, 0)$. 
        Let $w_B$ be the encoding of this run. Observe, that $(z_1^{B'}q_2)^{k+2}w_B^r \in L_1$. Hence $w_B^r \in L$ and $f(w_B^r) = B' \geq B$.
        Hence $\sup(f) = \omega$ and hence, by Lemma \ref{lem:first_tool}, $L_1$ is not recognised by a $k$-ambiguous VASS. Contradiction.
    \end{proof}
\section{Deciding regularity}\label{sec:regularity}
In this section we present a proof of the following theorem, that deciding regularity of a language of a $k$-ambigous VASS is decidable. This is in contrast to the general case where regularity is undecidable \cite{regularity}.
\begin{theorem}\label{thm:regularity}
    For every $k \in \N$ it is decidable whether for a given $k$-ambigous VASS $A$ language $L(A)$ is regular.
\end{theorem}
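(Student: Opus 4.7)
My plan is to reduce the regularity problem to the language equivalence problem for $k$-ambiguous VASSs, which is known to be decidable~\cite{DBLP:conf/concur/CzerwinskiH22,DBLP:journals/lmcs/CzerwinskiH25}. Regularity is semi-decidable: since every regular language is a $1$-ambiguous VASS language (any DFA qualifies), one enumerates DFAs $B_1, B_2, \ldots$ and uses the equivalence procedure to test $L(A) = L(B_i)$, which halts precisely when $L(A)$ is regular. It remains to semi-decide non-regularity.

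For this I would establish the following structural characterization: $L(A)$ is non-regular if and only if there exist reachable configurations $q(u) \leq q(v)$ of $A$ at the same state, with $u \neq v$, such that $L_A(q(u)) \neq L_A(q(v))$, where $L_A(q(w))$ denotes the language of $A$ with its initial configuration replaced by $q(w)$. The $(\Leftarrow)$ direction is a pumping argument: monotonicity of VASS yields $L_A(q(u)) \subsetneq L_A(q(v))$, and combining a run reaching $q(u)$ with a segment witnessing the increase $v-u$ at $q$, one iterates to obtain configurations $q(u + n(v-u))$ for all $n \in \N$ whose future languages form a strictly increasing chain, exhibiting infinitely many Myhill-Nerode classes. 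The $(\Rightarrow)$ direction rests on Dickson's lemma: if no such pair exists, then along every $\leq$-increasing sequence of reachable configurations the future language is constant, which together with Lemma~\ref{wqo} forces only finitely many distinct future languages and hence a finite Myhill-Nerode quotient.

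The search is effective because VASS reachability is decidable (so one can enumerate reachable configurations $q(u)$ and $q(v)$) and equivalence of $k$-ambiguous VASSs is decidable (each $L_A(q(w))$ is still the language of a $k$-ambiguous VASS, obtained by relocating the initial configuration of $A$). Running this search in parallel with the DFA-enumeration, exactly one of the two branches terminates and yields the answer.

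The main obstacle is the $(\Rightarrow)$ direction of the characterization. Because $A$ is only $k$-ambiguous and not deterministic, a single word $w$ may reach several incomparable configurations of $A$ simultaneously, so $w^{-1}L(A)$ is a union of future languages rather than a single one. Consequently, the argument must be lifted from single configurations to (downward closures of) reachability sets, and stabilisation along $\leq$-chains has to be transferred to the corresponding sets. I expect the proof to combine the decomposition of Lemma~\ref{lem:decomp}, which controls how pumpable cycles in accepting runs shape future behaviour, with Dickson's lemma applied to downward closures of reachability sets, in the spirit of Valk--Jantzen-style analyses of coverability languages.
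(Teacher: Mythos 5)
Your first half (semi-deciding regularity by enumerating DFAs and invoking decidable equivalence of boundedly-ambiguous VASSs) is sound, and relocating the initial configuration to a reachable $q(u)$ does preserve $k$-ambiguity, as you implicitly use. The genuine gap is in the other half: the characterization you propose is false for $k \geq 2$, and it fails in the $(\Leftarrow)$ direction, not the $(\Rightarrow)$ direction you flag as the main obstacle. Concretely, take a $1$-VASS with two initial configurations $p_1(0)$, $p_2(0)$: from $p_1$ a loop $a{:}{+}1$, a transition $b{:}0$ to $r_1$ with loop $c{:}{-}1$, accepting $r_1(0)$ by coverability; from $p_2$ the same shape with all effects $0$, accepting $r_2(0)$. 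The first branch alone accepts $\{a^n b c^j \mid j \leq n\}$, the second accepts $a^*bc^*$, so the union is the regular language $a^*bc^*$ and the VASS is $2$-ambiguous. Yet $p_1(1) \leq p_1(2)$ are reachable (even connected by a run, so your stronger repaired form fails too) and have distinct future languages. The pumping in your $(\Leftarrow)$ argument produces a strictly increasing chain of \emph{configuration} futures $L_A(q(u+n(v-u)))$, but these are not residuals of $L(A)$: a residual $w^{-1}L(A)$ is a union over all configurations reached on $w$, and here the counter-free branch masks every difference, so no infinitude of Myhill--Nerode classes follows. (A secondary issue: as literally stated, your characterization quantifies over pairs that are each reachable from the initial configuration, which does not give a run from $q(u)$ to $q(v)$, so the iteration to $q(u+n(v-u))$ is unjustified even before the masking problem.) Your closing paragraph concedes that the argument must be lifted to downward closures of reachability sets and ``expects'' Lemma~\ref{lem:decomp} plus Dickson's lemma to close this, but that lift is precisely the unproven core, so the proposal does not constitute a proof.

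For contrast, the paper avoids any residual-based characterization entirely: it builds (via Theorem~\ref{thm:complement}) a downward-VASS for the complement of $L(A)$, converts it to a reachability VASS (Claim~\ref{clm:downward_to_reachability}), and then decides regular separability of $L(A)$ from its complement using Theorem~\ref{thm:reg_sep}, observing that a language is regular if and only if it is regularly separable from its complement. That route sidesteps exactly the union-of-futures difficulty on which your non-regularity test founders; if you want to salvage your approach, you would need a correct effective characterization of non-regularity at the level of sets of simultaneously reachable configurations, which is substantially harder than the pairwise condition you state.
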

The proof uses two, already present in the literature, results.
The first result is the following Theorem from \cite{DBLP:conf/concur/CzerwinskiH22}.
\begin{theorem}[Theorem 28 \cite{DBLP:conf/concur/CzerwinskiH22}]\label{thm:complement}
    For any $k \in \N$ and a k-ambiguous VASS one can build in elementary time
a downward-VASS which recognises the complement of its language.
\end{theorem}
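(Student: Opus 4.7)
The plan hinges on Theorem~\ref{thm:complement}: once applied, we have both $L(A)$ as a $k$-ambiguous VASS language and $\overline{L(A)}$ as a downward-VASS language, each with decidable emptiness (reducing to coverability/reachability). My first step would be to observe that equivalence $L(A) = L(N)$ against any given NFA $N$ becomes decidable. Indeed, $L(A) \not\subseteq L(N)$ iff $L(A) \cap \overline{L(N)}$ is non-empty; since $\overline{L(N)}$ is regular and $k$-ambiguous VASS languages are closed under intersection with regular languages (Lemma~\ref{lem:closure_prop}), this is a $k$-ambiguous VASS emptiness question, decidable by reachability. Dually, $L(N) \not\subseteq L(A)$ iff $L(N) \cap L(B)$ is non-empty, where $B$ is the downward-VASS recognising $\overline{L(A)}$; again this is decidable.

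With equivalence-to-NFA in hand, regularity is semi-decidable from above by enumerating NFAs $N_1, N_2, \ldots$ and testing equivalence with $L(A)$; this halts whenever $L(A)$ is regular. The crucial other half is a matching semi-decision procedure for non-regularity. For this I would invoke Myhill--Nerode: for any two prefixes $u, v$, inequivalence $u \not\sim v$ is decidable because the left-quotient $u^{-1} L(A)$ is again $k$-ambiguous (move the initial configurations according to the runs of $A$ on $u$), while $v^{-1} \overline{L(A)}$ remains a downward-VASS language, so non-emptiness of $u^{-1}L(A) \cap v^{-1}\overline{L(A)}$---which certifies a distinguishing suffix---is decidable by the same argument as above. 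A non-regularity witness is then a finite tuple $u_1, \ldots, u_n$ together with pairwise distinguishing suffixes $w_{i,j}$, all of which can be enumerated and verified.

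The main obstacle, and the technical heart of the proof, is ensuring termination of the non-regularity procedure: ``$L(A)$ has infinitely many Myhill--Nerode classes'' is a $\Pi_1$ statement and is not obviously semi-decidable. I would approach this by extracting, from $A$ and $B$, an effective bound $N(A) \in \N$ such that $L(A)$ is regular iff it admits at most $N(A)$ Myhill--Nerode classes. Such a bound should follow from a Karp--Miller-style saturation on the product of $A$ and $B$: branches along which no counter accelerates contribute only finitely many configurations modulo language equivalence, giving a finite-state residue, while any accelerating branch can be pumped to produce an $(N(A){+}1)$-th pairwise-distinguishable prefix via arguments in the spirit of Lemma~\ref{lem:decomp}. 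The main challenge I expect is making this extraction rigorous, in particular showing that $k$-ambiguity together with the downward-closed structure of the complement constrains accelerating loops tightly enough that a computable bound can be read off, rather than just an abstract finiteness witness. Once the bound is in place, the parallel execution of the two semi-procedures constitutes the desired decision procedure, yielding Theorem~\ref{thm:regularity}.
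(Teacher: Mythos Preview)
First, a mismatch: Theorem~\ref{thm:complement} is not proved in the paper at all---it is quoted from~\cite{DBLP:conf/concur/CzerwinskiH22} and used as a black box. Your write-up does not attempt to prove it either; what you are actually sketching is a proof of Theorem~\ref{thm:regularity} (decidability of regularity for $k$-ambiguous VASSs), the statement that \emph{uses} Theorem~\ref{thm:complement}. I therefore compare your proposal to the paper's proof of Theorem~\ref{thm:regularity}.

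The paper's argument is a direct reduction with no enumeration and no bounds: $L(V)$ is regular iff $L(V)$ and $\overline{L(V)}$ are regularly separable; Theorem~\ref{thm:complement} realises $\overline{L(V)}$ as a downward-VASS, which is converted to a reachability VASS (Claim~\ref{clm:downward_to_reachability}); decidability then follows immediately from decidability of regular separability of a coverability VASS from a reachability VASS (Theorem~\ref{thm:reg_sep}).

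Your route, by contrast, has a genuine gap precisely where you flag it. The semi-decision for regularity (enumerate NFAs, test equivalence via two emptiness checks) is sound. The semi-decision for non-regularity is not: exhibiting $n$ pairwise Myhill--Nerode-inequivalent prefixes never certifies non-regularity, since a regular language may have arbitrary finite index. To make that procedure halt you need the computable bound $N(A)$, and you do not supply one; the ``Karp--Miller-style saturation on the product of $A$ and $B$'' is only a hope, with no argument that $k$-ambiguity or the downward-closed structure of the complement yields a computable function of the description of $A$ that upper-bounds the index of $L(A)$ whenever $L(A)$ happens to be regular. Without $N(A)$ your algorithm is merely a semi-decision procedure for regularity, and Theorem~\ref{thm:regularity} does not follow from the outline. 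The paper sidesteps this difficulty entirely by invoking regular separability.
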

The second result is the decidability of regular separability of coverability VASS language and reachability VASS language.
Regular separability problem asks whether for two VASSs $A$ and $B$ there exists a regular language $L$ such that $L(A) \subseteq L$ and $L(B) \,\cap\, L = \emptyset$.
\begin{theorem}[Theorem 7 \cite{DBLP:conf/lics/CzerwinskiZ20}]\label{thm:reg_sep}
    Regular separability is decidable if one VASS is a coverability VASS and the second VASS is a reachability VASS.
    \footnote{Recently even stronger result about decidability of regular separability for reachability VASSs was proven in \cite{DBLP:conf/lics/Keskin024}.}
\end{theorem}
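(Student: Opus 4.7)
The plan is to reduce regularity of $L(A)$ to the regular separability problem between $L(A)$ and $\overline{L(A)}$. The key folklore observation is that a language $L$ is regular if and only if $L$ and $\overline{L}$ are regular-separable: if $L$ is regular then $L$ itself separates $L$ from $\overline{L}$; conversely, any regular separator $S$ satisfies $L \subseteq S$ and $S \cap \overline{L} = \emptyset$, which forces $S = L$, so $L$ is regular.

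Given the input $k$-ambiguous VASS $A$, the language $L(A)$ is already a coverability VASS language. First, using Theorem~\ref{thm:complement}, I would construct in elementary time a downward-VASS $B$ with $L(B) = \overline{L(A)}$. To invoke Theorem~\ref{thm:reg_sep}, I would then transform $B$ into an equivalent reachability VASS $B'$. By Proposition~\ref{prop:down_atom}, the set of accepting configurations of $B$ at each state is a finite union of down-atoms $u\downarrow$ with $u \in \N_{\omega}^d$. For each such atom I would add, via fresh $\varepsilon$-transitions, a dedicated accepting state $q_f$ together with self-loops that may increment the finite coordinates of $u$ and decrement the $\omega$-coordinates of $u$, and then declare as the unique reachability target the configuration $(q_f, u')$ where $u'_i = u_i$ on finite coordinates and $u'_i = 0$ on $\omega$-coordinates. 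A configuration $(q, v)$ belonged to this down-atom in $B$ iff $v \leq u$ iff $v$ can be raised to the finite entries of $u$ and lowered to $0$ on the $\omega$-entries, iff $(q_f, u')$ is reachable in $B'$; non-determinism handles the finite union over down-atoms, and the initial configurations are copied over unchanged.

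With $B'$ in hand, $L(A)$ is a coverability VASS language and $\overline{L(A)} = L(B')$ is a reachability VASS language, so Theorem~\ref{thm:reg_sep} decides their regular separability and hence, by the opening observation, regularity of $L(A)$. The only genuinely technical step is the downward-to-reachability conversion, where the $\omega$-coordinates need careful handling; the rest is a direct plug-in of the two cited theorems.
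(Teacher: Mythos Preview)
You are not proving the stated theorem. Theorem~\ref{thm:reg_sep} is the \emph{cited} result from~\cite{DBLP:conf/lics/CzerwinskiZ20} asserting decidability of regular separability between a coverability VASS and a reachability VASS; the paper imports it as a black box and offers no proof of it. Your proposal, by contrast, is a proof of Theorem~\ref{thm:regularity} (decidability of regularity for $k$-ambiguous VASSs): you reduce regularity of $L(A)$ to regular separability of $L(A)$ and $\overline{L(A)}$, build a reachability VASS for $\overline{L(A)}$ via Theorem~\ref{thm:complement} and the downward-to-reachability conversion, and then \emph{invoke} Theorem~\ref{thm:reg_sep} to decide separability. Since you explicitly call Theorem~\ref{thm:reg_sep} inside your own argument, you are using it, not establishing it.

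As a proof of Theorem~\ref{thm:regularity}, what you wrote is essentially identical to the paper's argument: the same ``regular iff regular-separable from its complement'' observation, the same use of Theorem~\ref{thm:complement}, the same down-atom decomposition (Proposition~\ref{prop:down_atom}) to convert a downward-VASS into a reachability VASS (this is exactly Claim~\ref{clm:downward_to_reachability} in the paper), and the same final appeal to Theorem~\ref{thm:reg_sep}. But none of this touches the content of Theorem~\ref{thm:reg_sep} itself, whose proof lives in~\cite{DBLP:conf/lics/CzerwinskiZ20} and requires entirely different machinery.
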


We devote the rest of this section to the proof of Theorem \ref{thm:regularity}. Let us fix a $k$-ambiguous VASS $V$. 
We show how to decide regularity of a language of a $k$-ambiguous VASS.
Using Theorem \ref{thm:complement} we get a downward-VASS $\hat{V}$ recognising the complement of $L(V)$. Observe, that $L(V)$ is regular if and only if $L(V)$ and $L(\hat{V})$ are regular separable.
Now we will use the following claim about downward-VASSs.
\begin{claim}\label{clm:downward_to_reachability}
    For every downward-VASS one can construct reachability VASS recognising the same language.
\end{claim}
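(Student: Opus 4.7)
The plan is to invoke Proposition~\ref{prop:down_atom} to decompose the downward-closed accepting set into a finite union of down-atoms, and for each atom install a ``closing gadget'' that reduces downward-closed acceptance to reaching a single fixed configuration exactly.

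Let $V$ be a downward-VASS with accepting set $F \subseteq Q \times \N^d$. For each $q \in Q$ the slice $F_q = \{v \mid q(v) \in F\}$ is downward-closed, so by Proposition~\ref{prop:down_atom} it equals $\bigcup_{j=1}^{k_q} u_q^j{\downarrow}$ for finitely many $u_q^j \in \N_{\omega}^d$. I would construct $V'$ by keeping all states and transitions of $V$, introducing a single fresh sink state $q_f$, and for each down-atom $u = (n_1,\ldots,n_d){\downarrow}$ appearing in some decomposition, adding a gadget with entry $q_u^0$ reached from every state $q$ whose decomposition contains $u{\downarrow}$ via a fresh $\varepsilon$-transition. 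The sole accepting configuration of $V'$ is $(q_f, \vec{0})$.

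The gadget for $u = (n_1,\ldots,n_d){\downarrow}$ uses $\varepsilon$-labelled self-loops at $q_u^0$ together with a fixed exit chain to $q_f$. For each coordinate $i$ with $n_i = \omega$, place a self-loop at $q_u^0$ decrementing counter $i$ by $1$, and for each coordinate $i$ with $n_i < \omega$, a self-loop incrementing counter $i$ by $1$. The exit chain is a deterministic sequence of transitions (with fresh intermediate states) that decrements each finite-bound counter $i$ exactly $n_i$ times, then lands in $q_f$. The correctness argument is a simple balance: to end in $(q_f, \vec{0})$ the exit chain must leave each finite-bound counter $i$ at $0$, so it must enter the chain at value exactly $n_i$; since the self-loops can only add to that counter, this forces $v_i \leq n_i$; and the $\omega$-coordinates can always be drained to $0$ via their decrement loops. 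Conversely, whenever $v \in u{\downarrow}$ one fires $n_i - v_i \geq 0$ increment-loops on each finite-bound coordinate and $v_i$ decrement-loops on each $\omega$-coordinate, after which the exit chain brings the configuration exactly to $(q_f, \vec{0})$.

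The only subtlety is that the construction introduces $\varepsilon$-transitions, which are admissible within the paper's conventions (the ``$\varepsilon$-free'' restriction is stated as a default, not a requirement), and in any case can be removed by standard $\varepsilon$-elimination without changing the recognised language. I do not expect any real obstacle here: the main work is setting up the decomposition and checking the balance argument, both of which are routine.
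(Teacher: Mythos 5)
Your proposal is correct and takes essentially the same approach as the paper: both decompose the downward-closed accepting set into down-atoms via Proposition~\ref{prop:down_atom} and attach $\varepsilon$-gadgets with increment-loops on the finite-bound coordinates and decrement-loops on the $\omega$-coordinates, verified by the same balance argument; your only deviation, the deterministic exit chain normalising to a single accepting configuration $(q_f, \vec{0})$, is a cosmetic variant of the paper's choice to accept directly at the finitely many configurations $q_{q(u)}(\hat{u})$, which is permitted since reachability VASSs here have a finite set of final configurations. One small caveat: your aside that the $\varepsilon$-transitions ``can be removed by standard $\varepsilon$-elimination'' is unjustified for VASSs (silent transitions carry counter effects and cannot in general be eliminated), but it is also unnecessary, since the paper's own construction uses $\varepsilon$-transitions here as well.
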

\begin{proof}
    Let us fix $d$-dimensional downward-VASS $V$ and let $Q$ be the states of $V$. We know, that the set of accepting configurations $F$ is downward-closed. Hence, due to Proposition \ref{prop:down_atom}, we have a finite set
    $D \subseteq Q \times (\N \cup \omega)^d$ such that:
    $$F = \bigcup_{q(u) \in D} \set{q} \times u\downarrow $$
    We obtain reachability VASS $V'$ recognising the same language as $V$ by taking VASS $V$ and for each $q(u) \in D$
    adding state $q_{q(u)}$, $\varepsilon$ transition from with no effect from $q$ to $q_{q(u)}$. 
    Moreover, for each $i \in [1,d]$ if $u_i \in \N$ we add $\varepsilon$ transition from $q_{q(u)}$ to $q_{q(u)}$ incrementing by one the $i$th counter and otherwise if $u_i = \omega$ we add an $\varepsilon$-transition from $q_{q(u)}$ to $q_{q(u)}$ decrementing by one $i$th counter.
    Let the set of added states be equal to $Q'$. We set the initial configurations of $V'$ to be the initial conditions of $V$. For vector $u \in (\N \cup \omega)^d$ let us denote by $\hat{u} \in \N^d$ vector such that for all $i \in [1,d]$ we have $u_i = \hat{u}_i$ if $u_i \in \N$ and $\hat{u}_i = 0$ otherwise. We set the accepting configurations of $V'$ to the following set of configurations $F' = \setof{q_{q(u)}(\hat{u})}{q_{q(u)} \in Q'}$. 
    Now we have to prove, that $L(V) = L(V')$. For $L(V) \subseteq L(V')$ observe, that for each $w \in L(V)$ there exists configuration $q(v) \in F$ such that $w$ is read by an accepting run $\rho$ ending in $q(v)$.
    As $F$ is a downward-closed set there is $u$ such that $v \preceq u$ such that $q_{q(v)}(\hat{v}) \in F'$. Moreover, observe that $q_{q(v)}(\hat{v})$ is reachable from $q(u)$ in $V$ using $\varepsilon$-transitions.
    Hence $w \in L(V')$.

    For $L(V') \subseteq L(V)$ observe that for each $w \in L(V')$ we have an accepting run $\rho$ ending in configuration $q_{q(u)}(\hat{u}) \in F'$. Due to construction of $V'$ we know, that there exists prefix of $\rho$, such that it also reads $w$ and reaches configuration $q(v)$ such that $v \preceq u$ and hence $q(v) \in F$ and $w \in L(V)$.
\end{proof}
Now we invoke Claim \ref{clm:downward_to_reachability} on $\hat{V}$ to get reachability VASS $\hat{V'}$ recognising the same language.
Finally, we use algorithm from Theorem \ref{thm:reg_sep} for $V$ and $\hat{V'}$ knowing that they are regular separable if and only if $L(V)$ is regular. This concludes the proof of Theorem \ref{thm:regularity}.

\section{Future research}\label{sec:future}

\myparagraph{Other problems for boundedly-ambiguous VASSs}
We have shown in Section~\ref{sec:regularity} that the regularity problem is decidable for baVASSs (boundedly-ambiguous VASSs),
in contrast to general VASSs. It is also known that the language equivalence problem is decidable for baVASSs~\cite{DBLP:journals/lmcs/CzerwinskiH25}, while being undecidable for general VASSs~\cite{DBLP:journals/tcs/ArakiK76,DBLP:conf/lics/HofmanMT13}.
It is natural to ask whether other classical problems are decidable for baVASSs, for example deciding whether there exists
an equivalent deterministic or unambiguous VASS. These problems are undecidable for general VASS due to Theorem~\ref{thm:main},
but can possibly be decidable for baVASSs. One can also ask whether given $k$-ambiguous VASS has an equivalent $(k-1)$-ambiguous VASS, our techniques from Section~\ref{sec:undecidability} used for showing undecidability does not seem to work in that case.
Other further research for baVASSs would be to ask about complexity of the mentioned problems, for example to understand
complexity of the regularity problem for baVASSs, since we already know it is decidable.

\myparagraph{Languages of VASSs accepting by configuration}
In this paper we have considered VASSs accepting by set of accepting states, it is natural to ask
what happens if we modify the acceptance set to be a finite set of accepting configurations
or even a single configuration.
For VASSs accepting by configuration already the language universality problem is undecidable.
We are not aware of any citation, but the proof is rather straightforward and uses the classical technique.
For a given two-counter automaton one can construct $1$-VASS accepting by configuration, which recognises
all the words beside correct encodings of the runs of the two-counter automaton. Therefore the reachability problem
for two-counter automaton, which is undecidable, can be reduced to the universality problem for $1$-VASSs accepting by a configuration.
Since the universality problem is undecidable for VASSs accepting by configuration, there is not much hope that the other nontrivial
problems (beside emptiness) are decidable.

However, one can ask what about VASSs accepting by configuration with some restriction on nondeterminism, for
example unambiguous or boundedly-ambiguous VASSs accepting by configuration. It is natural to ask
whether universality, language equivalence, regularity or determinisability problems are decidable for that models
and what is its complexity.

\section{Acknowledgements}
We would like to thank David Purser for sharing with us a conjecture about language not recognised by an unambiguous VASS.
We thank also Sławomir Lasota for suggesting us to reduce from 0-finite-reach instead of reducing from the halting problem for 2CM in the proof of Theorem \ref{thm:main}.
Additionally, we would like to thank Piotr Hofman for inspiring discussions and for reviewing the Master's Thesis \cite{Orlikowski2024}, which served as the foundation for the results presented in this paper.



\newpage

\bibliography{citat}

\newpage



\end{document}